\theoremstyle{plain} 
\newtheorem{theorem}{Theorem} 
\newtheorem{lemma}[theorem]{Lemma}
\newtheorem{corollary}[theorem]{Corollary}
\title{Approximate Range Queries for Clustering%
  \footnote{This research was supported by NRF grant 2011-0030044 (SRC-GAIA) funded by the government of Korea,
	and the MSIT (Ministry of Science and ICT), Korea, under the SW Starlab support program (IITP-2017-0-00905) 
	supervised by the IITP (Institute for Information \& communications Technology Promotion).}}
\author{Eunjin Oh\thanks{Max Planck Institute for Informatics, {Saarbr\"{u}cken, Germany, Email: {\tt{eoh@mpi-inf.mpg.de}}}} \and
  Hee-Kap Ahn\thanks{Pohang University of Science and Technology, {Pohang, Korea}, Email: {\tt{heekap@postech.ac.kr}}}}
\newcommand{\median}{\ensuremath{\Phi_\textsf{M}}}
\newcommand{\means}{\ensuremath{\Phi_\textsf{m}}}
\newcommand{\ccenter}{\ensuremath{\Phi_\textsf{c}}}
\newcommand{\eps}{\varepsilon}
\newcommand{\opt}{\ensuremath{\textsc{Opt}_k}}
\newcommand{\dist}[2]{\ensuremath{d(#1,#2)}}
\newcommand{\setdist}[2]{\ensuremath{d(#1,#2)}}
\newcommand{\oct}{\ensuremath{\mathcal{T}_c}}
\newcommand{\octs}{\ensuremath{\mathcal{T}_s}}
\newcommand{\cell}{\scalebox{0.9}{\ensuremath{\square}}}
\newcommand{\tR}{\bar{R}}
\newcommand{\ccell}{\overline{\cell}}
\newcommand{\lb}{\textsc{lb}}
\newcommand{\apx}{\textsc{apx}}
\newcommand{\pset}{\ensuremath{\mathcal{C}}}
\newcommand{\proj}[1]{\ensuremath{I(#1)}}
\newcommand{\tl}{\ensuremath{<_t}}
\newcommand{\sfloor}[1]{\ensuremath{\lfloor #1 \rfloor_s}}
\newcommand{\sceil}[1]{\ensuremath{\lceil #1 \rceil_s}}
\newcommand{\meb}{\ensuremath{\textsc{meb}}}
\begin{document}
\date{}
\maketitle

\begin{abstract}
We study the approximate range searching for three variants of the
clustering problem  
with a set $P$ of $n$ points in $d$-dimensional Euclidean space and
axis-parallel rectangular range queries: the \emph{$k$-median}, \emph{$k$-means}, and
\emph{$k$-center range-clustering query problems}. We present
data structures and query algorithms that compute
$(1+\eps)$-approximations to the optimal clusterings of $P\cap Q$ efficiently
for a query consisting of an orthogonal range $Q$, an integer $k$, and a value $\eps>0$.
\end{abstract}

\section{Introduction} 
Geometric range searching asks to preprocess a set of objects and to build a data structure so that 
all objects intersecting a given query range can be reported quickly.
There are classical variants of this problem such as computing the number of objects intersecting
a query range, checking whether an object intersects a query range, and finding the closest
pair of objects intersecting a query range.
It is a widely used technique
in computer science with numerous applications in geographic information systems,
computer vision, machine learning, and data mining.
These range searching problems have been studied extensively in computational
geometry over the last decades.
For more information on geometric range searching, refer to the surveys by
Matousek~\cite{Matousek-1994}, and Agarwal and Erickson~\cite{Agarwal-1999},
and the computational geometry book~\cite{CGbook}.

However, there are a large number of objects intersecting a query range in many real-world applications,
and thus it takes long time to report all of them. Thus one might want to obtain
a property of the set of such objects instead of obtaining all such objects.
Queries of this kind are called \emph{range-analysis queries}.
More formally, the goal of this problem is to preprocess a set $P$ of objects
with respect to a fixed range-analysis function $f$ 
and to build a data structure so that $f(P\cap Q)$ can be computed efficiently for any query 
range $Q$. 
These query problems have been studied extensively under various range-analysis functions
such as the diameter or width of a point set~\cite{Brass-2013} and the length of the minimum
spanning tree of a point set~\cite{Arya-2015}.
Note that the classical variants mentioned above are also range-analysis query problems.
A clustering cost can also be considered as a range-analysis function.

Clustering is a fundamental research topic in computer science and arises in various
applications~\cite{Jain-1999}, including pattern recognition and classification, data mining,
image analysis, and machine learning. 
In clustering, the objective is to group a set of data points into clusters so that
the points from the same cluster are similar to each other and points from
different clusters are dissimilar.
Usually, input points are in a high-dimensional space and the similarity is
defined using a distance measure. There are a number of variants of the clustering
problem in the geometric setting 
depending on the similarity measure such as 
the $k$-median, $k$-means, and $k$-center clustering problems.

In this paper, we study the approximate range-analysis query problems for three variants of the
clustering with a set $P$ of $n$ points in $d$-dimensional Euclidean space
with $d\geq 2$ and
axis-parallel rectangular range queries: the \emph{$k$-median},
\emph{$k$-means}, and \emph{$k$-center range-clustering query problems}.
The approximate $k$-median, $k$-means and $k$-center range-clustering query problems are
defined as follows: Preprocess $P$
so that given a query range $Q$, an integer $k$ with $1\leq k\leq n$ and a value $\eps>0$
as a query, an $(1+\eps)$-approximation to the $k$-median, $k$-means, and $k$-center clusterings
of $P\cap Q$ can be computed efficiently. 
Our desired query time is polynomial to $\log n$, $k$ and $(1/\eps)$. 

\subsection{Previous Work}
The $k$-median and $k$-means clustering problems have been studied extensively and there are 
algorithms achieving good approximation factors and polynomial running times to the
problem. Har-Peled and Mazumdar~\cite{Har-Peled-2004} presented an $(1+\eps)$-approximation
algorithm for the $k$-means and $k$-median clustering using
coresets for points in $d$-dimensional Euclidean space.
Their algorithm constructs a
$(k,\eps)$-coreset with property that for any arbitrary set $C$ of $k$ centers,
the clustering cost on the coreset with respect to $C$
is within $(1\pm\eps)$ times
the clustering cost on the original input points with respect to $C$. Then it
computes the clusterings for the coreset using a known weighted
clustering algorithm.
Later, a number of algorithms for computing smaller coresets for the $k$-median and $k$-means clusterings
have been presented~\cite{Chen-2009,Feldman-2011,Har-Peled2007-smaller}.
The smallest $(k,\eps)$-coresets known so far have size of $O(k/\eps^2)$ for both $k$-median and $k$-means~\cite{Feldman-2011}.

The $k$-center clustering problem has also been studied extensively. 
It is NP-Hard to approximate the 2-dimensional $k$-center problem within a factor of less than 2 even under
the $L_{\infty}$-metric~\cite{Feder}.
A $2$-approximation to the $k$-center can be computed in $O(kn)$ time for 
any metric space~\cite{Feder}, and in $O(n\log k)$ time  for any $L_p$-metric space~\cite{Gon1985}.
The exact $k$-center clustering can be computed in $n^{O(k^{1-1/d})}$ time in $d$-dimensional
space under any $L_p$-metric~\cite{agarwal2002}.
This algorithm combined with a technique for grids yields an $(1+\eps)$-approximation algorithm
for the $k$-center problem that takes $O(n\log k+ (k/\eps)^{O(k^{1-1/d})})$ time for any
$L_p$-metric~\cite{agarwal2002}.
Notice that all these algorithms are \emph{single-shot} algorithms, that is, they compute
a clustering of given points (without queries) just once.

There have been some results on range-analysis query problems related to
clustering queries. 
Brass et al.~\cite{Brass-2013} presented data structures of finding
extent measures: the width, area, or perimeter of the convex hull of
$P\cap Q$ and the smallest enclosing disk of $P\cap Q$. Arya et al.~\cite{Arya-2015} studied
data structures that support clustering queries on
the length of the minimum spanning tree of $P\cap Q$.
Various types of range-aggregate nearest neighbor queries have also been
studied~\cite{Papadias-2005,Shan2003}.

Nekrich and Smid~\cite{Nekrich-2010} considered approximate
range-aggregate queries such as the diameter or radius of the smallest
enclosing ball for points in $d$-dimensional space. Basically, their algorithm constructs
a $d$-dimensional range tree as a data structure,
in which each node stores a $\delta$-coreset of points in its subtree (but not explicitly), 
and applies range-searching query algorithms on the tree, where $\delta$ is
a positive value. Their algorithm works
for any aggregate function that can be approximated using
a decomposable coreset including coresets for the $k$-median, $k$-means and $k$-center
clusterings. 
In this case, the size of the data structure is $O(kn\log^d n/\delta^2)$, and the
query algorithm computes a $(k,\delta)$-coreset of size
$O(k\log^{d-1} n/\delta^2)$.
However, their algorithm uses $k$ and $\delta$ in constructing
the data structure for the clusterings, and therefore $k$ and $\delta$ are fixed
over range-clustering queries.

Very recently, Abrahamsen et al.~\cite{Abrahamsen-2017} considered
$k$-center range-clustering queries for $n$ points in $d$-dimensional space. 
They presented a method, for a query consisting of a range $Q$,
an integer $k$ with $1\leq k\leq n$ and a value $\eps>0$, of computing a $(k,\eps)$-coreset 
$S$ from $P\cap Q$
of size $O(k/\eps^d)$ in $O(k(\log n/\eps)^{d-1}+ k/\eps^d)$ time such that 
the $k$-center of $S$ is an $(1+\eps)$-approximation to the $k$-center of $P\cap Q$.
After computing the coreset, they compute an $(1+\eps)$-approximation to the $k$-center of the
coreset using a known single-shot algorithm. 
Their data structure is of size $O(n\log^{d-1} n)$ and its query algorithm computes 
an $(1+\eps)$-approximate to a $k$-center range-clustering query
in $O(k(\log n/\eps)^{d-1}+ T_\textnormal{ss}(k/\eps^d))$ time, where
$T_\textnormal{ss}(N)$ denotes the running time
of an $(1+\eps)$-approximation single-shot algorithm for the $k$-center  problem on $N$ points.

The problem of computing the diameter of input points contained in a query range
can be considered as a special case of the range-clustering problem. 
Gupta et al.~\cite{Gupta} considered this problem in the plane and presented
two data structures. One requires $O(n\log^2 n)$ size that supports queries
with arbitrary approximation factors $1+\eps$ in $O(\log n/\sqrt{\eps}+\log^3 n)$
query time and the other requires a smaller size $O(n\log n/\sqrt{\delta})$ that
supports only queries with the \textit{fixed} approximation factor $1+\delta$
with $0<\delta<1$  that is used for constructing the data structure.
The query time for the second data structure is  $O(\log^3 n/\sqrt{\delta})$.
Nekrich and Smid~\cite{Nekrich-2010} presented a data structure for this problem in a
higher dimensional space that has size $O(n \log^d n)$ and supports
diameter queries with the fixed approximation factor  
$1+\delta$ in $O(\log^{d-1} n/\delta^{d-1})$ query time.
Here, $\delta$  is an approximation
factor given for the construction of their data structure, and
therefore it is fixed for queries to the data structure. 

\subsection{Our Results}
We present algorithms for $k$-median, $k$-means,
and $k$-center range-clustering queries with query times polynomial
to $\log n$, $k$ and $1/\eps$. These algorithms have a similar structure:
they compute a $(k,\eps)$-coreset
of input points contained in a query range, and then compute a clustering
on the coreset using a known clustering algorithm.
We call an algorithm for computing a clustering of given points (without queries) a
\emph{single-shot algorithm} for the clustering.
We use $T_\textnormal{ss}(N)$ to denote the running time
for any $(1+\eps)$-approximation single-shot algorithm of each problem on $N$ points.
For a set $P$ of $n$ points in
$d$-dimensional Euclidean space with $d\geq 2$, we present the
following results.

\begin{itemize}
\item There is a data structure of size $O(n\log^d n)$ such that an
  $(1+\eps)$-approximation to the $k$-median or $k$-means
    clustering of $P\cap Q$ can be computed in time
  \[O(k^5\log^9 n+ k\log^d n/\eps + T_\textnormal{ss}(k\log n/\eps^d))\] for any orthogonal
  range $Q$, any integer $k$ with $1\leq k\leq n$, and any value $\eps>0$ given
  as a query. To our best knowledge, this is the first result on the $k$-median
    and $k$-means clusterings
    for orthogonal range queries with any integer $k$ and any value $\eps$.
%

\item There is a data structure of size $O(n\log^{d-1} n)$ such that
   an $(1+\eps)$-approximation to the $k$-center clustering of $P\cap Q$ can be computed in time 
  \[O(k\log^{d-1} n+k\log n/\eps^{d-1}+T_\textnormal{ss}(k/\eps^d))\] for any orthogonal range
  $Q$, an integer $k$ with $1\leq k\leq n$, and a value $\eps>0$ given as a
  query.
  This improves the result by Abrahamsen et al.~\cite{Abrahamsen-2017}.
  
\item There is a data structure of size $O(n\log^{d-1} n)$ such that
	an $(1+\eps)$-approximation to the diameter (or radius) of $P\cap Q$ can be computed in time 
	\[O(\log^{d-1} n+\log n/\eps^{d-1})\] 
	for any orthogonal range $Q$ and a value $\eps>0$ given as a query. 
    This improves the results by Nekrich and Smid~\cite{Nekrich-2010}.
\end{itemize}

Our results are obtained by combining range searching with coresets.
The $k$-median and $k$-means range-clusterings have not been studied before, except
the work by Nekrich and Smid. They presented a general method to compute 
approximate range-aggregate queries. Their approach can be used
to compute an $(1+\delta)$-approximation to the $k$-median or $k$-means
range-clustering for a positive value $\delta$
which is given in the construction of their data structure. 
However, it is not clear how to use or make their data structure
  to support approximate range-clustering queries with various approximation
  factors
we consider in this paper unless those values are known in advance.
Indeed, the full version of the paper by Abrahamsen et al.~\cite{Abrahamsen-2017}
  poses as an open problem a data structure supporting $(1+\eps)$-approximate
$k$-median or $k$-means range-clustering
queries with various values $\eps$.
Our first result answers to the question and presents
a data structure for the $k$-median and $k$-means range-clustering problems
for any value $\eps$.

Our second result, that is, the data structure and its query algorithm for the
$k$-center problem, improves the best known previous work by Abrahamsen et
al.~\cite{Abrahamsen-2017}.
Recall that the query algorithm by Abrahamsen et al. takes $O(k\log^{d-1} n/\eps^{d-1} + T_\textnormal{ss}(k/\eps^d))$ time. We improved the first term of their running time by
a factor of $\min\{1/\eps^{d-1},\log^{d-2} n\}$.

Our third result, that is, the data structure and its query algorithm
for computing an approximate diameter and radius of points in a query
range, improves the best known previous work by
Nekrich and Smid~\cite{Nekrich-2010}.
Our third result not only allows
queries to have arbitrary approximation factor values $1+\eps$, but
also improves the size and the query time of these data structures.
The size is improved by a factor of $\log n$.
Even when $\eps$ is fixed to $\delta$, the query time is improved
  by a factor of $\min\{1/\delta^{d-1}, \log^{d-2} n\}$ compared to the one
  by Nekrich and Smid~\cite{Nekrich-2010}.

\medskip
A main tool achieving the three results is a new data
structure for range-emptiness and  range-counting queries. 
Consider a grid $\Gamma$ with side length $\gamma$ covering an axis-parallel
hypercube with side length $\ell$
that is aligned with the standard quadtree.
For a given query range $Q$ and every cell $\cell$ of $\Gamma$,
we want to check whether there is a point of $P$ contained in $\cell\cap Q$ efficiently.
(Or, we want to compute the number of points of $P$ contained in $\cell\cap Q$.) 
For this purpose, one can use a
data structure for an orthogonal range-emptiness queries supporting
$O(\log^{d-1} n)$ query time~\cite{CGbook}.
Thus, the task takes $O((\ell/\gamma)^d \log^{d-1} n)$ time for all cells of $\Gamma$ in total.
Notice that $(\ell/\gamma)^d$ is the number of grid cells of $\Gamma$. 

To improve the running time for the task,
we present a new data structure that supports a range-emptiness query in 
$O(\log^{d-t-1} n+\log n)$ time for a cell of $\Gamma$
intersecting no face of $Q$ with dimension smaller than $t$ for any fixed $t$.
Using our data structure, the running time for the task is improved to
$O(\log^{d-1} n+ (\ell/\gamma)^d \log n)$.
To obtain this data structure, we observe that a range-emptiness query
for $\cell\cap Q$ can be reduced to a $(d-t-1)$-dimensional orthogonal
range-emptiness query on points contained in $\cell$ if $\cell$ 
intersects no face of $Q$
with dimension smaller than $t$. We maintain
a data structure for $(d-t-1)$-dimensional orthogonal range-emptiness
queries for each cell of the compressed quadtree for every $t$. However, this requires
 $\Omega(n^2)$ space in total if we maintain these data structures explicitly.
We can reduce the space complexity using a
method for making a data structure partially
persistent~\cite{Driscoll-1989}.


Another tool to achieve an efficient query time is a \emph{unified grid} based
on quadtrees.  For the $k$-median and $k$-means clusterings,
we mainly follow an approach given by Har-Peled and Mazumdar~\cite{Har-Peled-2004}.
They partition input points with respect to the approximate centers explicitly,
and construct a grid for each subset of the partition.
Then they snap each input point $p$ to a cell of the grid
constructed for the subset where $p$ is involved. 
However, their algorithm is a single-shot algorithm and requires
$\Omega(|P\cap Q|)$ time due to the computation of a coreset from
approximate centers of the points contained in a given query box.
In our algorithm, we do not partition input points explicitly but we
use only one grid instead, which we call the unified grid, for the purpose
in the implementation so that a coreset can be constructed more efficiently.

The tools we propose in this paper to implement the algorithm by
Har-Peled and Mazumdar can be used for
implementing other algorithms based on grids.  For example, if Monte
Carlo algorithms are allowed, the approach given
by Chen~\cite{Chen-2009} for approximate range queries can be
implemented by using the tools we propose in this paper.

\section{Preliminaries}
Let $P$ be a set of $n$ points in $d$-dimensional Euclidean space. For any two points
$x$ and $y$ in $d$-dimensional space, we use $\dist{x}{y}$ to denote the
Euclidean distance between $x$ and $y$.  For a point
$x$ and a set $Y$ in $d$-dimensional space, we use
$\setdist{x}{Y}$ to denote the smallest Euclidean distance between $x$
and any point in $Y$.
Throughout the paper, we use the term \emph{square} in a generic way to refer a
$d$-dimensional axis-parallel hypercube. Similarly, we use the term \emph{rectangle}
to refer a $d$-dimensional axis-parallel box.

\subsection{Clusterings}
For any integer $k$ with $1\leq k\leq n$, let $\pset_k$ be the family of the sets of 
at most $k$ points in $d$-dimensional Euclidean space.  Let
$\Phi : \pset_n\times \pset_k \rightarrow \mathbb{R}_{\geq
  0}$ be a cost function which will be defined later.  For a set $P$
of $n$ points in $d$-dimensional Euclidean space, we define $\opt(P)$ as the minimum
value $\Phi(P,C)$ over all sets $C\in\pset_k$.  We call a set
$C\in\pset_k$ realizing $\opt(P)$ a \emph{$k$-clustering of $P$ under the cost
  function $\Phi$}.

In this paper, we consider three cost functions $\median, \means$ and $\ccenter$
that define the $k$-median, $k$-means and $k$-center clusterings, respectively.
Let $\phi(p,C)=\min_{c\in C} \dist{p}{c}$ for any point $p$ in $P$.
The cost functions are defined as follows: for any set $C$ of $\pset_k$, 
\[\median(P,C)=\sum_{p\in P} \phi(p,C),\quad
\means(P,C)=\sum_{p\in P} (\phi(p,C))^2,\quad
\ccenter(P,C)=\max_{p\in P} \phi(p,C).\]

We consider the query variants of these problems. We preprocess
$P$ so that given a query rectangle $Q$, an integer $k$ with $1\leq k\leq n$, and 
a value $\eps>0$, an $(1+\eps)$-approximate
$k$-clustering of the points contained in $Q$ can be reported efficiently.  Specifically,
we want to report a set $C\in\pset_k$ with
$\Phi(P_Q,C) \leq (1+\eps) \opt(P_Q)$ in sublinear time, where
$P_Q=P\cap Q$.  We call a query of this type a \emph{range-clustering
  query}.

\subsection{Coreset for Clustering}
Consider a cost function $\Phi$. We call a set
$S\subseteq\mathbb{R}^d$ a \emph{$(k,\eps)$-coreset} of $P$ for the
$k$-clustering under the cost function $\Phi$ if the following holds:
for any set $C$ in $\pset_k$,
\[(1-\eps)\Phi(P,C) \leq \Phi(S,C) \leq (1+\eps)\Phi(P,C).\] Here, the
points in a coreset might be weighted. In this case, the distance
between a point $p$ in $d$-dimensional space and a weighted point $s$ in a
coreset is defined as $w(s)\cdot d(p,s)$, where $w(s)$ is the weight
of $s$ and $d(p,s)$ is the Euclidean distance between $p$ and $s$.

By definition, an $(1+\eps)$-approximation to the $k$-clustering of $S$
is also an $(1+\eps)$-approximation to the $k$-clustering of $P$. Thus,
$(k,\eps)$-coresets can be used to obtain a fast approximation
algorithm for the $k$-median, $k$-means, and $k$-center clusterings. A
$(k,\eps)$-coreset of smaller size gives a faster approximation
algorithm for the clusterings. The followings are the sizes of
the smallest $(k,\eps)$-coresets known so far: $O(k/\eps^2)$ 
for the $d$-dimensional Euclidean $k$-median and $k$-means clusterings~\cite{Feldman-2011}, and
$O(k/\eps^{d})$ for the $d$-dimensional Euclidean $k$-center clustering~\cite{agarwal-2005}.

It is also known that $(k,\eps)$-coresets for the $k$-median, $k$-means,
and $k$-center clusterings are \emph{decomposable}. That is, if $S_1$
and $S_2$ are $(k,\eps)$-coresets for disjoint sets $P_1$ and $P_2$,
respectively, then $S_1\cup S_2$ is a $(k,\eps)$-coreset for
$P_1\cup P_2$ by~\cite[Observation 7.1]{Har-Peled-2004}.  Using this property, one can obtain data structures on
$P$ that support an
$(1+\delta)$-approximation to the $k$-median, $k$-means, and $k$-center
range-clustering queries for constants $\delta>0$ and $k$
with $1\leq k\leq n$ which are given in the construction phase 
as follows. 

Consider the $d$-dimensional range tree on $P$, a multi-level binary search
tree~\cite{CGbook}.  There are $O(n\log^{d-1} n)$ nodes in the
  level-$d$ trees of the range tree in total.  Each such node $v$ 
corresponds to a $d$-dimensional axis-parallel box $B(v)$. For each node $v$, assume that
a $(k,\delta)$-coreset of the points of $P$ contained in $B(v)$
is stored. For any rectangle $Q$, there are $O(\log^d n)$ nodes $v$ such that
$P\cap Q$ is the set of the input points contained in the union of $B(v)$'s.
Such nodes are called \emph{canonical nodes} for $Q$. To answer a clustering query with
a query rectangle $Q$, it suffices
to return the union of the $(k,\delta)$-coresets stored in every canonical
node for $Q$, which is a $(k,\delta)$-coreset of $P\cap Q$. 
Then the query time and the size of the coreset are $O(f(k,\delta)\log^d n)$, where $f(k,\delta)$ is the 
size of a $(k,\delta)$-coreset of a clustering obtained from a single-shot
algorithm for constants $\delta>0$ and $k$ with $1\leq k\leq n$.
For the size of the data structure, observe that the size of
the coreset stored in each node $v$ is at most the
number of the points contained in $B(v)$.  The total number of
points contained in $B(v)$ for every node $v$ of the range
tree is $O(n\log^d n)$, and thus the data structure has size $O(n\log^d n)$.

One drawback with the data structure is that $k$ and $\delta$ are determined 
in the construction phase of the structure, and therefore they are fixed
over range-clustering queries. To resolve this, we construct
a number of the data structures for different values of $k=1,2,2^2,
\ldots,2^{\lceil \log n\rceil}$. Given a value $k$ as a query with
$\bar{k}\leq k< 2\bar{k}$, we simple return a
$(\bar{k},\delta)$-coreset, which is a $(k,\delta)$-coreset.
This does not increase the size of the data
structure and the query time asymptotically and allows $k$ to be a part of
queries.

\begin{lemma}\label{lem:constant-coreset}
  Given a set $P$ of $n$ points in $d$-dimensional space and a
  value $\delta>0$ given in the construction phase, we can construct a data structure of size
  $O(n\log^d n)$ so that a $(k,\delta)$-coreset of
  $P\cap Q$ for the $k$-median and $k$-means clusterings of size
  $O(k\log^{d} n)$ can be computed in $O(k\log^{d} n)$ time for
  any orthogonal range $Q$ and any integer $k$ with $1\leq k\leq n$ given as a query.
\end{lemma}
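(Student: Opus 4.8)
The plan is to instantiate the generic range-tree construction described in the paragraph immediately preceding the lemma statement, taking for the per-node coreset a known single-shot $(k,\delta)$-coreset for the $k$-median and $k$-means clusterings. First I would fix $\delta>0$ (the value given at construction time) and build, for each value $k=1,2,2^2,\dots,2^{\lceil\log n\rceil}$, a $d$-dimensional range tree on $P$; at every node $v$ of every level-$d$ subtree, whose canonical box I denote $B(v)$, I store a $(k,\delta)$-coreset of $P\cap B(v)$ computed by a single-shot coreset algorithm. By the known bounds cited in the excerpt, such a coreset has size $O(k/\delta^2)$, which for fixed $\delta$ is $O(k)$; but since the coreset stored at $v$ can in any case be taken to have size at most $|P\cap B(v)|$, the total storage over one range tree is bounded by $\sum_v |P\cap B(v)| = O(n\log^d n)$, and summing over the $O(\log n)$ values of $k$ multiplies this only by $O(\log n)$ — however, one should observe that a single-shot coreset for the largest value $k=2^{\lceil\log n\rceil}\ge n$ is just $P$ itself, and more carefully the coreset at $v$ has size $\min\{O(k/\delta^2),|P\cap B(v)|\}$, so one can argue the total is still $O(n\log^d n)$; I will spell this out by splitting each node's contribution at the threshold $|P\cap B(v)| \lessgtr k/\delta^2$.

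Next, for the query algorithm: given $Q$, $k$, and the query value $\delta$ (equal to the construction value), I would round $k$ up to the nearest power of two $\bar k$ with $\bar k \le 2k$, select the range tree built for $\bar k$, and find the $O(\log^d n)$ canonical nodes for $Q$ in that tree in $O(\log^d n)$ time by the standard range-tree search. I then return the union of the stored $(\bar k,\delta)$-coresets over these canonical nodes. Correctness follows in three steps: (i) $P\cap Q$ is the disjoint union of the $P\cap B(v)$ over canonical nodes $v$; (ii) each stored set is a $(\bar k,\delta)$-coreset of the corresponding $P\cap B(v)$; (iii) by decomposability of $(k,\delta)$-coresets for the $k$-median and $k$-means clusterings (\cite[Observation 7.1]{Har-Peled-2004}, as quoted in the excerpt), the union is a $(\bar k,\delta)$-coreset of $P\cap Q$; and finally a $(\bar k,\delta)$-coreset is a $(k,\delta)$-coreset because $\bar k\le 2k$ and, trivially, any set that $(1\pm\delta)$-approximates the cost with respect to every set of at most $\bar k \ge k$ centers in particular does so with respect to every set of at most $k$ centers. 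The size and running-time bounds are then immediate: $O(\log^d n)$ canonical nodes, each contributing a coreset of size $O(\bar k/\delta^2)=O(k)$ for fixed $\delta$, i.e. $O(k\log^d n)$ points produced in $O(k\log^d n)$ time (the search cost $O(\log^d n)$ being absorbed).

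I expect the only genuinely delicate point to be the storage bound: naively, storing an $O(k/\delta^2)$-size coreset at each of the $O(n\log^d n)$ nodes of every range tree, for all $O(\log n)$ values of $k$, suggests a bound like $O(n\log^{d+1} n / \delta^2)$ rather than the claimed $O(n\log^d n)$. The resolution — which I would present carefully — is that the coreset stored at $v$ never needs more than $|P\cap B(v)|$ points (one can just store $P\cap B(v)$ when it is smaller than the target coreset size), so the contribution of $v$ across all $k$ is $\sum_{j}\min\{O(2^j/\delta^2),|P\cap B(v)|\}$, which is $O(|P\cap B(v)|\cdot \min\{\log n, \log(\delta^2 |P\cap B(v)|)\} + \text{geometric tail})$; since $\delta$ is a constant this is $O(|P\cap B(v)|)$ up to the constant hidden in the lemma, and summing over $v$ yields $O(n\log^d n)$. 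Everything else is a direct application of the generic construction and of decomposability, which the excerpt has already established.
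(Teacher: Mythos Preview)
Your approach is exactly the paper's: a $d$-dimensional range tree with a $(k,\delta)$-coreset at each level-$d$ node, replicated for each $k=2^j$, answering a query by unioning the coresets at the $O(\log^d n)$ canonical nodes and invoking decomposability. The correctness and the query-time/coreset-size bounds are fine.

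The gap is in your space-bound calculation. You write
\[
\sum_{j=0}^{\lceil\log n\rceil}\min\{c\,2^j,\,m_v\}\;=\;O(m_v),
\qquad m_v:=|P\cap B(v)|,
\]
but this is false. Split the sum at $j_0=\lceil\log(m_v/c)\rceil$: the terms with $j<j_0$ form a geometric series summing to $O(m_v)$, but the terms with $j\ge j_0$ each equal $m_v$ and there are $\Theta(\log(n/m_v))$ of them, so the sum is $\Theta(m_v\log(n/m_v))$. Summed over all level-$d$ nodes of a balanced range tree this gives $\Theta(n\log^{d+1} n)$, not $O(n\log^d n)$. Your sentence ``since $\delta$ is a constant this is $O(|P\cap B(v)|)$ up to the constant hidden in the lemma'' is therefore the step that fails.

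The fix is simple and is implicit in the paper's remark that the coreset at $v$ ``is at most the number of the points contained in $B(v)$'': build \emph{one} range tree, store $P\cap B(v)$ once at each node (this already costs $\sum_v m_v=O(n\log^d n)$), and for every $j$ with $c\,2^j\ge m_v$ let the ``coreset for $k=2^j$'' be a pointer to that stored set rather than a fresh copy. Only the values $j$ with $c\,2^j<m_v$ require an explicitly stored coreset, and those contribute $\sum_{j<j_0} c\,2^j=O(m_v)$ per node, hence $O(n\log^d n)$ in total. With this sharing, the $O(\log n)$ values of $k$ genuinely do not increase the space asymptotically, as the paper asserts.
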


Note that the approximation factor of the coreset is still fixed to queries.
In Section~\ref{sec:median}, we will describe a data structure
and its corresponding query algorithm answering $k$-median and
$k$-means range-clustering queries that allow queries to have arbitrary
approximation factor values $1+\eps$. The query algorithm in Section~\ref{sec:median}
uses the algorithm in Lemma~\ref{lem:constant-coreset} as a subprocedure.
%

\subsection{Single-Shot Algorithms for the
  \texorpdfstring{$k$}{k}-Median and \texorpdfstring{$k$}{k}-Means
  Clusterings}
\label{sec:single-shot}
The single-shot version of this problem was studied by Har-Peled and
Mazumdar~\cite{Har-Peled-2004}.  They gave algorithms to compute 
$(k,\eps)$-coresets of size $O(k\log n/\eps^d)$ for the $k$-median and
$k$-means clusterings. Since we extensively use their results, we give
an overview to their algorithm for the $k$-median
clustering. The algorithm for the $k$-means clustering works
similarly.
In this subsection, we use $\Phi$ to denote $\median$ for ease of
description.

Their algorithm starts with computing
a constant-factor approximation $A\subset\mathbb{R}^d$
to the $k$-means clustering of $P$, that is, $A$ satisfying
$\Phi(P,A)\leq c_1 \cdot\opt(P)$ for
some constant $c_1>1$.
The approximation set consists of $O(k\log^3 n)$ centers.
Then given the constant-factor approximation set of $P$, 
it computes a $(k,\eps)$-coreset $S$ of size $O(k\log^4 n/\eps^d)$ for $P$.
From $S$, the algorithm finally obtains a smaller $(k,\eps)$-coreset of size
$O(k\log n/\eps^d)$ for $P$.

\subsubsection{Coreset from Constant-Factor Approximate Centers}
Given a constant-factor approximation $A=\{a_1,\ldots,a_m\}$
to the $k$-means clustering of $P$ such that $m$ is possibly larger than $k$, 
the algorithm by Har-Peled and Mazumdar computes
a $(k,\eps)$-coreset of size $O(|A|\log n/\eps^d)$ for $P$ as follows.
The procedure partitions $P$ with respect to $A$ into pairwise disjoint
sets $P_i$ for $i=1,\ldots,m$ such that $P_i$ consists of points $p$
in $P$ with $\dist{p}{a_i}\leq c_2\cdot\dist{p}{a_j}$ for every index
$j\neq i$ for a constant $c_2>1$. 
Note that $P_i$ is not necessarily unique.

Then it constructs a grid for each set $P_i$ with respect to $a_i$ and
snaps the points in $P_i$ to the grid as follows.  Let
$R=\Phi(P,A)/(c_1n)$, where $c_1>1$ is the approximation factor of $A$.  Let
$Q_{ij}$ be the square 
with side length $R2^j$ centered at
$a_i$ for $j=0,\ldots, M$, where $M=\lceil2\log (c_1n) \rceil$.
Let $V_{i0}=Q_{i0}$ and
$V_{ij}=Q_{ij}\setminus Q_{i(j-1)}$. To compute a grid for $P_i$,  the procedure
partitions each $V_{ij}$ into squares with side length
$r_j=\eps R2^j/(10c_1d)$. 
Figure~\ref{fig:grid}(a) illustrates
a grid constructed with respect to an approximate center in the middle.

\begin{figure}
  \begin{center}
    \includegraphics[width=0.8\textwidth]{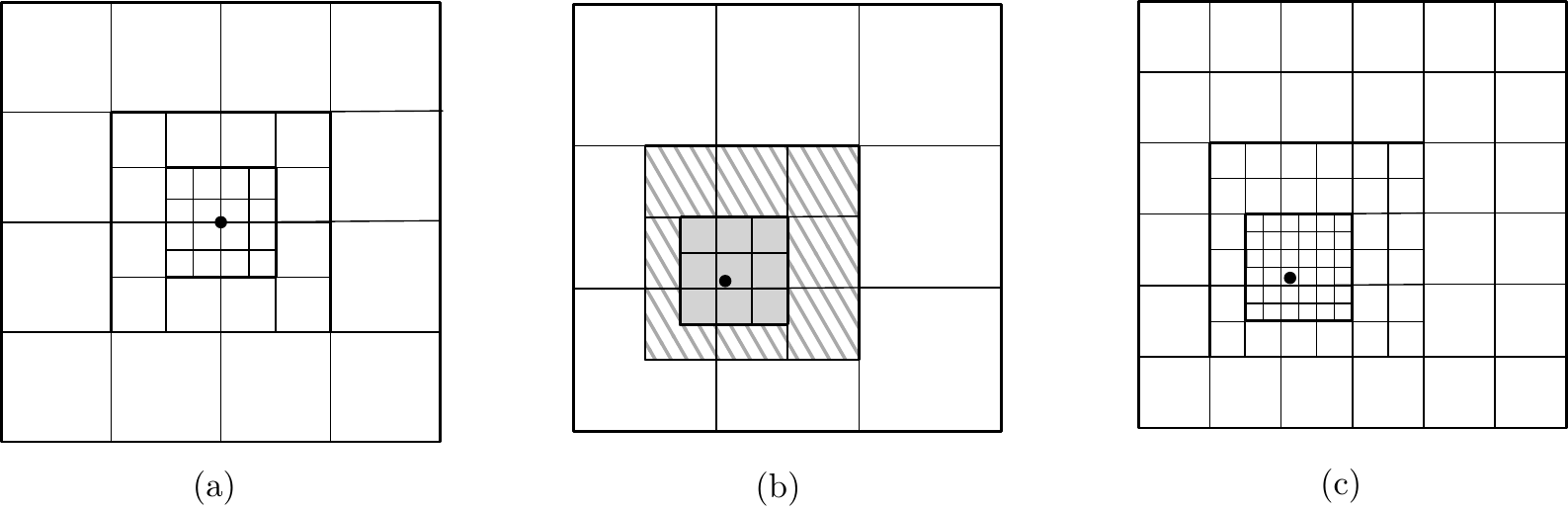}
    \caption{\small (a) Exponential grid for the single-shot algorithm
      constructed with respect to the point in the middle.  (b)
      Exponential grid for an algorithm for the query version
      constructed with respect to the point, say $a_1$, in the middle. The point
      is not the center of the grid. The gray region is the grid cluster for $Q_{10}$, 
      the dashed region is the grid cluster for $Q_{11}$, and the largest box
      is the grid cluster for $Q_{12}$.
      (c) Only first-level grid is depicted.\label{fig:grid}}
  \end{center}
\end{figure}

For every grid cell $\Box$ containing a point of $P_i$, the procedure picks
an arbitrary point $q$ of
$P_i$ contained in it and assigns the number of points of $P_i$ contained in
$\Box$ to $q$ as its weight.  Let $S_i$ be the set of
all such weighted points of $P_i$ for $i=1,\ldots, m$.
They showed that the union of all $S_i$ is a $(k,\eps)$-coreset for $P$ of size
$O(|A|\log n/\eps^d)$.

\begin{lemma}[\cite{Har-Peled-2004}]\label{lem:approx-to-coreset}
  Given a constant-factor approximation $A$ to the $k$-means clustering of $P$
  consisting of possibly more than $k$ centers, a
  $(k,\eps)$-coreset of $P$ for the $k$-median clustering of size
  $O(|A|\log n/\eps^d)$ can be computed in $O(n \log |A|)$ time.
\end{lemma}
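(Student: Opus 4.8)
This is the construction of Har-Peled and Mazumdar spelled out just above, so the plan is to verify the three claims for the set $S=\bigcup_{i=1}^{m}S_i$ returned by the procedure: (i) $|S|=O(|A|\log n/\eps^d)$, (ii) $S$ satisfies the $(k,\eps)$-coreset inequalities for the $k$-median cost $\Phi=\median$, and (iii) $S$ is computed in $O(n\log|A|)$ time. It is convenient to write $q_p$ for the representative assigned to $p\in P_i$, i.e.\ the chosen point of the grid cell of $P_i$ containing $p$; since the cells partition each $P_i$ and the weight of a representative counts the points it stands for, $\Phi(S,C)=\sum_{p\in P}\phi(q_p,C)$ for every $C\in\pset_k$, while $\Phi(P,C)=\sum_{p\in P}\phi(p,C)$.

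For (i), fix a cluster $P_i$ and a level $j\in\{0,\dots,M\}$. The set $V_{ij}$ is contained in the square $Q_{ij}$ of side $R2^j$, and the grid on $V_{ij}$ has side $r_j=\eps R2^j/(10c_1d)$, so $V_{ij}$ meets at most $(R2^j/r_j+1)^d=(10c_1d/\eps+1)^d=O(1/\eps^d)$ cells (with $c_1,d$ treated as constants). Summing over the $M+1=O(\log n)$ levels gives $|S_i|=O(\log n/\eps^d)$, hence $|S|=\sum_i|S_i|=O(|A|\log n/\eps^d)$. I would also check that every $p\in P_i$ in fact lies in some $V_{ij}$, i.e.\ inside $Q_{iM}$: membership in $P_i$ gives $\dist{p}{a_i}\le c_2\,\phi(p,A)\le c_2\,\Phi(P,A)=c_1c_2nR$, while $Q_{iM}$ has half-side $2^{M-1}R\ge (c_1n)^2R/2$, and these are compatible once $n$ exceeds an absolute constant (small instances being handled trivially).

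For (ii), I would reduce everything to the displacement bound $\sum_{p\in P}\dist{p}{q_p}\le\eps\,\opt(P)$. Granting it, for any $C\in\pset_k$ the $1$-Lipschitz property of $x\mapsto\phi(x,C)=\min_{c\in C}\dist{x}{c}$ gives
\[\bigl|\Phi(S,C)-\Phi(P,C)\bigr|=\Bigl|\sum_{p\in P}\bigl(\phi(q_p,C)-\phi(p,C)\bigr)\Bigr|\le\sum_{p\in P}\dist{p}{q_p}\le\eps\,\opt(P)\le\eps\,\Phi(P,C),\]
which is exactly $(1-\eps)\Phi(P,C)\le\Phi(S,C)\le(1+\eps)\Phi(P,C)$ (the last step uses $\opt(P)\le\Phi(P,C)$). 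To prove the displacement bound I would split $P$ by level. A point $p$ in a level-$0$ cell shares a cell of side $r_0$ with $q_p$, so $\dist{p}{q_p}\le\sqrt{d}\,r_0=\eps R/(10c_1\sqrt d)$, and there are at most $n$ such points, contributing at most $\eps\,\Phi(P,A)/(10c_1^2\sqrt d)$ since $R=\Phi(P,A)/(c_1n)$. For $p\in V_{ij}$ with $j\ge1$, $p\notin Q_{i(j-1)}$ forces $\dist{p}{a_i}\ge 2^{j-2}R$ whereas $\dist{p}{q_p}\le\sqrt d\,r_j=\eps R2^j/(10c_1\sqrt d)$, so $\dist{p}{q_p}\le\frac{4\eps}{10c_1\sqrt d}\dist{p}{a_i}$; summing over all such $p$ and using $\dist{p}{a_i}\le c_2\,\phi(p,A)$ for $p\in P_i$ bounds their total by $\frac{4\eps c_2}{10c_1\sqrt d}\,\Phi(P,A)$. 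Adding the two contributions and applying $\Phi(P,A)\le c_1\opt(P)$ leaves $\eps\,\opt(P)$ times an absolute constant, and the factor $10c_1d$ built into $r_j$ is chosen precisely so that this constant does not exceed $1$. (The $k$-means variant is the same once $1$-Lipschitzness is replaced by $|\phi(p,C)^2-\phi(q_p,C)^2|\le\dist{p}{q_p}\,(2\phi(p,C)+\dist{p}{q_p})$ and the summation is reweighted accordingly.)

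For (iii), the bottleneck is producing the partition: for each $p\in P$ one needs an index $i$ with $\dist{p}{a_i}\le c_2\,\dist{p}{a_j}$ for all $j$, which a $c_2$-approximate nearest-neighbor query on $A$ answers in $O(\log|A|)$ time with a standard data structure on $A$; determining the level (one logarithm of $\dist{p}{a_i}/R$) and then the containing cell is $O(1)$ per point, and grouping points by (cluster, cell) and choosing representatives is $O(1)$ expected per point with a hash table, for a total of $O(n\log|A|)$. The main obstacle is not conceptual but arithmetic bookkeeping: choosing $c_1,c_2$ and the cell-size constant so that the level-$0$ and higher-level parts of the displacement bound genuinely sum to at most $\eps\,\opt(P)$, together with the boundary check that no point of $P_i$ escapes $Q_{iM}$; both follow routinely from the estimates above.
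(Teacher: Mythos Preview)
Your proposal is correct and matches the paper's approach: the lemma is quoted from Har-Peled and Mazumdar, and the paper itself does not give a separate proof beyond the construction sketch preceding the statement. Your verification---counting $O(1/\eps^d)$ cells per level and $O(\log n)$ levels for the size bound, reducing the coreset inequality to the total displacement $\sum_p \dist{p}{q_p}\le \eps\,\opt(P)$ via the $1$-Lipschitz property of $\phi(\cdot,C)$, splitting that sum into the level-$0$ part (bounded using $R=\Phi(P,A)/(c_1 n)$) and the $j\ge 1$ part (bounded using $\dist{p}{a_i}\ge 2^{j-2}R$ together with $\dist{p}{a_i}\le c_2\,\phi(p,A)$), and charging the partition step to approximate nearest-neighbor queries on $A$---is exactly the argument from \cite{Har-Peled-2004}; indeed the paper reproduces the same displacement analysis a few pages later when proving the analogous lemma for its unified grid.
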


\subsubsection{Smaller Coreset}
  By Lemma~\ref{lem:approx-to-coreset}, the algorithm constructs
  a $(k,\eps)$-coreset $S$ of size $O(k\log^4 n/\eps^d)$ using 
  a constant-factor approximation of size $O(k\log^3 n)$.  Using the coreset $S$,
  the algorithm obtains a smaller coreset of size $O(k\log n/\eps^d)$ as
  follows.
The algorithm computes a constant-factor approximation $\pset_0$ to the
$k$-center clustering of $S$ using the algorithm
in~\cite{Gon1985}. 
This clustering is an $O(n)$-approximation to the
$k$-median clustering.  Then it applies the local
search algorithm by Arya et al.~\cite{Arya-2004} 
to $\pset_0$ and $S$ to obtain a constant-factor approximation of
$P$ of size at most $k$. It
uses this set to compute a $(k,\eps)$-coreset of size
$O(k\log n/\eps^d)$ by applying Lemma~\ref{lem:approx-to-coreset} again.

\section{Data Structures for Range-Clustering Queries}\label{sec:DS}
We maintain two data structures constructed on $P$. One is
a compressed quadtree~\cite{Aluru-2005}, and the other is a variant of a range tree,
which we introduce in this paper.

\subsection{Compressed Quadtree}\label{sec:quad}
We use the term \emph{quadtrees} in a generic way to refer the
hierarchical spatial tree data structures for $d$-dimensional data
that are based on the principle of recursive decomposition of space,
also known as quadtrees, octrees, and hyperoctrees for spatial data
in $d=2, 3,$ and higher dimensions, respectively.
A \emph{standard quadtree} on $P$ is a
tree each of whose nodes $v$ corresponds to a square cell.  
The root of the quadtree
corresponds to the axis-parallel square containing all points of $P$.  A node $v$ of the quadtree corresponding
to cell $\cell$ has $2^d$ child nodes that correspond to the $2^d$
equal sized squares formed by splitting $\cell$ by $d$ axis-parallel
cuts through the center of $\cell$ if $\cell$ contains at least two points of $P$. Otherwise, $v$
is a leaf node of the quadtree.

Without loss of generality, we assume that
the side length of the square corresponding to the root is $1$. 
Then every cell of the standard quadtree has side length
of $2^{-i}$ for an integer $i$ with $0\leq i\leq t$ for some constant $t$.
 We call a value of form $2^{-i}$ for an integer $i$ with $0\leq i\leq t$ 
 a \emph{standard length}.
Also, we call a grid a \emph{standard grid}
if every cell of the grid is also in the standard quadtree. In other words, any grid
aligned with the standard quadtree is called a standard grid.

A \emph{compressed quadtree} on $P$ is a tree obtained by contracting
the edges incident to each node having only one child node in the standard
quadtree on $P$. It has size $O(n\log n)$ and can be constructed in $O(n\log n)$
time for any fixed dimension~\cite{har2011geometric}.
By definition, for every node $v$ in the compressed quadtree,
there is a node $w$ in the standard quadatree whose cell coincides with the cell of $v$.
We use $\octs$ and $\oct$ to denote the standard and compressed quadtrees constructed on
$P$, respectively.

For ease of description, we will first introduce our algorithm in
terms of the standard quadtree.  But the algorithm will be implemented
using the compressed quadtree to reduce the space complexity.
 To do
this, we need the following lemma. In the following,
  we use a node and its corresponding
cell of $\octs$ (and $\oct$) interchangeably.
For a cell $\cell$ of the standard
quadtree on $P$, there is a unique cell $\ccell$ of the compressed
quadtree on $P$ satisfying $\cell \cap P=\ccell\cap P$. We call this cell
the \emph{compressed cell} of $\cell$. 

\begin{lemma}[\cite{har2011geometric}]\label{lem:cell-query}
	Given a cell $\cell$ of $\octs$, we can find
	the compressed cell of $\cell$ in $O(\log n)$ time.
\end{lemma}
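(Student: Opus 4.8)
Given a cell $\cell$ of the standard quadtree $\octs$, we can find the compressed cell $\ccell$ of $\cell$ (the unique cell of $\oct$ with $\ccell\cap P=\cell\cap P$) in $O(\log n)$ time.

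The plan is to locate $\cell$ among the cells of the compressed quadtree by a point-location-style descent that uses the explicit coordinates of $\cell$ rather than following the standard quadtree structure (which we do not store). Recall that $\cell$ is specified by its level $i$ (so its side length is the standard length $2^{-i}$) together with the integer offsets of its lower corner, i.e.\ $\cell$ is determined by a $d$-tuple of dyadic coordinates of denominator $2^{i}$. The key structural fact I would use is that the compressed quadtree is \emph{laminar with respect to the standard quadtree}: every compressed cell coincides with some standard cell, and for any two compressed cells one contains the other or they are interior-disjoint. Hence there is a unique inclusion-minimal compressed cell $\ccell$ that contains $\cell$, and I claim this $\ccell$ is exactly the compressed cell of $\cell$.

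First I would justify that claim. If $\cell$ contains at most one point of $P$, then walking down from $\ccell$ one more compression step would split off the subtree containing $\cell$, contradicting minimality unless $\ccell$ already has $\ccell\cap P=\cell\cap P$; more directly, since $\ccell$ is the minimal compressed cell containing $\cell$, no child (in $\oct$) of $\ccell$ contains $\cell$, which forces $\cell\cap P$ and $\ccell\cap P$ to lie in the same ``pocket'' of the recursive decomposition, giving $\ccell\cap P=\cell\cap P$. Conversely the compressed cell of $\cell$ is a standard cell containing $\cell$ (it contains $\cell\cap P$ and is forced by the quadtree recursion to geometrically contain $\cell$ when $\cell$ has at most one point, the only case in which a compressed cell can be strictly larger than a standard cell with the same point set), so it is a compressed cell containing $\cell$ and by minimality equals $\ccell$. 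Then I would carry out the descent: start at the root of $\oct$; at a node $w$ with cell $\cell_w$, among the at most $2^{d}$ children determine (in $O(1)$ time, by comparing the dyadic coordinates of $\cell$ against the splitting hyperplanes through the center of $\cell_w$, or against the single compression interval) the child whose cell contains $\cell$; recurse into it; stop when no child contains $\cell$ and output the current cell. Each step is $O(1)$ and the depth of $\oct$ is $O(\log n)$, giving the claimed bound, provided each node stores its cell's coordinates and pointers to its children.

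The main obstacle is the descent: a compressed quadtree can have $O(\log n)$ depth but a node may be ``compressed'', so at a compressed edge the child cell can be much smaller than the center-split cells, and I must test containment of $\cell$ in that single child cell correctly and decide, when $\cell$ is \emph{not} inside it, whether $\cell$ lies in the small ``hollow'' region (the part of $\cell_w$ not covered by the compressed child) --- in which case the search terminates at $\cell_w$ --- versus the generic uncompressed case where exactly one of the $2^d$ quadrant children is selected. Handling this bookkeeping cleanly, and verifying that termination coincides exactly with having reached the compressed cell of $\cell$, is where the care goes; since this is precisely the standard cell-location primitive for compressed quadtrees, I would invoke the construction of~\cite{har2011geometric} for the routine details.
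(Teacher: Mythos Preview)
Your approach has a genuine gap: you assert that ``the depth of $\oct$ is $O(\log n)$'', but this is false in general. A compressed quadtree on $n$ points has $O(n)$ nodes (after contraction every internal node has at least two children), yet its depth can still be $\Theta(n)$---for instance, place the points so that at every branching exactly one point is split off from the remaining ones. Contracting single-child chains bounds the number of nodes, not the length of a root-to-leaf path. Consequently a top-down descent through $\oct$, spending $O(1)$ per node, does not run in $O(\log n)$ time.

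The paper does not give its own proof; it simply cites~\cite{har2011geometric}, and the mechanism there is different from a tree descent. Notice that immediately after stating the lemma the paper introduces the $\mathcal{Z}$-order and stores $P$ in an array sorted by it, with the property that for any cell of $\oct$ (indeed any cell of $\octs$) the points inside occupy a contiguous sub-array. Given $\cell$, one computes in $O(1)$ time from its dyadic coordinates the $\mathcal{Z}$-order interval that $\cell$ spans, then binary-searches the sorted array to locate the sub-array corresponding to $P\cap\cell$ in $O(\log n)$ time. The compressed cell $\ccell$ is then the smallest standard cell containing both endpoints of that sub-array (equivalently, the lowest common ancestor in $\octs$ of those two points), which is read off from their coordinates in $O(1)$ time. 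This achieves the $O(\log n)$ bound via binary search on the array, so the unbounded depth of $\oct$ is irrelevant.
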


We store the points of $P$ in an array of length $n$ in a specific
order, which is called the \emph{$\mathcal{Z}$-order} defined as
follows.  Consider a DFS traversal of $\oct$ that visits
the child nodes of each node in the same relative order. The
order of the nodes of $\oct$ in which the DFS visits is called the
$\mathcal{Z}$-order~\cite{har2011geometric}.  By definition, for any
cell $\cell$ of $\oct$, the points of $P$ contained in $\cell$ appear
consecutively in the array.

\subsection{Data Structure for Range-Emptiness Queries}\label{sec:emptiness-DS}
In our query algorithm, 
we consider a standard grid $\Gamma$ of side length $\gamma$ covering an axis-parallel hypercube of side length $\ell$.
For a given query range $Q$ and every cell $\cell$ of $\Gamma$,
we want to check whether there is a point of $P$ contained in $\cell\cap Q$ efficiently.
For this purpose, one can use a
data structure for orthogonal range-emptiness queries supporting
$O(\log^{d-1} n)$ query time~\cite{CGbook}.
Thus, the task takes $O((\ell/\gamma)^d \log^{d-1} n)$ time for all cells of $\Gamma$ in total.
Notice that $(\ell/\gamma)^d$ is the number of grid cells of $\Gamma$. 

However, we can accomplish this task more efficiently using the data
structure which we will introduce in this section.
Let $t$ be an integer with $0<t\leq d$.
 We use
$\tl$-face to denote a face with dimension smaller than $t$ among faces of a $d$-dimensional rectangle. 
Note that a $\tl$-face of a $d$-dimensional rectangle is its vertex if $t=1$.
Our data structure allows us to check whether a point of $P$ is contained in
$Q\cap \ccell$ in $O(\log^{d-t-1}n +\log n)$ time for a cell $\ccell$ of
$\oct$ intersecting no $\tl$-face of $Q$ with $0< t< d$. 
Here, we first compute the compressed cell $\ccell$ of each cell $\cell$ of $\Gamma$, and then
apply the query algorithm to $\ccell$.
Recall that $\cell\cap P$ coincides with $\ccell\cap P$ for any cell $\cell$ of $\Gamma$ and 
its compressed cell $\ccell$.
In this way, we can complete the task in
$O(\sum_{t=1}^{d-1} x_t\log^{d-t-1} n+ |\Gamma|\log n+\log^{d-1} n)$ time in total, where
$x_t$ is the number of cells of $\Gamma$ intersecting no $\tl$-face of $Q$ 
but intersecting
a $t$-dimensional face of $Q$. Notice that for any
cell $\cell$ intersecting $Q$, there is an integer $t$ such that $\cell$ intersects no
$\tl$-face of $Q$, but intersects 
a $t$-dimensional face of $Q$ unless
$\cell$ contains a corner of $Q$.
Here, $x_t$ is $O((\ell/\gamma)^t)$. Therefore, we can accomplish the task
for every cell of $\Gamma$ in $O(\log^{d-1} n +(\ell/\gamma)^d \log n)$ time in total.

For a nonempty subset $I$ of $\{1,\ldots,d\}$, the \emph{$I$-projection range
  tree} on a point set $A\subseteq\mathbb{R}^d$ is the range tree
supporting fractional cascading that is constructed on the projections
of the points of $A$ onto a $(d-t)$-dimensional hyperplane orthogonal
to the $i$th axes for all $i\in I$, where $t$ is the cardinality of $I$.

\begin{lemma}\label{lem:counting-query}
  Given the $I$-projection range tree on $P\cap \cell$ for every cell
  $\cell$ of $\oct$ and every nonempty subset $I$ of $\{1,\ldots,d\}$, we can
  check whether a point of $P$ is contained in $Q\cap\cell$ for
  any query rectangle $Q$ and any cell $\cell$ of $\oct$ intersecting
  no $\tl$-face of $Q$ in $O(\log^{d-t-1} n+\log n)$ time.
\end{lemma}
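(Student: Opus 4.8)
The plan is to reduce the query to a lower-dimensional orthogonal range-emptiness query on a projection of $P\cap\cell$; the point is that the hypothesis ``$\cell$ meets no $\tl$-face of $Q$'' forces all but at most $d-t$ of the coordinates to be irrelevant, so the projected query lives in at most $d-t$ dimensions.

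First I would set up notation. Write the projections of $\cell$ and of $Q$ onto the $i$-th coordinate axis as $[a_i,b_i]$ and $[c_i,e_i]$, respectively. If $[a_i,b_i]\cap[c_i,e_i]=\emptyset$ for some $i$, then $\cell\cap Q=\emptyset$ and the answer is ``no'', detected in $O(d)=O(1)$ time, so assume $\cell$ meets $Q$. Call coordinate $i$ \emph{trivial} if $[a_i,b_i]\subseteq[c_i,e_i]$ and \emph{nontrivial} otherwise; let $J$ be the set of nontrivial coordinates and $m=|J|$. A point $p\in P\cap\cell$ satisfies the $i$-th side constraint of $Q$ automatically whenever $i$ is trivial, so $p\in Q$ if and only if $p_i\in[c_i,e_i]$ for every $i\in J$. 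Hence $P\cap\cell\cap Q\neq\emptyset$ if and only if some point of $P\cap\cell$ has its $J$-coordinates inside the box $\prod_{i\in J}[c_i,e_i]$ --- that is, exactly an $m$-dimensional orthogonal range-emptiness query on the $I$-projection range tree of $P\cap\cell$, where $I=\{1,\ldots,d\}\setminus J$. (If $m=0$ we move one trivial coordinate into $J$, which does not change the answer, so we may assume $m\ge 1$.)

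The crux is the inequality $m\le d-t$, which I would prove by contraposition: assuming $m\ge d-t+1$, I exhibit a face of $Q$ of dimension $d-m\le t-1$ that meets $\cell$. Fix $i\in J$. Since $[a_i,b_i]$ meets but is not contained in $[c_i,e_i]$, we have $a_i<c_i$ or $b_i>e_i$; in the first case $a_i<c_i\le\min\{b_i,e_i\}\le b_i$, so $c_i\in[a_i,b_i]$, and symmetrically in the second case $e_i\in[a_i,b_i]$. Thus for each $i\in J$ there is an endpoint $q_i\in\{c_i,e_i\}$ of $Q$'s $i$-th slab with $q_i\in[a_i,b_i]$. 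Let $F$ be the face of $Q$ obtained by fixing $x_i=q_i$ for every $i\in J$; it is a nonempty face of $Q$ of dimension $d-m$. A point of $F\cap\cell$ is obtained by taking $x_i=q_i\in[a_i,b_i]$ for $i\in J$ and any $x_i\in[a_i,b_i]\subseteq[c_i,e_i]$ for $i\notin J$, so $F\cap\cell\neq\emptyset$. Hence $\cell$ meets a face of $Q$ of dimension $<t$, contradicting the hypothesis; therefore $m\le d-t$. In particular $m\le d-t\le d-1$ (as $t\ge 1$), so $I\neq\emptyset$ and the $I$-projection range tree is defined.

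It remains to bound the time. Computing $J$ (hence $I$) and the query box takes $O(d)=O(1)$ time, and an $m$-dimensional orthogonal range-emptiness query on a range tree with fractional cascading on the innermost level runs in $O(\log^{m-1}N+\log N)$ time, where $N=|P\cap\cell|\le n$ (fractional cascading shaves one logarithmic factor and the one-dimensional base level costs $O(\log N)$). Since $m\le d-t$, this is $O(\log^{d-t-1}n+\log n)$, proving the lemma; the same tree answers the corresponding range-counting query within the same bound. The only genuinely delicate step is the combinatorial claim $m\le d-t$, and within it the only subtlety is the boundary bookkeeping of which endpoint of $Q$'s slab lands inside $\cell$'s slab in a nontrivial coordinate; everything else is a routine reduction to a standard range tree.
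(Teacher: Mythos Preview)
Your proof is correct and follows essentially the same approach as the paper: both reduce the emptiness test to a $(d-t)$-dimensional range query by showing that the hypothesis forces at least $t$ coordinate directions to be ``trivial'' (no facet of $Q$ in those directions meets $\cell$), then query the corresponding $I$-projection range tree. Your treatment of the combinatorial claim is more explicit---you work directly with the coordinate intervals and exhibit the low-dimensional face of $Q$ meeting $\cell$, whereas the paper argues more tersely via intersecting facets---and you project away \emph{all} trivial coordinates (size $\ge t$) rather than exactly $t$ of them, but this is a cosmetic difference that does not change the bound.
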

\begin{proof}
  Consider a subset $I$ of $\{1,\ldots,d\}$.
  We call a facet of an axis-parallel box an \emph{$I$-facet} if it is orthogonal
  to the $i$th axis for an index $i\in I$. Note that there are exactly $2|I|$ $I$-facets of $Q$.
  For a cell $\cell$ intersecting no $\tl$-face of $Q$, we claim that there is a subset $I$ of $\{1,\ldots,d\}$ of size $t$ such that no $I$-facet of $Q$ intersects 
  $\cell$. Otherwise, there is a set $I'$ of $d-t+1$ indices such that a facet orthogonal to the
  $i'$th axis intersects $\cell$ for every $i'\in I'$. The common intersection of all such facets
is a $(t-1)$-dimensional face of $Q$, and it intersects $\cell$ since both $\cell$ and $Q$ are $d$-dimensional rectangles. This contradicts the fact that $\cell$ intersects no $\tl$-face of $Q$.
  Thus, we do not need to consider the $i$th coordinates of the
  points in $\cell$ for all $i\in I$ in testing if a point of $P$ is contained
  in $Q\cap \cell$.
  
  For a set $A$ of points in $d$-dimensional space, we use $\proj{A}$ to denote the 
  projection of $A$ onto a $(d-t)$-dimensional hyperplane orthogonal to the $i$th
  axes for all $i\in I$.
  A point of $P\cap \cell$ is contained in $Q$ if and only if 
  a point of $\proj{P\cap \cell}$ is contained in $\proj{Q}$.
  By definition, the $I$-projection range tree on $P\cap \cell$ is the
  $(d-t)$-dimensional range tree on $\proj{P\cap\cell}$.
Therefore, we can check whether a point of $\proj{P\cap\cell}$
 is contained in $\proj{Q}$ in
  $O(\log^{d-t-1} n)$ time for $t<d-1$ and in $O(\log n)$ time for $t\geq d-1$. 
\end{proof}

However, the $I$-projection range trees require $\Omega(n^2)$ space in
total if we store them explicitly.  To reduce the space complexity, we use a method of  making a
data structure \emph{partially persistent}~\cite{Driscoll-1989}.
A partially persistent data structure allows us to access any elements of
an old version of the data structure
by keeping the changes on the data structure.  Driscoll et
al.~\cite{Driscoll-1989} presented a general method of making a data
structure based on pointers partially persistent. In their method, both
time and space overheads for an update are $O(1)$ amortized, and the access time
for any version is $O(\log n)$.

\subsubsection{Construction of the \texorpdfstring{$I$}{I}-Projection Range Trees}
Consider a fixed subset $I$ of $\{1,\ldots, d\}$.
We construct the $I$-projection range trees for the cells of $\oct$ 
in a bottom-up fashion, from leaf cells to the root cell.
Note that each leaf cell of the compressed quadtree contains at most
one point of $P$. We initially construct the $I$-projection range tree for
each leaf cell of $\oct$ in total $O(n)$ time. 

Assume that we already have the $I$-projection range tree for every
child node of an internal node $v$ with cell $\cell$ of $\oct$.  Note that an internal node
of the compressed quadtree has at least two child nodes and up
to $2^d$ child nodes.
We are going to construct the $I$-projection
range tree for $v$ from the $I$-projection range trees of
the child nodes of $v$.
One may consider to merge the $I$-projection range trees for
the child nodes of $v$ into one, but we do not know any efficient
way of doing it.
Instead, we construct the $I$-projection range tree for $v$ as follows.
Let $u$ be a child node of $v$ on $\oct$ that contains the largest
number of points of $P$ in its corresponding cell among all child nodes of $v$.
We insert all points of $P$ contained in the cells for the child nodes of $v$
other than $u$ to the $I$-projection range tree of $u$ to form
the $I$-projection range tree of $v$.
Here, we do not destroy the old version of the $I$-projection range tree
of $u$ by using the method by Driscoll et al.~\cite{Driscoll-1989}. Therefore, we can
still access the $I$-projection range tree of $u$.  For the
insertion, we use an algorithm that allows us to apply fractional
cascading on the range tree under insertions of points~\cite{Mehlhorn-1990}.
We do this for every subset $I$ of $\{1,\ldots, d\}$ and construct
the $I$-projection range trees for nodes of $\oct$.

In this way, we can access any $I$-projection range tree in
$O(\log n)$ time, and therefore we can check if a point of $P$ is
contained in $Q\cap \cell$ in $O(\log^{d-t-1} n+\log n)$ time for any
query rectangle $Q$ and any cell $\cell$ of $\oct$ intersecting no
$\tl$-face of $Q$ for any integer $t$ with $0< t\leq d$ by
Lemma~\ref{lem:counting-query}.

\subsubsection{Analysis of the Construction}
The construction of the dynamic range tree~\cite[Theorem
8]{Mehlhorn-1990} requires $O(\delta\log^{d-t-1} \delta)$ space on the insertions 
of $\delta$ points in $\mathbb{R}^{d-t}$. The method by Driscoll et
al. requires only $O(1)$ overhead for each insertion on the space complexity.  Thus, the
space complexity of the $I$-projection range trees for a fixed subset $I$ consisting
of $t$ indices over all cells of $\oct$ is $O(n+ \delta\log^{d-t-1} \delta)$,
where $\delta$ is the number of the insertions performed during the
construction in total.

The update procedure for the dynamic range tree~\cite[Theorem
8]{Mehlhorn-1990} takes $O(\log^{d-t-1} n)$ time if only insertions are
allowed. The method by Driscoll requires only $O(1)$
overhead for each insertion on the update time.  Thus, the construction time is
$O(n+\delta \log^{d-t-1} n)$, where $\delta$ is the number of the
insertions performed during the construction in total.

The following lemma shows that $\delta$ is $O(n\log n)$, and thus
our construction time is
$O(n\log^{d-t} n)$ and the space complexity of the data structure is
$O(n\log^{d-t} n)$ for each integer $t$ with $0<t\leq d$. Note that there are $2^d=O(1)$
subsets of $\{1,\ldots,d\}$.
Therefore, the total space complexity and construction time are $O(n\log^{d-1} n)$.

\begin{lemma}
	For a fixed subset $I$ of $\{1,\ldots,d\}$, the total number of insertions performed during the construction of all $I$-projection range trees for every node of $\oct$ is $O(n\log n)$.
\end{lemma}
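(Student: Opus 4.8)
The plan is to use the classical \emph{heavy-path} (or \emph{smaller-half}) charging argument. For a node $v$ of $\oct$ let $n(v)$ denote the number of points of $P$ contained in the cell of $v$; since the child cells of an internal node partition its cell, we have $\sum_{c} n(c) = n(v)$, where $c$ ranges over the children of $v$. In the construction described above, when processing an internal node $v$ with heavy child $u$ (the child maximizing $n(\cdot)$), every point of $P$ lying in a child cell other than $u$'s is inserted exactly once into the $I$-projection range tree being built for $v$. Hence the total number of insertions performed during the whole construction equals $\sum_{p\in P} \ell(p)$, where $\ell(p)$ is the number of internal nodes $v$ on the root-to-leaf path of $p$ in $\oct$ such that the next node on that path is a \emph{light} child of $v$ (i.e., not the heavy child of $v$).

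The key observation is that if $w$ is a light child of $v$, then $n(v)\ge 2\,n(w)$: indeed, the heavy child $u$ of $v$ satisfies $n(u)\ge n(w)$ by definition, so $n(v)\ge n(u)+n(w)\ge 2\,n(w)$. Now fix a point $p$ and let $v_0=\text{root},v_1,\dots,v_L$ be the nodes on its root-to-leaf path in $\oct$. Whenever $v_{i+1}$ is a light child of $v_i$ we get $n(v_i)\ge 2\,n(v_{i+1})$, while $n(v_i)\ge n(v_{i+1})$ always holds. Telescoping along the path then gives $n(v_0)\ge 2^{\ell(p)}\,n(v_L)$, and since $n(v_0)=n$ and $n(v_L)\ge 1$ we conclude $\ell(p)\le \log_2 n$.

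Summing over all $n$ points yields a total of at most $n\log_2 n = O(n\log n)$ insertions, which is exactly the claim; the initial construction of the range trees at the leaf cells contributes only $O(n)$ further point placements and does not affect the bound. I do not anticipate a genuine obstacle here: the only point needing a little care is the bookkeeping that the child cells of a compressed-quadtree node really do partition the point set of the parent (so that $n(u)+n(w)\le n(v)$ for distinct children $u,w$), which follows from the defining recursive subdivision of the quadtree together with the fixed tie-breaking rule for points lying on cell boundaries. One should also note that when $v$ is a compressed node obtained by contracting a chain of single-child standard-quadtree nodes, this contraction only removes internal nodes, so it can only decrease $\ell(p)$, and the same bound applies verbatim.
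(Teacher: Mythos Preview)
Your argument is correct and is actually the more direct route. You analyze the original construction itself via the classical \emph{smaller-half} (heavy/light) trick: a point is inserted at $v$ exactly when the child containing it is not the size-maximizing child, and each such light edge at least doubles $n(\cdot)$ going upward, so no point is inserted more than $\log_2 n$ times.

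The paper instead introduces an auxiliary notion of \emph{rank} on the cells of $\oct$ (leaves have rank $0$; an internal node has rank $r$ if it has a unique child of maximal rank $r$, and rank $r+1$ otherwise). It then replaces ``insert into the largest-size child'' by ``insert into the largest-rank child,'' argues that this modified procedure performs at least as many insertions as the real one, and shows that in the modified procedure every insertion of $p$ strictly increases the rank of the cell on $p$'s root-path; since ranks are bounded by $\log n$, each point is inserted $O(\log n)$ times. In effect the paper uses a union-by-rank style analysis with an intermediate dominating process, while you use the union-by-size analysis directly on the actual process. Your version avoids the detour through the auxiliary procedure and the rank invariant; the paper's version has the minor aesthetic advantage that it isolates the combinatorial bound (rank $\le \log n$ because a rank-$k$ node has $\ge 2^k$ descendants) from the specifics of which child is chosen. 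Either way the bound and the proof length are the same.
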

\begin{proof}
	We consider a fixed subset of $\{1,2,\ldots,d\}$, and compute the 
	number of insertions performed during the construction of the $I$-projection range trees
	for all cells of $\oct$.
	We use a notion, the
	\emph{rank} of a cell of $\oct$, to analyze the number of
	insertions performed during the construction.
	Each leaf cell of $\oct$ has rank $0$. For an internal node with cell $\cell$ of $\oct$,
	let $r$ be the largest rank of the children of $\cell$ in $\oct$.
	If there is exactly one child of $\cell$ with rank $r$, 
	we set the rank of $\cell$ to $r$. Otherwise, we set the rank
	of $\cell$ to $r+1$.
	
	In the original construction, we insert all
	points in $\cell\setminus \cell'$ to the $I$-projection range tree for $\cell$,
     where
	$\cell'$ is a child of $\cell$ containing the largest number of points of $P$.
	Instead, imagine that we insert all points in $\cell\setminus \cell''$ to the 
	$I$-projection range tree for $\cell''$, where $\cell''$ is a child of $\cell$ with largest rank.
	It is clear that the number of the insertions performed for each internal node
        $\cell$ by this new procedure
        is at least the number of the insertions performed for $\cell$ by the original procedure.
	We give an upper bound on the number of the insertions performed by the new 
	procedure, which proves the lemma.
	
	We claim that each point $p\in P$ is inserted to some 
	$I$-projection range trees for internal nodes at most $O(\log n)$ times during
	the construction.  A cell of $\oct$ has rank at most $\log n$. This
	is because any cell of rank $k$ has at least $2^k$
	descendants. Assume that $p$ is inserted to an
	$I$-projection range tree. Let $\cell_1$ and $\cell_2$ be
	two child nodes (cells) of a cell $\cell$ such that $p\in\cell_1$
	and $p$ is inserted to the $I$-projection range tree for
	$\cell_2$ to form the $I$-projection range tree for their
	parent $\cell$.  There are two cases: the rank of $\cell_1$ is
	smaller than the rank of $\cell_2$, or the rank of $\cell_1$ is
	equal to the rank of $\cell_2$.
	
	In any case, the rank of $\cell$ is larger than the rank of
	$\cell_1$. This means that as you move up a path
	toward the root node of $\oct$, the rank values of the cells
	containing $p$ become larger if $p$ was inserted to
	the $I$-projection range trees of
	the cells. (The rank value remains the same or becomes larger if $p$ was not inserted.) 
	Therefore, the insertion of $p$ occurs at most
	$O(\log n)$ times in total.  Since there are $n$ points to be
	inserted, the total number of insertions is $O(n\log n)$.
\end{proof}

Therefore, we have the following lemma.

\begin{lemma}
	We can construct a data structure of size $O(n\log^{d-1} n)$ in $O(n\log^{d-1} n)$ time
	so that the emptiness of $P\cap Q\cap \cell$ can be checked in
	$O(\log^{d-t-1} n+\log n)$ for any query rectangle $Q$ and any cell $\cell$ of $\oct$ intersecting no $\tl$-face of $Q$ for an integer $t$ with $0< t\leq d$.
\end{lemma}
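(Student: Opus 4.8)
The plan is to assemble the machinery developed earlier in this section into a single structure. For each nonempty subset $I\subseteq\{1,\ldots,d\}$ of size $t$, I would build the $I$-projection range trees for all cells of $\oct$ by the bottom-up procedure described above: initialize a trivial tree at each leaf cell of $\oct$ (total $O(n)$ time), and for each internal node $v$ with cell $\cell$, take the child $u$ of $v$ whose cell contains the most points of $P$, and insert into a \emph{persistent} copy of the $I$-projection range tree of $u$ every point of $P$ lying in the other children of $v$, obtaining the $I$-projection range tree of $v$. The old version at $u$ is preserved, so each cell of $\oct$ keeps a handle to its own tree. Each insertion uses the dynamic range tree with fractional cascading of Mehlhorn and N\"aher~\cite{Mehlhorn-1990}, made partially persistent by the transformation of Driscoll et al.~\cite{Driscoll-1989}; the insertion costs $O(\log^{d-t-1} n)$ time and $O(1)$ amortized extra space, and any stored version is accessible in $O(\log n)$ time.

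Next I would account for the resources. By the preceding lemma (via the rank argument), each point of $P$ is inserted $O(\log n)$ times during the construction for a fixed $I$, so the total number $\delta$ of insertions is $O(n\log n)$. Plugging $\delta=O(n\log n)$ into the bounds of~\cite{Mehlhorn-1990}, augmented by the $O(1)$ per-operation overhead of~\cite{Driscoll-1989}, the construction for a fixed $I$ takes $O(n+\delta\log^{d-t-1} n)=O(n\log^{d-t}n)$ time and occupies $O(n+\delta\log^{d-t-1}\delta)=O(n\log^{d-t}n)$ space. Summing over the $2^d=O(1)$ nonempty subsets $I$, and using $t\ge 1$ so that $\log^{d-t}n=O(\log^{d-1}n)$ (the bound being attained by the singletons $|I|=1$), the whole data structure is built in $O(n\log^{d-1}n)$ time and has size $O(n\log^{d-1}n)$.

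For the query, I would invoke Lemma~\ref{lem:counting-query}. Given a query rectangle $Q$ and a cell $\cell$ of $\oct$ meeting no $\tl$-face of $Q$, that lemma's argument exhibits a subset $I$ of size $t$ none of whose $I$-facets of $Q$ meets $\cell$; I access the stored $I$-projection range tree of $\cell$ in $O(\log n)$ time and run a $(d-t)$-dimensional range-emptiness query of $\proj{Q}$ against $\proj{P\cap\cell}$, which takes $O(\log^{d-t-1}n)$ time (or $O(\log n)$ when $t\ge d-1$). Hence $P\cap Q\cap\cell$ is tested for emptiness in $O(\log^{d-t-1}n+\log n)$ time, as claimed.

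The delicate points are not in this final assembly but in the two black boxes it relies on: that the Mehlhorn--N\"aher dynamic range tree genuinely supports fractional cascading under insertions in $\mathbb{R}^{d-t}$ within the stated per-operation time and space, and that the Driscoll et al.\ persistence transformation applies to that pointer-based structure while keeping version access at $O(\log n)$ and amortized overhead at $O(1)$ --- in particular that it coexists with the fractional-cascading pointers. A minor but worth-stating point is the correctness of keeping a single persistent copy per internal node rooted at the heavy child: since every insertion starts from a fresh persistent handle, the version assigned to each node of $\oct$ is never overwritten, so querying any cell $\cell$ correctly recovers $\proj{P\cap\cell}$.
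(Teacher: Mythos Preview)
Your proposal is correct and follows essentially the same approach as the paper: the lemma is simply the summary of the construction and analysis already carried out in the preceding subsections, combining the bottom-up heavy-child persistent construction of the $I$-projection range trees, the $O(n\log n)$ insertion bound from the rank argument, and the query reduction of Lemma~\ref{lem:counting-query}. Your remarks on the black boxes (compatibility of fractional cascading with persistence) are apt caveats, but the paper treats these exactly as you do, by citing~\cite{Mehlhorn-1990} and~\cite{Driscoll-1989}.
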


For a cell $\cell$ containing a corner of $Q$, there is no index $t$ such that
$\cell$ intersects no $\tl$-face of $Q$. 
Thus we simply use the standard
range tree on $P$ and check the emptiness of $P\cap Q\cap \cell$ in $O(\log^{d-1}n)$ time.
Notice that there are $2^d$ cells containing a vertex of $Q$ because the cells are pairwise disjoint.

\subsection{Data Structure for Range-Counting Queries}\label{sec:DS-counting}
The data structure for range-emptiness queries described in Section~\ref{sec:emptiness-DS} can be extended
to a data structure for range-reporting queries. However, it does not seem to work for range-counting queries.
This is because the dynamic range tree with fractional cascading by Mehlhorn and N{\"a}her~\cite{Mehlhorn-1990}
does not seem to support counting queries. Instead, we use a dynamic range tree without
fractional cascading, which increases the query time and update time by a factor
of $\log n$. The other part is the same as the data structure for range-emptiness queries.
Therefore, we have the following lemma.

\begin{lemma}
	We can construct a data structure of size $O(n\log^{d-1} n)$ in $O(n\log^{d-1} n)$ time
	so that the number of points of $P$ contained in $Q\cap \cell$ can be computed in
	$O(\log^{d-t} n+\log n)$ time for any query rectangle $Q$ and any cell $\cell$ of $\oct$ intersecting no $\tl$-face of $Q$ for an integer $t$ with $0< t\leq d$.
\end{lemma}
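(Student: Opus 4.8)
The plan is to reuse the construction of Section~\ref{sec:emptiness-DS} verbatim, changing only the secondary structure stored at each cell. Concretely, for every cell $\cell$ of $\oct$ and every nonempty $I\subseteq\{1,\dots,d\}$ we again maintain the $I$-projection range tree on $P\cap\cell$, and we again build these trees bottom-up from the leaves of $\oct$ to its root: at an internal node we insert the points of the lighter children into the heaviest-rank child's tree, keeping every version accessible by the partial-persistence method of Driscoll et al.~\cite{Driscoll-1989}. The rank analysis of Section~\ref{sec:emptiness-DS} applies unchanged, so for each fixed $I$ the total number of insertions performed during the construction is $O(n\log n)$, every version is reachable in $O(\log n)$ time, and the persistence adds only $O(1)$ amortized overhead per insertion in both time and space.

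The single modification is to the one-dimensional-list machinery sitting at the bottom level of the range trees. Since the dynamic fractional-cascading structure of Mehlhorn and N\"aher~\cite{Mehlhorn-1990} does not seem to return counts, I would instead use an ordinary dynamic $(d-t)$-dimensional range tree (for instance a weight-balanced tree maintained by partial rebuilding) in which every node additionally records the number of points in its subtree; a range-counting query is then answered by summing the $O(\log n)$ canonical subtree counts produced at each of the $d-t$ levels. This is still a pointer-based structure, so the persistence transformation of~\cite{Driscoll-1989} applies exactly as before. Dropping fractional cascading costs precisely one extra $\log n$ factor: an insertion now takes $O(\log^{d-t}n)$ time and a counting query takes $O(\log^{d-t}n)$ time (and $O(\log n)$ once $d-t\le 1$), while the space occupied after $\delta$ insertions is still $O(\delta\log^{d-t-1}\delta)$, since fractional cascading never affected the asymptotic space.

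The query algorithm is then that of Lemma~\ref{lem:counting-query} with ``emptiness'' replaced by ``count.'' Given a cell $\cell$ of $\oct$ meeting no $\tl$-face of $Q$, the structural claim in the proof of Lemma~\ref{lem:counting-query} supplies a size-$t$ index set $I$ none of whose facets meets $\cell$, so the number of points of $P\cap\cell$ lying in $Q$ equals the number of points of $\proj{P\cap\cell}$ lying in $\proj{Q}$; accessing the persistent $I$-projection range tree of $\cell$ costs $O(\log n)$ and the ensuing $(d-t)$-dimensional count costs $O(\log^{d-t}n+\log n)$, giving the stated query time. For the resource bounds, summing the per-$I$ space over the $O(1)$ subsets $I$ (the $|I|=1$ term dominating) gives $O(n\log^{d-1}n)$ as in Section~\ref{sec:emptiness-DS}, and the preprocessing cost is obtained from the same $O(n\log n)$ bound on the number of insertions, now carrying one more $\log n$ factor per insertion. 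There is no essential difficulty here: the only point that genuinely needs to be pinned down is that some dynamic range tree simultaneously (i) supports range counting, (ii) supports insertions, and (iii) is a pointer machine so that the persistence of~\cite{Driscoll-1989} applies; once such a structure is named, everything else is a transcription of the range-emptiness argument with one extra logarithmic factor.
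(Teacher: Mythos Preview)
Your proposal is correct and follows essentially the same route as the paper: drop fractional cascading from the dynamic range trees of Section~\ref{sec:emptiness-DS} so that counting is available, accept the extra $\log n$ factor in query and update time, and otherwise reuse the persistent bottom-up construction and the rank argument verbatim. Your write-up is actually more explicit than the paper's one-paragraph justification, spelling out the pointer-machine requirement for persistence and the subtree-count mechanism for answering the counting query.
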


\section{\texorpdfstring{$k$}{k}-Median Range-Clustering
  Queries} \label{sec:median}
In this section, we present a data
structure and a query algorithm for $k$-median range-clustering
queries.  Given a set $P$ of $n$ points in $d$-dimensional Euclidean space for a
constant $d\geq 2$, our goal is to preprocess $P$ such that $k$-median
range-clustering queries can be answered efficiently. A $k$-median
range-clustering query consists of a $d$-dimensional axis-parallel rectangle 
$Q$, an integer $k$ with $1\leq k\leq n$, and a value $\eps>0$. We
want to find a set $C\in\pset_k$ with
$\median(P_Q,C) \leq (1+\eps) \opt(P_Q)$ efficiently, where
$P_Q=P\cap Q$.  Throughout this section, we use $\Phi$ to denote
$\median$ unless otherwise specified.

Our query algorithm is based on the single-shot algorithm by Har-Peled and
Mazumdar~\cite{Har-Peled-2004}.  
A main difficulty in the implementation for our setting is that they construct a grid
with respect to each point in an approximate center set. Then for each grid cell, they
compute the number of points of $P_Q$ contained in the grid cell.  Thus
to implement their approach in our setting directly, we need to apply
a counting query to each grid cell.  Moreover, we have to avoid
overcounting as a point might be contained in more than one
grid cell of their grid structures.

To do this efficiently without overcounting, we use a \emph{unified grid} based
on the standard quadtree. Although this grid is defined on the standard quadtree, we
use the grid on the compressed quadtree in the implementation.
To make the description easier, we use the standard quadtree
instead of the compressed quadtree in defining of the unified
grid.

\subsection{Coreset Construction from Approximate Centers}
Assume that we are given a
constant-factor approximation $A=\{a_1,\ldots,a_m\}$
  to the $k$-median clustering of $P_Q$, where $m$ is possibly larger than $k$.
In this subsection, we
present a procedure that computes a $(k,\eps)$-coreset of size
$O(|A|\log n/\eps^d)$.  We follow
the approach by Har-Peled and Mazumdar~\cite{Har-Peled-2004}
and implement it in our setting.

\subsubsection{General Strategy}\label{sec:sketch}
We describe our general strategy first, and then show how to
implement this algorithm.  For the definitions of the notations used
in the following, refer to those in Section~\ref{sec:single-shot}
unless they are given. We compute a $2\sqrt{d}$-approximation $R$ to
the maximum of $d(p,A)/ (c_1|P_Q|)$ over all points $p\in P_Q$,
that is, a value $R$ satisfying that the maximum value lies between
$R/(2\sqrt{d})$ and $2\sqrt{d}R$, where
$c_1>1$ is the approximation factor of $A$. Details can be found in Section~\ref{sec:LB}. 

Let $Q_{ij}$ be the cell of the
standard quadtree containing $a_i$ with side length $\tR_{j}$ satisfying
$R2^j\leq \tR_{j} < R2^{j+1}$ for $j=0,\ldots, M=\lceil 2\log (2\sqrt{d}c_1 |P_Q|)\rceil$.
By construction, note that $Q_{ij_1}\subset Q_{ij_2}$ for any two indices $j_1$ and $j_2$
with $j_1 < j_2$.
Note also that for any point $p$ in $P_Q$, we have at least one cell $Q_{ij}$ containing $p$
since there is a value $\tR_j$ at least four times the maximum of $d(p,A)$.

We define the \emph{grid cluster} for $Q_{ij}$ as the union of at most 
$3^d$ grid cells of the standard quadtree with side length $\tR_j$
that share their faces with $Q_{ij}$ including $Q_{ij}$.  
Note that the grid cluster
for $Q_{ij}$ contains all points of $d$-dimensional space that are within
distance from $a_i$ at most $\tR_j$. Also, every point of 
$d$-dimensional space contained in the grid cluster for $Q_{ij}$ has
its distance from $a_i$ at most $2\sqrt{d}\tR_j$.
See Figure~\ref{fig:grid}(b).
Let $V_{i0}$ denote
the grid cluster for $Q_{i0}$ and $V_{ij}$ be the grid cluster for
$Q_{ij}$ excluding the grid cluster for $Q_{i(j-1)}$.  Note that
$V_{ij}$ is the union of at most $3^{d}(2^d-1)$ cells of the standard
quadtree with side length $\tR_j/2$, except for $j=0$.
For $j=0$, the region $V_{i0}$ is the union of at most $3^d$ such cells.

The \emph{first-level grid} for a fixed index $i$ consists of all cells of the standard
quadtree with side length $\tR_j/2$ contained in $V_{ij}$.  For
an illustration, see Figure~\ref{fig:grid}(c).
We partition each cell of
the first-level grid into the cells of the standard quadtree with side
length $\bar{r}_j$ satisfying
$\eps \tR_{j}/(40c_1d) \leq \bar{r}_j \leq 2\eps \tR_{j}/(40c_1d)$.
The \emph{second-level grid} for $i$ consists of all such cells.
Let $\mathcal{V}$ be the set of all
grid cells which contain at least one point of
$P_Q$. Note that the size of $\mathcal{V}$ is $O(|A|\log n/\eps^d)$.
We will see that this set can be obtained in 
$O(|A|\log^{d}n/\eps + |A|\log n/\eps^d)$
time in Section~\ref{sec:Computing the Compressed Cells in the Grid}.

We consider the grid cells $\cell$ of $\mathcal{V}$ one
by one in the increasing order of their side lengths, and
do the followings. Let $P(\cell)$ be the set of points of $P_Q$
that are contained in $\cell$, but are not
contained in any other grid cells we have considered so far.
We compute the number of points of $P(\cell)$, and assign this number
to an arbitrary point of $P(\cell)$ as its weight.
We call this weighted point the \emph{representative} of $\cell$. 
Also, we say that a point of $P(\cell)$ is \emph{charged} to $\cell$.
Notice that every point of $P_Q$ is charged to exactly one cell of $\mathcal{V}$.
We describe the details of this procedure in Section~\ref{sec:Computing the Compressed Cells in the Grid}.
Let $S$ be the set of all such weighted points.

Although the definition of the grid is different from the one by
Har-Peled and Mazumdar~\cite{Har-Peled-2004}, we can still prove that $\mathcal{S}$ 
is a $(k,\eps)$-coreset for $P_Q$ of size
$O(|A|\log n/\eps^d)$ using an argument similar to theirs.

\begin{lemma}
  The set $S$ is a $(k,\eps)$-coreset for $P_Q$ of size $O(|A|\log n/\eps^d)$.	
\end{lemma}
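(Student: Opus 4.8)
The plan is to follow the argument of Har-Peled and Mazumdar~\cite{Har-Peled-2004} for Lemma~\ref{lem:approx-to-coreset}, adapting it to our unified grid. There are two things to establish: the size bound $O(|A|\log n/\eps^d)$, and the coreset guarantee $(1-\eps)\Phi(P_Q,C)\le\Phi(S,C)\le(1+\eps)\Phi(P_Q,C)$ for every $C\in\pset_k$. For the size bound I would count the grid cells that can contain a point of $P_Q$: for each $a_i$ there are $M+1=O(\log(2\sqrt d\,c_1|P_Q|))=O(\log n)$ indices $j$, for each $j$ the region $V_{ij}$ is the union of $O(3^d(2^d-1))=O(1)$ first-level cells, and each first-level cell is cut into $(\tR_j/2)/\bar r_j = O((c_1 d/\eps)^d)=O(1/\eps^d)$ second-level cells. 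Hence $|\mathcal V|=O(|A|\log n/\eps^d)$, and since $S$ has one representative per cell of $\mathcal V$ that actually meets $P_Q$, $|S|\le|\mathcal V|=O(|A|\log n/\eps^d)$.

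For the coreset guarantee I would reuse the key geometric fact: every point $p\in P_Q$ is \emph{charged} to exactly one cell $\cell$ of $\mathcal V$ (this is exactly how $P(\cell)$ and the representatives are defined in Section~\ref{sec:sketch}), and its representative $\pi(p)$ is the weighted point of that cell, so $\Phi(S,C)=\sum_{p\in P_Q} d(\pi(p),C)$ while $\Phi(P_Q,C)=\sum_{p\in P_Q} d(p,C)$. By the triangle inequality it therefore suffices to show $\sum_{p\in P_Q} d(p,\pi(p)) \le \eps\cdot\opt(P_Q)\le\eps\,\Phi(P_Q,C)$. To bound $d(p,\pi(p))$: since $p$ and $\pi(p)$ lie in the same cell $\cell$ of side length $\bar r_j$ (for the appropriate $j=j(p)$, the smallest index with $p\in Q_{ij}$ for the $a_i$ nearest $p$ in the partition sense of Lemma~\ref{lem:approx-to-coreset}), we get $d(p,\pi(p))\le\sqrt d\,\bar r_j \le \sqrt d\cdot 2\eps\tR_j/(40 c_1 d)\le \eps\tR_j/(20 c_1\sqrt d)$. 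Now I relate $\tR_j$ to $d(p,A)$: because $j(p)$ is the smallest index with $p$ in the grid cluster for $Q_{ij}$, the point $p$ is \emph{not} in the grid cluster for $Q_{i(j-1)}$ (when $j\ge1$), and every point inside that smaller grid cluster is within distance $\tR_{j-1}\ge R2^{j-1}$ of $a_i$; hence $d(p,a_i) > R2^{j-1}$, i.e. $\tR_j < R2^{j+1} = 4\cdot R2^{j-1} < 4\,d(p,a_i) = 4\,\phi(p,A)$ (for $j\ge1$), while for $j=0$ we have $\tR_0<2R$ and $R$ is the $2\sqrt d$-approximation of $\max_p d(p,A)/(c_1|P_Q|)$, so $\tR_0\le O(\eps\cdot)$-type terms contribute at most $O(\eps\,\opt(P_Q))$ in total over all $|P_Q|$ points (this is the standard "small scale" bucket). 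Combining, $d(p,\pi(p)) = O(\eps/c_1)\cdot\phi(p,A)$ for the large scales, and summing over $p\in P_Q$ gives $\sum_p d(p,\pi(p)) = O(\eps/c_1)\cdot\Phi(P_Q,A) + O(\eps\,\opt(P_Q))$. Since $\Phi(P_Q,A)\le c_1\opt(P_Q)$, the whole sum is $O(\eps)\cdot\opt(P_Q)$; rescaling $\eps$ by the hidden constant (which only affects $\bar r_j$ and hence constants in the size bound) yields $\sum_p d(p,\pi(p))\le\eps\,\opt(P_Q)$, and both coreset inequalities follow by the triangle inequality.

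The main obstacle I anticipate is the careful bookkeeping that our unified grid does not introduce overcounting or undercounting: in Har-Peled--Mazumdar each point lives in exactly one $P_i$ and one annulus $V_{ij}$ and one grid cell by construction, whereas here a point of $P_Q$ may lie in the grid clusters of several $a_i$'s and in several cells of $\mathcal V$ of different sizes simultaneously, and the "charge each point to the first (smallest) cell considered" rule is what restores a clean partition. I need to verify that for the cell $\cell$ to which $p$ is charged, $\cell$ still has side length $\bar r_{j(p)}$ with $j(p)$ small enough that the scale estimate $\tR_{j(p)}=O(\phi(p,A))$ above still applies — i.e. that processing cells in increasing order of side length cannot charge $p$ to a cell that is much larger than the smallest grid cluster containing $p$. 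This holds because the smallest cell of $\mathcal V$ containing $p$ is a second-level cell of scale $\bar r_{j(p)}$ for exactly the minimal admissible $j(p)$, so $p$ is charged at that scale (or a smaller one from some other $a_{i'}$, which only helps); I would spell this out and then the estimate goes through verbatim. Everything else — the triangle-inequality splitting and the geometric-series summation over scales $j$ — is routine and parallels~\cite[Section 3]{Har-Peled-2004}.
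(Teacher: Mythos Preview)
Your proposal is correct and follows essentially the same route as the paper's proof: reduce via the triangle inequality to bounding $\sum_{p\in P_Q} d(p,\pi(p))$, split into the scale-$0$ bucket (where each term is at most $\sqrt d\,\bar r_0=O(\eps R)$ and the total is $O(\eps R|P_Q|)=O(\eps\,\Phi(P_Q,A)/c_1)$) and the $j\ge1$ bucket (where the cell side length is $O(\eps/c_1)\cdot d(p,A)$), and sum to get $O(\eps)\opt(P_Q)$. The only cosmetic difference is that the paper phrases the $j\ge1$ step as ``the center $a_i$ whose grid contains the charged cell is a $2d$-approximate nearest neighbor of $p$'' (deduced from the fact that $p$ lies in no cell of $\mathcal V$ of side length $\le\bar r_j/2$), whereas you bound $d(p,\pi(p))$ via the nearest center's scale $j(p)$ and observe that being charged to any smaller cell only tightens the bound---these two viewpoints are equivalent, and your anticipated ``obstacle'' is exactly the point the paper handles with that approximate-nearest-neighbor remark.
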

\begin{proof}
  Let $Y$ be an arbitrary set of $k$ points in $d$-dimensional space. 
  For a point $p\in P_Q$, let $\bar{p}$ be the representative
  of the cell which $p$ is charged to.
  Let $\mathcal{E}=|\Phi(P_Q,Y)-\Phi(S,Y)|$. Here, $\Phi(S,Y)$ is the weighted cost function
  between $S$ and $Y$. But we consider $\bar{p}$ as an unweighted point when we
  deal with $d(p,\bar{p})$ and $\dist{\bar{p}}{Y}$.
  By definition, $\mathcal{E} \leq \sum_{p\in P_Q}|\dist{p}{Y}-\dist{\bar{p}}{Y}|$.
  By the triangle inequality, it holds that
  $\dist{p}{Y}\leq \dist{p}{\bar{p}} + \dist{\bar{p}}{Y}$ and $\dist{\bar{p}}{Y}\leq \dist{p}{\bar{p}} + \dist{p}{Y}$ for every point $p$ in $P_Q$, which implies
  $|\dist{p}{Y}-\dist{\bar{p}}{Y}| \leq \dist{p}{\bar{p}}$. 

  Consider a point $p\in P_Q$ such that the cell $\cell$ which $p$ is charged to 
  comes from $V_{i0}$ for some index $i\geq 0$.
  In this case, the side length of $\cell$ is $\bar{r}_0$, which is at most $\frac{2\eps} {40c_1 d} \tR_0 \leq \frac{4\eps }{40c_1 d}R$. Therefore we have
  $\dist{p}{\bar{p}}\leq \frac{4\eps}{40c_1d}R$, and
  the sum of $d(p,\bar{p})$ over all points $p$ in $P_Q$ belonging to this case is at most
  $\frac{4\eps}{40c_1d}R|P_Q|$, which is at most $\frac{4\eps}{40c_1}\Phi(P_Q,A)$ since
  $c_1>1$, $d\geq 2$ and $d(p,A)\leq \Phi(P_Q,A)$ for any $p\in P_Q$.
  
  Now consider a point $p\in P_Q$ such that the cell $\cell$ which $p$ is charged to comes from $V_{ij}$ for any indices $i\geq 0$ and $j >0$.
  Since $j\neq 0$, the distance between $a_i$ and $p$ is at least $\tR_j/4$. 
  The side length of $\cell$ is $\bar{r}_j$, which is at most $\frac{2\eps }{40c_1 d}\tR_j$.
  Therefore, we have 
  $\dist{p}{\bar{p}}\leq \bar{r}_j\leq \frac{8\eps}{40c_1 d}\dist{a_i}{p}$.
  Since we consider the grid cells in $\mathcal{V}$ in the increasing order of their side lengths,
  $p$ is contained in no grid cell of $\mathcal{V}$ of side length at most $\bar{r}_j/2$.
  Therefore, $a_i$ is a constant-factor approximate nearest neighbor of $p$ among the points
  of $A$. More precisely, $\dist{a_i}{p} \leq 2d\cdot \dist{p}{A}$.
  Therefore,
  the sum of $d(p,\bar{p})$ over all points $p$ in $P_Q$ belonging to this case is at most
  $\frac{16\eps}{40c_1} \sum_{p\in P_Q}d(p,A)$, which is $\frac{16\eps}{40c_1}\Phi(P_Q,A)$.
  
  Therefore, we have 
  \[\mathcal{E} \leq \sum_{p\in P_Q} d(p,\bar{p}) \leq 
  \frac{4\eps}{40c_1} \Phi(P_Q,A) + \frac{16\eps}{40c_1}\Phi(P_Q,A) \leq
  \frac{\eps}{c_1}\Phi(P_Q,A)
  \leq \eps \opt(P_Q). \]
  Then, by the definition of $(k,\eps)$-coresets, the lemma holds.
\end{proof}

We implement the algorithm using the compressed quadtree, not the
standard quadtree. We provide an implementation of the algorithm in the following
subsections.

\subsubsection{Computing an Approximation to the Average
  Radius}\label{sec:LB}
The first step is to compute a $2\sqrt{d}$-approximation $R$ to the
maximum of $d(p,A)/(c_1|P_Q|)$ over all points $p\in P_Q$, where
$c_1>1$ is the approximation factor of $A$.  More precisely, we
compute $R$ such that
$R/(2\sqrt{d}) \leq \max_{p\in P_Q} d(p,A)/(c_1|P_Q|) \leq
2\sqrt{d}R$.  We can compute it in $O(|A|\log^{d} n)$ time.

  Let $r^*$ be the maximum of $d(p,A)$ over all points $p\in P_Q$. We
  compute a $2\sqrt{d}$-approximation of $r^*$ and divide it by $c_1|P_Q|$ to compute $R$. Note
  that we can compute $|P_Q|$ in $O(\log^{d-1} n)$ time using the range
  tree constructed on $P$. Imagine that we have a standard grid with side
  length $\alpha>0$ covering $Q$.  Consider the grid cells in this grid
  each of which contains a point of $A$.
  If the union of the grid cluster of all these grid cells contains $P_Q$, it holds that
  $d(p,A)\leq 2\alpha\sqrt{d}$ for any $p\in P_Q$. Otherwise,
  $d(p,A) >\alpha$ for some $p\in P_Q$.  We use this observation to
  check whether $2\alpha\sqrt{d}\geq r^*$ or $\alpha \leq r^*$.
	
  Basically, we apply binary search on the standard lengths.
  However, there are an arbitrarily large number of distinct standard lengths.
  We consider only $O(\log n)$ distinct standard lengths for applying binary search.
  For any value $x$, we use $\sfloor{x}$ and $\sceil{x}$ to denote
  the largest standard length which is smaller than or equal to $x$, and the smallest standard
  length which is larger than or equal to $x$, respectively.
  The following lemma is used for a subprocedure in the binary search.
  
\begin{lemma}\label{lem:subproc}
  Given a standard length $\alpha$, we can check whether $\alpha$
  is at most $r^*$ or at least
  $r^*/(\alpha \sqrt{d})$ in $O(|A|\log^{d-1} n)$ time. 
\end{lemma}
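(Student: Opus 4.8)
The statement to prove (Lemma~\ref{lem:subproc}) asks: given a standard length $\alpha$, decide in $O(|A|\log^{d-1} n)$ time whether $\alpha \le r^*$ or $\alpha \ge r^*/(\alpha\sqrt d)$ (equivalently, $2\alpha\sqrt d \ge r^*$). The idea already sketched in the paragraph before the lemma is the following dichotomy: consider the standard grid $\Gamma_\alpha$ of side length $\alpha$ aligned with the quadtree. For each point $a_i \in A$, locate the cell of $\Gamma_\alpha$ containing $a_i$, and take its grid cluster (the union of at most $3^d$ cells of $\Gamma_\alpha$ sharing a face with it, as defined in Section~\ref{sec:sketch}). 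Let $U$ be the union over all $i$ of these grid clusters. Every point of $d$-space in $U$ is within distance $2\alpha\sqrt d$ of some $a_i$, and every point within distance $\alpha$ of some $a_i$ lies in $U$. Hence: if $P_Q \subseteq U$ then $r^* = \max_{p\in P_Q} d(p,A) \le 2\alpha\sqrt d$, so we may report ``$\alpha \ge r^*/(\alpha\sqrt d)$''; if $P_Q \not\subseteq U$, then some $p\in P_Q$ has $d(p,A) > \alpha$, so $r^* > \alpha$ and we report ``$\alpha \le r^*$''. Both outputs are consistent, and the two cases overlap harmlessly when $\alpha < r^* \le 2\alpha\sqrt d$.}

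\medskip\noindent\textbf{Reducing the test to counting.} The remaining work is to decide the containment $P_Q \subseteq U$ efficiently. The set $U$ is a union of $O(|A|)$ cells of $\Gamma_\alpha$; since distinct clusters may share cells, first deduplicate so that $U$ is described as a disjoint union of $N = O(|A|)$ cells $\cell_1,\dots,\cell_N$ of $\Gamma_\alpha$. Then $P_Q\subseteq U$ iff $\sum_{j=1}^{N} |P_Q\cap \cell_j| = |P_Q|$. We compute $|P_Q|$ in $O(\log^{d-1} n)$ time from the range tree on $P$. For each $\cell_j$, we need $|P \cap Q \cap \cell_j|$. Here we invoke the range-counting data structure of Section~\ref{sec:DS-counting}: first compute the compressed cell $\ccell_j$ of $\cell_j$ in $O(\log n)$ time (Lemma~\ref{lem:cell-query}), which has $\ccell_j\cap P = \cell_j\cap P$. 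If $\ccell_j$ intersects no $\tl$-face of $Q$ for some $t\ge 1$, the counting query costs $O(\log^{d-t}n+\log n) = O(\log^{d-1} n)$; if $\ccell_j$ contains a corner of $Q$ (only $O(1)$ such cells, since the $\cell_j$'s are disjoint), we fall back to the plain range tree on $P$ at cost $O(\log^{d-1} n)$. Summing over the $O(|A|)$ cells gives $O(|A|\log^{d-1} n)$ total, as required.

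\medskip\noindent\textbf{Locating the cells.} To produce the cells of $\Gamma_\alpha$ containing each $a_i$: since $\alpha = 2^{-i_0}$ is a standard length, the cell of $\Gamma_\alpha$ through $a_i$ is found by truncating the coordinates of $a_i$ to multiples of $\alpha$ (constant time per point), and its $3^d$ face-neighbors are obtained by coordinate offsets of $\pm\alpha$ — again $O(3^d) = O(1)$ per point, so $O(|A|)$ for all of $A$. Deduplication of the $O(|A|)$ resulting cells can be done by sorting their coordinate keys in $O(|A|\log|A|) = O(|A|\log n)$ time, which is within budget.

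\medskip\noindent\textbf{Main obstacle.} The only genuinely delicate point is the geometric claim justifying the dichotomy: that the grid-cluster union $U$ sandwiches the $\alpha$-neighborhood and the $2\alpha\sqrt d$-neighborhood of $A$. This is exactly the two bullet facts recorded for grid clusters in Section~\ref{sec:sketch} (``the grid cluster for $Q_{ij}$ contains all points within distance $\tR_j$ of $a_i$'' and ``every point in it is within $2\sqrt d\,\tR_j$ of $a_i$''), applied with $\tR_j$ replaced by $\alpha$; I would just restate them for $\Gamma_\alpha$. Everything else — counting via the Section~\ref{sec:DS-counting} structure, the $O(1)$ corner cells, the compressed-cell lookup, the deduplication — is routine bookkeeping, and the time bound $O(|A|\log^{d-1} n)$ falls out immediately since each of the $O(|A|)$ counting queries is $O(\log^{d-1}n)$ and all auxiliary steps are $O(|A|\log n)$ or cheaper.
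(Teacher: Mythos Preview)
Your proposal is correct and follows essentially the same approach as the paper: form the union $U$ of the $3^d|A|$ grid-cluster cells around the points of $A$, and test $P_Q\subseteq U$ by summing counts $|P\cap Q\cap\cell_j|$ over the (deduplicated) cells and comparing to $|P_Q|$. The only difference is that the paper counts each $|P\cap Q\cap\cell_j|$ with the ordinary range tree on $P$ (since $Q\cap\cell_j$ is just an orthogonal box), whereas you route the query through the compressed cell $\ccell_j$ and the Section~\ref{sec:DS-counting} structure; this detour is harmless but unnecessary here, and the simpler range-tree call already gives the $O(\log^{d-1}n)$ per-cell bound.
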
  
\begin{proof}
  We find the cells of the standard quadtree with side length $\alpha$ that
  contain $a$ in their grid clusters for each point $a$ of $A$.
  The union $U$ of all these grid clusters consists of $3^d|A|$ cells of $\octs$
  with side length $\alpha$.
  We want to check whether every point of $P_Q$ is contained in $U$.
  If so, $r^*$ is at most $2\alpha\sqrt{d}$. Otherwise, $r^*$ is at least $\alpha$.
  To do this, for each cell $\cell$ with side length $\alpha$ contained in $U$, we compute
  the number $N(\cell)$ of points of $P\cap Q$ that are contained in $\cell$ in $O(\log^{d-1}n)$ time
  using the range tree on $P$.
  Since the cells are pairwise interior disjoint, the sum of $N(\cell)$ of all cells $\cell$
  is $|P_Q|$ if and only if all points of $P_Q$ are in the union of all such cells.
  Therefore, we can check whether all points of $P_Q$ are in $U$ in $O(|A|\log^{d-1} n)$ time.
\end{proof}

We apply binary search on a set $\mathcal{L}$ of standard lengths
defined as follows.  For every pair $(p,a)$ with $p\in P$ and
$a\in A$, consider the difference $\ell$ between the $i$th
  coordinates of $p$ and $a$ for every $1\leq i\leq d$.  Let
$\mathcal{L}$ be the sorted list of $\sfloor{\ell}$ for every difference $\ell$. The size
of $\mathcal{L}$ is $d |A|n$. Imagine that we have the sorted lists
of $\mathcal{L}$.  For every iteration, we choose the median $\alpha$
of the search space of $\mathcal{L}$ and check if
$\alpha\geq r^*/2\sqrt{d}$ or $\alpha \leq r^*$. If
  $\alpha\geq r^*/(2\sqrt{d})$, we consider the lengths smaller than
  $\alpha$ in the current search space for the next iteration.
Otherwise, we consider the lengths larger than $\alpha$ for the next
iteration.  In this way, we obtain an interval $[\alpha_L,\alpha_U]$
satisfying that either $\alpha_L \leq r^*$ and
$a_U\geq r^*/(2\sqrt{d})$ or $r^*/(2\sqrt{d})\leq \alpha_U\leq r^*$ in
$O(|A|\log^d n)$ time in total. We return $\alpha_U$ as an output.
The following lemma shows that this binary search can be done in the
same time without computing $\mathcal{L}$ explicitly.

\begin{lemma}
  We can compute $\alpha_U$ in $O(|A|\log^d n)$ time after
  an $O(n\log n)$-time preprocessing on $P$.
\end{lemma}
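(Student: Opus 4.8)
The plan is to carry out the binary search on $\mathcal{L}$ without ever materializing $\mathcal{L}$ (whose size $d|A|n$ is superlinear), by representing it implicitly as a union of $O(|A|)$ sorted sequences. For the $O(n\log n)$-time preprocessing I would sort the points of $P$ by each of the $d$ coordinates separately; since $d$ is a constant this costs $O(n\log n)$ and yields $d$ sorted arrays. Then, for a fixed center $a\in A$ and coordinate $i$, a binary search in the $i$th sorted array locates the position of $a_i$ in $O(\log n)$ time and splits $P$ into the points whose $i$th coordinate is below $a_i$ and those above; along each part the value $|p_i-a_i|$ is monotone, so reading each part outward from $a_i$ gives a sequence sorted in nondecreasing order, and applying $\sfloor{\cdot}$ (monotone, computable in $O(1)$ time) keeps it sorted. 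The $j$th entry of either sequence is retrievable in $O(1)$ time from the sorted array. Over all pairs $(a,i)$ this produces $2d|A|=O(|A|)$ sorted sequences whose concatenation, as a multiset, is exactly $\mathcal{L}$; the total cost of locating the $a_i$'s is $O(|A|\log n)$, which is absorbed into the bound.

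Next I would run the binary search using Lemma~\ref{lem:subproc} as the oracle. Maintain, for each of the $O(|A|)$ sorted sequences, a contiguous range of still-active entries, the active entries numbering $d|A|n$ initially. In each round, take the middle entry of every nonempty active range, obtaining $O(|A|)$ candidate values, and compute in $O(|A|)$ time their weighted median $\alpha$, weighting the candidate from a sequence by the length of that sequence's active range. Apply Lemma~\ref{lem:subproc} to the standard length $\alpha$ in $O(|A|\log^{d-1}n)$ time: if it certifies $\alpha\ge r^*/(2\sqrt d)$, record $\alpha$ as the current $\alpha_U$ and keep in each sequence only the active entries strictly smaller than $\alpha$; if it certifies $\alpha\le r^*$, record $\alpha$ as the current $\alpha_L$ and keep only the entries strictly larger than $\alpha$ (if both certifications hold, $\alpha$ is already a valid output and we stop). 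Clipping each active range to the kept side is a binary search costing $O(\log n)$ per sequence, hence $O(|A|\log n)$ per round. Since $\alpha$ is a weighted median of the sequence medians, at least a quarter of the active entries lie at or above $\alpha$ and at least a quarter lie at or below $\alpha$; as we discard all entries equal to $\alpha$ together with one of these two sides, at least a quarter of the active entries is removed each round, so there are $O(\log(|A|n))=O(\log n)$ rounds (in our setting $|A|$ is polynomially bounded in $n$). The dominant per-round cost is the oracle call, so the whole computation runs in $O(|A|\log^d n)$ time and returns the last recorded $\alpha_U$.

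The step I expect to be the main obstacle is the pivot selection: the exact median of the $d|A|n$ entries of $\mathcal{L}$ cannot be computed within the $O(|A|\log^{d-1}n)$ per-round budget, so the analysis must rely on the weighted-median-of-medians pivot still shrinking the active set by a constant factor, which is precisely why the entries equal to $\alpha$ must be discarded along with one side (otherwise a pivot of high multiplicity could cause no progress). The remaining points are routine: checking that, for each pair $(a,i)$, the two runs obtained by scanning $P$ outward from $a_i$ partition $P$ and stay sorted after applying $\sfloor{\cdot}$, which follows from monotonicity of $t\mapsto|p_i-a_i|$ on either side of $a_i$ and of $\sfloor{\cdot}$; and confirming the correctness of this standard binary search, i.e.\ that the returned $\alpha_U$ satisfies the property stated in the preceding discussion, which holds since each round keeps a valid output either in the current active range or equal to the recorded $\alpha_U$.
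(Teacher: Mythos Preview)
Your proposal is correct and follows essentially the same approach as the paper: represent $\mathcal{L}$ implicitly as $O(|A|)$ sorted sequences (two per pair $(a,i)$), then binary-search by taking the weighted median of the sequence medians and calling the oracle of Lemma~\ref{lem:subproc}, with each round shrinking the active set by a constant factor. Your write-up is in fact more explicit than the paper's about tie-handling and the constant-fraction shrinking argument, but the method is the same.
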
  
\begin{proof}
  We apply binary search on $\mathcal{L}$ without computing it explicitly.
  As a preprocessing, we compute a balanced binary search tree on the projection of $P$ onto each axis.
  We have $d$ binary search trees, and we can compute them in $O(n\log n)$ time. This time is subsumed
  by the total construction time. 
  
  For the binary search, we locate every point of $A$ in the balanced
  binary search trees in $O(|A|\log n)$ time in total. Then we have two search spaces for each pair $(a,i)$ with $a\in A$ and $1\leq i\leq d$: the differences of the $i$th coordinates of $a$ and
  the points of $P$ lying on the $i$th axis in one direction from $a$, and the difference
  of the $i$th coordinates of $a$ and the points of $P$ lying on the $i$th axis
  in the other direction from $a$.
  (Precisely, we apply apply $\sfloor{\cdot}$ operation to each element.)
  
  We can apply binary search on each search space using the balanced binary search trees. Note that we have $O(|A|)$ search spaces. To accomplish the task more efficiently,
  we apply binary search on all search spaces together as follows.
  We choose the median for each search space, and assign the size of the search space to the median
  as its weight in $O(|A|\log n)$ time in total.
  Then we choose the weighted median $\alpha$ of the weighted medians in $O(|A|)$ time.
  Then we test whether $\alpha\geq r^*/2\sqrt{d}$ or $\alpha \leq r^*$ in $O(|A|\log^{d-1}n)$ time
  by Lemma~\ref{lem:subproc}. 
  Regardless of the result,
  the size of the total search space decreases by a constant factor.
  Therefore, in $O(\log n)$ iterations, we can obtain a desired interval in $O(|A|\log^d n)$ time in total.
\end{proof}

\begin{lemma}
  	The standard length $\alpha_U$ is a $2\sqrt{d}$-approximation to $r^*$.
\end{lemma}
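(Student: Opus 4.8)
The plan is to read the lower bound off the preceding binary‑search lemma and to obtain the matching upper bound from a single standard length of $\mathcal{L}$ that lies close to $r^*$. Write $r^*=\max_{p\in P_Q}d(p,A)$; recall that $\alpha_U$ is a $2\sqrt d$‑approximation of $r^*$ means precisely $r^*/(2\sqrt d)\le\alpha_U\le 2\sqrt d\, r^*$. By the construction the search returns an interval $[\alpha_L,\alpha_U]$ whose endpoints are consecutive standard lengths in $\mathcal{L}$ and which satisfies either (i) $\alpha_L\le r^*$ and $\alpha_U\ge r^*/(2\sqrt d)$, or (ii) $r^*/(2\sqrt d)\le\alpha_U\le r^*$. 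In both cases $\alpha_U\ge r^*/(2\sqrt d)$, so the only thing left to prove is $\alpha_U\le 2\sqrt d\, r^*$.

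Case (ii) is immediate, since there $\alpha_U\le r^*\le 2\sqrt d\, r^*$. For case (i) I would fix a point $p\in P_Q$ with $d(p,A)=r^*$ and its nearest center $a\in A$, so $r^*=\|p-a\|_2$. Because $\|p-a\|_2\le\sqrt d\,\|p-a\|_\infty$, some coordinate $i$ satisfies $r^*/\sqrt d\le|p_i-a_i|\le r^*$. Then the standard length $\beta=\sfloor{|p_i-a_i|}$ belongs to $\mathcal{L}$ by definition of $\mathcal{L}$, and $|p_i-a_i|/2<\beta\le|p_i-a_i|$ gives $r^*/(2\sqrt d)<\beta\le r^*$. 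Since $\alpha_L$ and $\alpha_U$ are consecutive in $\mathcal{L}$, the element $\beta$ cannot lie strictly between them, so either $\alpha_U\le\beta$ or $\beta\le\alpha_L$. If $\alpha_U\le\beta$ then $\alpha_U\le\beta\le r^*\le 2\sqrt d\, r^*$, and we are done.

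The remaining possibility $\beta\le\alpha_L$ is where the argument needs care, and I expect it to be the main obstacle. In this case $r^*/(2\sqrt d)<\beta\le\alpha_L$, and from (i) also $\alpha_L\le r^*$, so $\alpha_L$ itself is already a $2\sqrt d$‑approximation; the issue is to control the \emph{next} standard length $\alpha_U$ above it. To close this I would use that the $O(\log n)$ standard lengths retained for the binary search are chosen with bounded granularity — consecutive retained lengths differ by a factor at most $2\sqrt d$ (consecutive powers of two suffice) — whence $\alpha_U\le 2\sqrt d\,\alpha_L\le 2\sqrt d\, r^*$. If instead one keeps in $\mathcal{L}$ only the raw floored values $\sfloor{|p_i-a_i|}$, this step must be justified by revisiting the probe of the search: any length $\alpha\ge r^*$ is reported $\textsc{big}$ because the grid clusters of the side‑$\alpha$ quadtree cells containing the centers of $A$ already cover $P_Q$ (each point of $P_Q$ is within $r^*\le\alpha$ of some center), so the search always recurses downward past such lengths and the returned $\alpha_U$ cannot stabilize above the smallest retained length that is at least $r^*$; together with $\alpha_L<r^*$ this again bounds $\alpha_U$ by a small multiple of $r^*$. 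Either way the content of the lemma is the interplay between the two‑sided guarantee of the search and the granularity of $\mathcal{L}$ near $r^*$.
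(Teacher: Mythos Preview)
Your proposal correctly isolates the same witness $\beta=\sfloor{\|p-a\|_\infty}\in\mathcal{L}\cap(r^*/(2\sqrt d),r^*]$ that the paper uses, but the argument breaks down precisely where you flagged it, in the sub-case $\beta\le\alpha_L$ of case~(i). Your first suggested fix---that consecutive elements of $\mathcal{L}$ differ by at most a factor $2\sqrt d$---is false: $\mathcal{L}$ is just the set of values $\sfloor{|p_i-a_i|}$ over all pairs and coordinates, and nothing prevents arbitrarily large multiplicative gaps between consecutive elements. Your second fix is correct in its premise (every standard length $\alpha\ge r^*$ is indeed reported ``covered'', so the search recurses left past it), but the conclusion you draw from it is only that $\alpha_U$ is at most the smallest element of $\mathcal{L}$ that is $\ge r^*$. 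That does not give $\alpha_U\le 2\sqrt d\,r^*$: there need not be any element of $\mathcal{L}$ in $[r^*,2\sqrt d\,r^*]$, and if the element of $\mathcal{L}$ immediately above $\alpha_L$ happens to be $100r^*$, your bound is vacuous.

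The paper avoids this dead end by reorganising the case split. Instead of accepting case~(i) and trying to bound $\alpha_U$ within it, the paper argues that case~(i) without case~(ii) is impossible. The key claim (asserted ``by construction'' of the binary search) is that whenever $\mathcal{L}$ contains some element in the interval $[r^*/(2\sqrt d),r^*]$, the search necessarily terminates with $\alpha_U$ in that interval, i.e.\ in case~(ii). Granting this, your coordinate-projection computation supplies exactly such an element $\beta$, and one obtains a contradiction with the assumption ``case~(i) but not case~(ii)''. So the missing idea is not a granularity bound on $\mathcal{L}$, nor a direct upper bound on $\alpha_U$, but the structural fact about the binary search: an element of $\mathcal{L}$ inside the ambiguous zone forces the output $\alpha_U$ into that zone. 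Once you use that, the sub-case $\beta\le\alpha_L$ simply does not arise.
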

\begin{proof}
  We already showed that the interval $[\alpha_L, \alpha_U]$ satisfies
  one of the following conditions: either $\alpha_L \leq r^*$ and
  $a_U\geq r^*/(2\sqrt{d})$ or $r^*/(2\sqrt{d})\leq \alpha_U\leq r^*$.
  For the latter case, the lemma holds immediately. If there is at
  least one standard length in $\mathcal{L}$ lies between
  $r^*/(2\sqrt{d})$ and $r^*$, the output interval belongs to the
  latter case by construction.  Thus assume there is no such standard
  length in $\mathcal{L}$, and $[\alpha_L,\alpha_U]$ belongs to the
  former case.
  
  Let $(p,a)$ be a pair with $p\in P_Q$ and $a\in A$ such that
  $d(p,a)$ is the maximum $r^*$ of $d(p,A)$ for all points of $p$. Let $i$ be an integer
  with $1\leq i\leq d$ that maximizes the length $\ell$
  of the projection of the segment $\overline{pa}$ onto the $i$th axis.
  We have $r^*/\sqrt{d}\leq \ell\leq r^*$.
  By the construction, $\alpha=\sfloor{\ell}$ is in $\mathcal{L}$.
  We have
  $r^*/(2\sqrt{d}) \leq \alpha \leq r^*$. This contradicts the assumption
  that no standard length of $\mathcal{L}$ lying between $r^*/(2\sqrt{d})$ and $r^*$.
  Therefore, $\alpha_U$ is a $2\sqrt{d}$-approximation to
  $r^*$, and the lemma holds.
\end{proof}

\begin{lemma}
  We can compute a $2\sqrt{d}$-approximation to the maximum of
  $d(p,A)/(c_1|P_Q|)$ for all points $p$ in $P_Q$ in $O(|A|\log^d n)$
  time.
\end{lemma}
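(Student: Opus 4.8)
The plan is simply to assemble the two preceding lemmas. First I would compute the standard length $\alpha_U$ in $O(|A|\log^d n)$ time by the lemma stating that $\alpha_U$ can be computed within that bound, relying on its $O(n\log n)$-time preprocessing on $P$ (the balanced binary search trees on the projections of $P$ onto the coordinate axes), a cost that is subsumed by the overall construction time. By the lemma stating that $\alpha_U$ is a $2\sqrt d$-approximation to $r^*$, we have $\alpha_U/(2\sqrt d) \leq r^* \leq 2\sqrt d\,\alpha_U$, where $r^* = \max_{p\in P_Q} d(p,A)$.

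Next I would compute $|P_Q|$ with a single range-counting query on the range tree built on $P$, which takes $O(\log^{d-1} n)$ time. Setting $R = \alpha_U/(c_1|P_Q|)$ and dividing the inequality above through by the fixed positive quantity $c_1|P_Q|$ yields
\[ \frac{R}{2\sqrt d}\;\leq\; \max_{p\in P_Q}\frac{d(p,A)}{c_1|P_Q|}\;\leq\; 2\sqrt d\,R, \]
so $R$ is the claimed $2\sqrt d$-approximation. The running time is dominated by the computation of $\alpha_U$, namely $O(|A|\log^d n)$, since the range-counting query and the single division are lower-order terms. There is no genuine obstacle here: the statement is a bookkeeping corollary of the two preceding lemmas, and the only point to check is that scaling by a fixed positive constant preserves the multiplicative approximation factor, which is immediate.
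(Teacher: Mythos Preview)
Your proposal is correct and matches the paper's approach exactly: the lemma in the paper is stated without a separate proof, serving as a summary of the preceding development, and you have assembled the pieces (compute $\alpha_U$ in $O(|A|\log^d n)$ time, note it is a $2\sqrt d$-approximation to $r^*$, compute $|P_Q|$ via a range-count in $O(\log^{d-1} n)$ time, and divide) just as the surrounding text intends.
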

\subsubsection{Computing the Compressed Cells in the
  Grid}\label{sec:Computing the Compressed Cells in the Grid}
As described in Section~\ref{sec:sketch}, we construct the
second-level grid for each index $i$ for $i=1,\ldots,m$, and check
whether each grid cell contains a point of $P_Q$. The set of the grid
cells in the second-level grids containing a point of $P_Q$ is denoted
by $\mathcal{V}$.  Then we consider the grid cells $\cell$ of
$\mathcal{V}$ one by one in the increasing order of their side
lengths, and compute the number of points of $P_Q$ contained in
$\cell$, but not contained in any other grid cells we have considered
so far. Computing this number is quite tricky.

To handle this problem, we observe that for any two cells in
$\mathcal{V}$, either they are disjoint or one is contained in the
other.  This is because they are cells of the standard quadtree.
For two cells $\cell_1$ and $\cell_2$ with $\cell_1\subseteq \cell_2$, let
$i_1$ and $i_2$ be the indices such that $\cell_1$ and $\cell_2$ are
grid cells of the second-level grids for $i_1$ and $i_2$,
respectively.  Since the grid cells in the same second-level grid
are pairwise interior disjoint, we have $i_1\neq i_2$. In this
case, for any point $p\in \cell_2$, 
there is another grid cell
$\cell_1'$ containing $p$ in the second-level grid for $i_1$ with side length smaller
than the side length of $\cell_2$.
Therefore, we do not consider any
cell of $\mathcal{V}$ containing another cell of $\mathcal{V}$.
Imagine that we remove all such cells from $\mathcal{V}$. Then the
cells of $\mathcal{V}$ are pairwise interior disjoint. Therefore, if
suffices to compute the number of points of $P_Q$ contained in each
cell of $\mathcal{V}$, which can be done efficiently using the data
structure described in Section~\ref{sec:DS}.

In the following, we show how to compute the set $\mathcal{V}$ after
removing all cells containing another cell efficiently. To do this, we
first compute the cells in the first-level grids, and discard some of
them. Then we subdivide the remaining cells into cells in the
second-level grids. More specifically, let $\mathcal{V}_1$ be the set
of the cells of the first-level grids. We
first compute the cells in $\mathcal{V}_1$, and then remove all cells
in $\mathcal{V}_1$ containing another cell in
$\mathcal{V}_1$.  Then the cells in $\mathcal{V}_1$ are pairwise
interior disjoint.  And then we compute the second-level grid cells in
each cell of $\mathcal{V}_1$.  The second-level grid cells we
obtain are the cells of $\mathcal{V}$ containing no other cell in
$\mathcal{V}$.  Also, in the following, to apply
Lemma~\ref{lem:counting-query}, we consider the compressed cells
instead of the cells in the standard quadtree.

\subparagraph{First-Level Grid.}
We compute the cells of the first-level grid for every index $i$.
There are $O(|A|\log n)$ cells of the first-level grids in total.
We compute them in $O(|A|\log n)$
time and compute the compressed cell for each cell in
$O(|A|\log^2 n)$ time in total by Lemma~\ref{lem:cell-query}.  We
remove all compressed cells containing another compressed
cells in $O(|A|\log^2 n)$ time using the following lemma.

\begin{lemma}
  We can find all compressed cells of the cells of the first-level
  grids containing another compressed cells in $O(|A|\log^2 n)$ time
  in total.
\end{lemma}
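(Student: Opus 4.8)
The plan is to observe that the compressed cells in question form a laminar family and reduce the task to interval containment on $O(|A|\log n)$ intervals. Recall that the compressed cell of a first-level grid cell $\cell$ is the cell of a node of $\oct$, and that for two nodes $u,v$ of $\oct$ the cell of $u$ properly contains the cell of $v$ exactly when $u$ is a proper ancestor of $v$ in $\oct$; in particular, any two such cells are either interior-disjoint or nested. Thus the collection $\mathcal{F}$ of the $O(|A|\log n)$ compressed cells of the first-level grid cells is laminar, and a member of $\mathcal{F}$ contains another member precisely when its node in $\oct$ is a proper ancestor of the node of another member. To make this checkable, I would precompute, once and for all, the DFS entry and exit numbers $\mathrm{in}(v)$ and $\mathrm{out}(v)$ of every node $v$ of $\oct$ for the DFS traversal defining the $\mathcal{Z}$-order; this takes $O(n\log n)$ time and is subsumed by the construction time of the data structure. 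Each member of $\mathcal{F}$ then carries the integer interval $[\mathrm{in}(v),\mathrm{out}(v)]$ of its node, and, since $\oct$ is a tree, for distinct nodes $u,v$ the cell of $u$ properly contains the cell of $v$ if and only if $[\mathrm{in}(u),\mathrm{out}(u)] \supsetneq [\mathrm{in}(v),\mathrm{out}(v)]$ (and two first-level grid cells with the same compressed cell simply yield equal intervals).

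Now the task is: given $N = O(|A|\log n)$ integer intervals that are pairwise nested-or-disjoint, report every interval that contains at least one other interval of the collection. I would sort the intervals in $O(N\log N)$ time by left endpoint increasing, ties broken by right endpoint decreasing, and then sweep them in this order maintaining a stack that at all times holds the currently open intervals as a nested descending chain, innermost on top. When processing an interval $I$: pop from the stack all intervals whose right endpoint is smaller than the left endpoint of $I$; then, if the stack is still nonempty, mark its top interval as ``contains another cell''; finally push $I$. Each interval is pushed and popped at most once, so the sweep runs in $O(N)$ time after the sort, and the marked intervals are exactly the ones we must remove from $\mathcal{F}$.

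For correctness one argues that laminarity keeps the stack a nested chain at every step, so after the pops all surviving stack intervals contain $I$ and its top is the smallest member of $\mathcal{F}$ strictly containing $I$; conversely, if a member $C$ of $\mathcal{F}$ strictly contains some member $C'$, then $C$ is the stack top at the moment the member of $\mathcal{F}$ immediately below $C$ in the containment chain above $C'$ is processed, so $C$ gets marked. For the running time, note that $|A| = O(k\log^3 n)$ is polynomially bounded in $n$, hence $\log N = O(\log n)$ and the sort costs $O(N\log N) = O(|A|\log^2 n)$; together with the $O(|A|\log^2 n)$ already spent locating the compressed cells via Lemma~\ref{lem:cell-query}, this gives the claimed bound. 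The one place needing care --- and the main obstacle --- is the equivalence between cell containment and ancestry in the \emph{compressed} quadtree (as opposed to the standard one), together with making sure the reduction is robust when a compressed cell contains no point of $P$ or is shared by several first-level grid cells; working with the DFS numbers of $\oct$ rather than with positions of points in the $\mathcal{Z}$-order array sidesteps both difficulties, since every node of $\oct$ has well-defined DFS numbers and coinciding compressed cells just produce equal intervals, one of which is marked --- exactly the behaviour we want before subdividing the survivors into second-level grid cells.
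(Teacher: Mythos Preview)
Your proposal is correct and follows essentially the same approach as the paper: both reduce the question to interval containment by representing each compressed cell as the contiguous range it occupies in the $\mathcal{Z}$-order/DFS traversal of $\oct$, then process the $O(|A|\log n)$ resulting intervals in $O(|A|\log^2 n)$ time. The only differences are cosmetic: the paper sorts by side length and uses an interval tree to test, for each cell in turn, whether its interval contains a previously inserted one, whereas you sort by left endpoint and use a stack sweep to mark immediate parents; both succeed because the intervals are laminar.
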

\begin{proof}
  Recall that a cell of the compressed quadtree can be represented as
  an interval using the $\mathcal{Z}$-order. The description of this order
  is given in Section~\ref{sec:quad} of Appendix.  Let
  $\langle\ccell_1,\ldots,\ccell_{k'}\rangle$ be the sequence of the
  compressed cells of the cells of the first-level grids in the
  increasing order of their side lengths.  For each index $t$ with
  $1\leq t\leq k'$, we check whether there is an index $t'<t$ with
  $\ccell_{t'}\subseteq \ccell_t$. To do this, we consider the cells
  from $\ccell_1$ to $\ccell_{k'}$ and maintain an interval tree
  $\mathcal{I}$. The interval tree contains all intervals corresponding to the cells
  we have considered so far. Since the sequence of the insertions to the interval
  tree is known in advance, each insertion can be done in $O(\log n)$ time.  
  
  To check whether there is an index $t'$ with
  $\ccell_{t'}\subseteq \ccell_t$ for some index $t$, we check whether
  the interval corresponding to $\ccell_t$ contains another interval
  in $\mathcal{I}$. This can be done in $O(\log n)$ time. Since there are
  $O(|A|\log n)$ cells of the first-level grids, we can find all
  compressed cells of the cells of the first-level grids contained in
  another compressed cells in $O(|A|\log^2 n)$ time in total.
\end{proof}

The resulting grid cells are pairwise disjoint and contain $P_Q$ in their union.
But it is possible that a grid cell does not contain a point of $P_Q$.

\subparagraph{Second-level Grids.}  For each compressed cell $\ccell$
of the cells of the first-level grids, we compute the second-level
grids constructed from it. To do this, we traverse the subtree of
$\ccell$ of $\oct$ towards its leaf nodes. More precisely, let
$\mathcal{V}$ be the singleton set containing $\ccell$.  We pick the
largest cell of $\mathcal{V}$, and insert its children to
$\mathcal{V}$.  We do this until the largest cell of $\mathcal{V}$ has
side length at most $\bar{r}_j$ assuming that $\ccell$ comes from a
grid cluster $V_{ij}$.  Notice that some of them may not
  intersect $Q$. This takes time linear in the number of the grid cells
  in the second-level grids.

\subparagraph{Range-Counting for Each Compressed Cell.}  The next step
is to compute the number of points of $P_Q$ contained in each
cell $\ccell$ in the second grids.  If $\ccell$ is contained in $Q$, we already
have the number of points of $P_Q$ contained in $\ccell$, which is
computed in the preprocessing phase.  If $\ccell$ contains a corner of
$Q$, we use the range tree constructed on $P$.  Since there are $O(1)$
such cells, we can handle them in $O(\log^{d-1}n)$ time in total.  For
the remaining cells, we use the data structure in
Section~\ref{sec:DS-counting}.  Then we can handle them in
$O(\sum_{t=1}^{d-1} m_t \log^{d-t} n)$ time, where $m_t$ is the number
of the cells of $\mathcal{V}$ intersecting no $\tl$-face of $Q$ but
intersecting a $t$-dimensional face of $Q$ for an integer with
$0<t< d$.  We have $m_t=O(|A|\log n/\eps^{t})$.  Therefore, the total
running time for the range-counting queries is
$O(|A|\log^2 n/\eps^{d-1} + |A|\log^d n/\eps + \log^{d-1} n + |A|\log
n/\eps^d )$ in total, which is
$O(|A|\log^d n/\eps + |A|\log n/\eps^d )$.

Therefore, we have the following lemma.
\begin{lemma}\label{lem:approx-center-to-coreset}
  Given a constant-factor approximation $A$ to the $k$-median clustering
    of a set $P$ of
  $n$ points in $d$-dimensional space such that $|A|$ is possibly larger than $k$,
  we can compute a $(k,\eps)$-coreset of $P_Q$ of size $O(|A|\log n/\eps^d)$
  in $O(|A|\log^d n/\eps + |A|\log n/\eps^d )$ time
  for any rectangle $Q$, any integer $k$ with $1\leq k\leq n$ and any value $\eps>0$.
\end{lemma}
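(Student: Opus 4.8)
The plan is to assemble Lemma~\ref{lem:approx-center-to-coreset} directly from the pieces developed in the preceding subsections, since the hard technical work (the new range-counting data structure, the unified grid, and the approximation to the average radius) is already in place. First I would invoke the result of Section~\ref{sec:LB} to compute, in $O(|A|\log^d n)$ time, a $2\sqrt{d}$-approximation $R$ to $\max_{p\in P_Q} d(p,A)/(c_1|P_Q|)$; this $R$ determines the side lengths $\tR_j$ with $R2^j\le \tR_j<R2^{j+1}$ of the cells $Q_{ij}$ of the standard quadtree, and hence the grid clusters $V_{ij}$, the first-level grid, and the second-level grid (with cell side lengths $\bar r_j$ between $\eps\tR_j/(40c_1d)$ and $2\eps\tR_j/(40c_1d)$) for each center $a_i$, $i=1,\dots,m$. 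The number of second-level cells per center is $O(\log n/\eps^d)$, so $|\mathcal V|=O(|A|\log n/\eps^d)$.

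Next I would account for the time to materialize $\mathcal V$ (as compressed cells) and to carry out the range-counting, exactly as described in Section~\ref{sec:Computing the Compressed Cells in the Grid}. There are $O(|A|\log n)$ first-level cells; computing their compressed cells via Lemma~\ref{lem:cell-query} costs $O(|A|\log^2 n)$, and removing those that contain another first-level compressed cell (using the $\mathcal Z$-order interval representation and an interval tree with a known insertion sequence) costs another $O(|A|\log^2 n)$. Subdividing each surviving first-level compressed cell into second-level cells by walking down $\oct$ takes time linear in $|\mathcal V|$. The crucial point, already argued, is that after discarding cells of $\mathcal V$ that contain another cell of $\mathcal V$ the remaining cells are pairwise interior-disjoint, so charging is well-defined and it suffices to do one counting query per remaining cell. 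The counting cost splits by which face of $Q$ a cell meets: $O(1)$ corner cells via the standard range tree in $O(\log^{d-1}n)$ total; cells contained in $Q$ for free (counted in preprocessing); and for a cell meeting a $t$-dimensional face but no $\tl$-face of $Q$, the data structure of Section~\ref{sec:DS-counting} answers in $O(\log^{d-t}n+\log n)$ time, with $m_t=O(|A|\log n/\eps^t)$ such cells. Summing $\sum_{t=1}^{d-1} m_t\log^{d-t}n$ is dominated by $t=1$ (giving $O(|A|\log^d n/\eps)$) and the additive $\sum_t m_t\log n$ term is $O(|A|\log n/\eps^{d-1})$, so the whole coreset construction runs in $O(|A|\log^d n/\eps+|A|\log n/\eps^d)$ time.

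Finally I would note that correctness — that the resulting weighted set $S$ is a $(k,\eps)$-coreset for $P_Q$ of size $O(|A|\log n/\eps^d)$ — is exactly the content of the preceding lemma (the one proved via the bound $\mathcal E\le\sum_{p\in P_Q}d(p,\bar p)\le \eps\,\opt(P_Q)$), so nothing new is needed there; one only observes that the implementation over the compressed quadtree computes the same charging as the conceptual construction over the standard quadtree, because $\cell\cap P=\ccell\cap P$ for every standard cell $\cell$ and its compressed cell $\ccell$. The main obstacle is purely bookkeeping: one must be careful that the ``increasing order of side lengths'' charging over $\mathcal V$ matches what the first-level-then-second-level procedure produces, i.e.\ that removing first-level cells containing other first-level cells, and then only then subdividing, yields precisely the set of minimal cells of $\mathcal V$ and hence the same representatives and weights; everything else is a routine summation of the per-step costs listed above.
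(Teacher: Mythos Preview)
Your proposal is correct and follows essentially the same assembly as the paper: compute $R$ via Section~\ref{sec:LB}, build the first-level cells and prune containments via the $\mathcal{Z}$-order interval argument, subdivide to the second level, and charge the range-counting cost by the face-dimension parameter $t$ using the data structure of Section~\ref{sec:DS-counting}, with correctness inherited from the preceding coreset lemma. The only slip is that $\sum_t m_t\log n$ is $O(|A|\log^2 n/\eps^{d-1})$ rather than $O(|A|\log n/\eps^{d-1})$, but this is still absorbed by $O(|A|\log^d n/\eps+|A|\log n/\eps^d)$, so the final bound is unaffected.
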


\subsection{Smaller Coreset}
Due to Lemma~\ref{lem:constant-coreset}, we can obtain a
$(k,2)$-coreset $S$ of $P_Q$ of size $O(k\log^{d} n)$ in
$O(k\log^{d} n)$ time for any query rectangle $Q$ using a data structure
of size $O(n\log^d n)$. 
A $(k,c)$-coreset of $S$ is also a $(k,2c)$-coreset of $P_Q$ for any constant $c>1$ 
by the definition of the coreset. We compute a $(k,2)$-corset $S'$ of $S$,
which is a $(k,4)$-coreset of $P_Q$,   
of size $O(k\log n)$ in $O(k\log^{d} n+k^5 \log^9 n)$ time
by~\cite[Lemma 5.1]{Har-Peled-2004} by setting
$\eps=2$.

Using this $(k,4)$-coreset of size $O(k\log n)$ of $P_Q$, we can obtain 
constant-factor approximate centers of size $k$ as Har-Peled and
Mazumdar~\cite{Har-Peled-2004} do. 
We compute a constant-factor $k$-center clustering $\pset_0$ of the coreset using~\cite{Gon1985}.
Then we apply the local search
algorithm due to Arya et al.~\cite{Arya-2004} to $\pset_0$ and $S'$ to
obtain a constant-factor approximation to $\opt(S)$. This
takes $O(|S'|^2 k^3 \log n)=O(k^5\log^3 n)$ time, and finally $\mathcal{C}_0$
becomes a constant-factor approximation to $\opt(S)$ of size $k$~\cite{Har-Peled-2004}.
Therefore, we can compute a $(k,\eps)$-coreset of size
$O(k\log n/\eps^d)$ using Lemma~\ref{lem:approx-center-to-coreset}
using the constant-factor approximation $\mathcal{C}_0$ to $\opt(S)$ of size $k$.

\begin{lemma}
  Given a query range $Q\subseteq\mathbb{R}^d$, an integer $k$ with $1\leq k\leq n$, 
  and a value $\eps>0$ as a query, we can compute a $(k,\eps)$-coreset
  of $P_Q$ for the $k$-median range-clustering of size $O(k\log n/\eps^d)$
  in $O(k^5\log^9 n+ k\log^d n/\eps + k\log n/\eps^d )$
  time.
\end{lemma}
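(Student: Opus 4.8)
The plan is to assemble the pieces already developed in this section into a single pipeline and bound its running time term by term. Recall the target: given $Q$, $k$, and $\eps$, produce a $(k,\eps)$-coreset of $P_Q$ of size $O(k\log n/\eps^d)$ in $O(k^5\log^9 n + k\log^d n/\eps + k\log n/\eps^d)$ time. First I would invoke Lemma~\ref{lem:constant-coreset} with the pre-chosen constant $\delta=2$ to obtain a $(k,2)$-coreset $S$ of $P_Q$ of size $O(k\log^d n)$ in $O(k\log^d n)$ time; this uses only the data structure of size $O(n\log^d n)$ whose construction does not depend on $k$ or $\eps$. Then I would feed $S$ into the smaller-coreset routine of Har-Peled and Mazumdar (\cite[Lemma~5.1]{Har-Peled-2004}) with its parameter set to $2$, obtaining a $(k,2)$-coreset $S'$ of $S$ of size $O(k\log n)$; since $S$ is a $(k,2)$-coreset of $P_Q$ and coresets compose multiplicatively, $S'$ is a $(k,4)$-coreset of $P_Q$. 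The cost of this step is $O(|S|) + O(k^5\log^9 n)$ by their analysis applied to $|S|=O(k\log^d n)$ input points, i.e.\ $O(k\log^d n + k^5\log^9 n)$.

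Next I would extract a constant-factor approximate center set of size $k$ from $S'$ exactly as in the single-shot algorithm: run the Gonzalez algorithm~\cite{Gon1985} to get a constant-factor $k$-center clustering $\pset_0$ of $S'$, then run the Arya et al.~\cite{Arya-2004} local search on $\pset_0$ and $S'$ to get a constant-factor approximation $\mathcal{C}_0$ to $\opt(S')$ (hence to $\opt(P_Q)$ up to a constant) consisting of exactly $k$ centers. Since $|S'|=O(k\log n)$, the local-search cost is $O(|S'|^2 k^3\log n) = O(k^5\log^3 n)$, and Gonzalez is cheaper. Finally I would invoke Lemma~\ref{lem:approx-center-to-coreset} with $A=\mathcal{C}_0$, so $|A|=k$: this produces a $(k,\eps)$-coreset of $P_Q$ of size $O(|A|\log n/\eps^d) = O(k\log n/\eps^d)$ in time $O(|A|\log^d n/\eps + |A|\log n/\eps^d) = O(k\log^d n/\eps + k\log n/\eps^d)$.

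Summing the four contributions gives $O(k\log^d n) + O(k\log^d n + k^5\log^9 n) + O(k^5\log^3 n) + O(k\log^d n/\eps + k\log n/\eps^d)$, and absorbing the lower-order terms ($k\log^d n$ into $k\log^d n/\eps$ since $\eps$ may be large but the problem is trivial then — more carefully, $k\log^d n \le k\log^d n/\eps$ only when $\eps\le 1$, so I would instead just note $k\log^d n = O(k^5\log^9 n)$ outright, and $k^5\log^3 n = O(k^5\log^9 n)$) yields the claimed bound $O(k^5\log^9 n + k\log^d n/\eps + k\log n/\eps^d)$. The correctness of the output size and the coreset property follow from the corresponding guarantees in Lemma~\ref{lem:approx-center-to-coreset} together with the fact that $|\mathcal{C}_0|=k$, so no new approximation analysis is needed here.

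The only genuinely delicate point — and the step I expect to require the most care — is verifying that $\mathcal{C}_0$ is indeed a \emph{constant-factor} approximation to the $k$-median cost of $P_Q$ (not merely of $S'$), and that the chain of coreset qualities ($S$ is $(k,2)$ for $P_Q$, $S'$ is $(k,2)$ for $S$, hence $(k,4)$ for $P_Q$) is tight enough that the constants in Lemma~\ref{lem:approx-center-to-coreset}'s hypothesis are met; this is where I would lean directly on the single-shot analysis of Har-Peled and Mazumdar, since the composition of approximation factors and the local-search guarantee are established there. Everything else is bookkeeping of running times against the three stated terms.
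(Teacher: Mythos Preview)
Your proposal is correct and follows essentially the same pipeline as the paper: obtain a $(k,2)$-coreset $S$ via Lemma~\ref{lem:constant-coreset}, shrink it to a $(k,4)$-coreset $S'$ of size $O(k\log n)$ using \cite[Lemma~5.1]{Har-Peled-2004}, run Gonzalez followed by Arya et al.\ local search on $S'$ to extract a size-$k$ constant-factor center set, and finally apply Lemma~\ref{lem:approx-center-to-coreset} with $|A|=k$. Your running-time bookkeeping and the absorption of lower-order terms match the paper's analysis, and the delicate point you flag (that $\mathcal{C}_0$ is a constant-factor approximation for $P_Q$, not just $S'$) is exactly the one the paper defers to the single-shot analysis of Har-Peled and Mazumdar.
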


\begin{theorem}
	Let $P$ be a set of $n$ points in $d$-dimensional space. There is a
	data structure of size $O(n\log^d n)$ such that given a query
	range $Q\subseteq\mathbb{R}^d$, an integer $k$ with $1\leq k\leq n$,
	and a value $\eps>0$ as a query, an
	$(1+\eps)$-approximation to the $k$-median range-clustering of
	$P\cap Q$ can be computed in $O(k^5\log^9 n+k\log^d /\eps +T_\textnormal{ss}(k\log n/\eps^d))$ time,
	where $T_{\textnormal{ss}}(N)$ denotes the running time of an $(1+\eps)$-approximation
	single-shot algorithm for the $k$-median clustering of $N$ weighted input points.
\end{theorem}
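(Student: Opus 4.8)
The plan is to answer a query by reducing to a small weighted instance: first extract a coreset of $P_Q$ using the machinery developed in Section~\ref{sec:median}, then run an off-the-shelf single-shot approximation algorithm on it. The data structure is the union of three components built on $P$, each in $O(n\log^d n)$ time: the compressed quadtree $\oct$ together with the $\mathcal{Z}$-ordered array of $P$ (Section~\ref{sec:quad}), the persistent range-counting structure of Section~\ref{sec:DS-counting}, and the range-tree structure of Lemma~\ref{lem:constant-coreset}. Their sizes are $O(n\log n)$, $O(n\log^{d-1}n)$, and $O(n\log^d n)$ respectively, so the total size is $O(n\log^d n)$, as claimed.

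For the query algorithm, we may assume $\eps\le 1$ (otherwise run with $\eps$ replaced by $1$). Given $(Q,k,\eps)$, set $\eps':=c\eps$ for a sufficiently small absolute constant $c\in(0,1)$. We invoke the preceding lemma with parameters $Q$, $k$, and $\eps'$ to compute, in $O(k^5\log^9 n + k\log^d n/\eps' + k\log n/\eps'^d)=O(k^5\log^9 n + k\log^d n/\eps + k\log n/\eps^d)$ time, a $(k,\eps')$-coreset $S$ of $P_Q$ for the $k$-median clustering with $|S|=O(k\log n/\eps^d)$ (its points are weighted). We then apply a single-shot $(1+\eps')$-approximation algorithm for the weighted $k$-median problem to $S$; it returns a set $C\in\pset_k$ with $\Phi(S,C)\le(1+\eps')\opt(S)$ in $T_\textnormal{ss}(|S|)=T_\textnormal{ss}(k\log n/\eps^d)$ time. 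We output $C$.

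For correctness, let $C^\ast\in\pset_k$ realize $\opt(P_Q)$. Applying the two inequalities in the definition of a $(k,\eps')$-coreset (with $\Phi(S,\cdot)$ the weighted cost) gives
\[
\Phi(P_Q,C)\;\le\;\frac{1}{1-\eps'}\,\Phi(S,C)\;\le\;\frac{1+\eps'}{1-\eps'}\,\opt(S)\;\le\;\frac{1+\eps'}{1-\eps'}\,\Phi(S,C^\ast)\;\le\;\frac{(1+\eps')^2}{1-\eps'}\,\opt(P_Q).
\]
Choosing $c$ small enough that $(1+\eps')^2/(1-\eps')\le 1+\eps$ for all $\eps\in(0,1]$ (e.g.\ $c=1/5$ works) shows $C$ is an $(1+\eps)$-approximation to the $k$-median clustering of $P_Q$. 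For the running time we add the two phases and absorb the $O(k\log n/\eps^d)$ coreset-construction term into $T_\textnormal{ss}(k\log n/\eps^d)$, since any single-shot algorithm must at least read its input; this yields $O(k^5\log^9 n + k\log^d n/\eps + T_\textnormal{ss}(k\log n/\eps^d))$.

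I do not expect a genuine obstacle in this last step: the theorem is essentially a corollary of the preceding lemma together with the coreset property of Section~\ref{sec:median}. The only points requiring care are the constant-factor rescaling of $\eps$ in the approximation bookkeeping above, and the (standard) fact that the single-shot $k$-median routine must accept a \emph{weighted} point set — exactly the setting already used by Har-Peled and Mazumdar~\cite{Har-Peled-2004}. The substantive work was done earlier, in the coreset construction of Section~\ref{sec:median} (Lemma~\ref{lem:approx-center-to-coreset} and the smaller-coreset step) and in the range-counting data structure of Section~\ref{sec:DS-counting}.
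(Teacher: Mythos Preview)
Your proposal is correct and follows the same approach the paper takes: the theorem is stated immediately after the coreset lemma with no separate proof, so it is meant as a direct corollary—compute the $(k,\eps)$-coreset of $P_Q$ via the preceding lemma, run a weighted single-shot $(1+\eps)$-approximation on it, and absorb the $O(k\log n/\eps^d)$ term into $T_\textnormal{ss}$. Your explicit $\eps'$-rescaling and the chain of inequalities using the coreset definition just spell out what the paper leaves implicit.
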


If we use the algorithm in~\cite{Har-Peled-2004} for computing an
$(1+\eps)$-approximation to the $k$-median clustering, 
$T_\textnormal{ss}(N)=O(N\log^2 W+k^5\log^9 W+ \varrho k^2\log^5 W)$,
where 
$W$ is the total weight of the input points and 
$\varrho =\exp[O((1+\log (1/\eps))/\eps)^{d-1}]$.
Therefore, we have the following corollary.
In the running time of the corollary, the term of $k\log n/\eps^d$
is subsumed by the term of $\varrho k^2\log^5 n$.

\begin{corollary}
  Let $P$ be a set of $n$ points in $d$-dimensional space. There is a
    data structure of size $O(n\log^d n)$ such that given a query
  range $Q\subseteq\mathbb{R}^d$, an integer $k$ with $1\leq k\leq n$,
  and a value $\eps>0$ as a query, an
  $(1+\eps)$-approximation to the $k$-median range-clustering of
  $P\cap Q$ can be computed in
  $O(\varrho k^2\log^5 n+ k^5\log^9 n+ k\log^d n/\eps)$ time, where
  $\varrho =\exp[O((1+\log (1/\eps))/\eps)^{d-1}]$.
\end{corollary}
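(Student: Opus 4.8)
The plan is to derive the corollary directly from the preceding theorem by instantiating the black-box single-shot routine: the theorem already gives a data structure of size $O(n\log^d n)$ and a query algorithm that, for a query $(Q,k,\eps)$, first builds a $(k,\eps)$-coreset of $P\cap Q$ of size $O(k\log n/\eps^d)$ in $O(k^5\log^9 n + k\log^d n/\eps + k\log n/\eps^d)$ time and then runs an $(1+\eps)$-approximation single-shot $k$-median algorithm on that coreset in $T_\textnormal{ss}(k\log n/\eps^d)$ additional time. So all that remains is to pick a concrete single-shot algorithm, bound its running time on the coreset, and simplify the resulting expression; the data-structure size $O(n\log^d n)$ is inherited verbatim (the compressed quadtree and the range-emptiness/counting structures of Section~\ref{sec:DS} contribute only $O(n\log^{d-1} n)$).

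First I would take the single-shot algorithm of Har-Peled and Mazumdar~\cite{Har-Peled-2004}, whose running time on $N$ weighted points of total weight $W$ is $O(N\log^2 W + k^5\log^9 W + \varrho k^2\log^5 W)$ with $\varrho = \exp[O((1+\log(1/\eps))/\eps)^{d-1}]$. In our application the single-shot algorithm is run on the coreset, whose total weight equals $|P\cap Q|\le n$; hence $W\le n$ and every occurrence of $\log W$ may be replaced by $O(\log n)$. Substituting $N = O(k\log n/\eps^d)$ then gives $T_\textnormal{ss}(k\log n/\eps^d) = O(k\log^3 n/\eps^d + k^5\log^9 n + \varrho k^2\log^5 n)$.

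Next I would absorb the low-order terms. The term $k\log^3 n/\eps^d$ (from $N\log^2 W$) and the term $k\log n/\eps^d$ (from the coreset construction in the theorem) are both dominated by $\varrho k^2\log^5 n + k^5\log^9 n$: when $\eps$ is at most a fixed constant, $\varrho$ grows faster than any fixed power of $1/\eps$ — indeed $\varrho \ge \exp[\Omega(1/\eps)]$ since $d\ge 2$ forces $(1/\eps)^{d-1}\ge 1/\eps$ — so $\varrho k^2\log^5 n \ge k\log^3 n/\eps^d$; and when $\eps$ is bounded below by a constant, $1/\eps^d = O(1)$ and the term is $O(k\log^3 n) = O(k^5\log^9 n)$. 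Collecting the surviving contributions $k^5\log^9 n$, $\varrho k^2\log^5 n$, and $k\log^d n/\eps$ yields the stated query time $O(\varrho k^2\log^5 n + k^5\log^9 n + k\log^d n/\eps)$.

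There is no genuine obstacle here: the argument is a mechanical substitution plus the elementary observation that $\varrho$ super-polynomially dominates $1/\eps^d$, which lets the $\log^2 W$ factor and the $1/\eps^d$-terms collapse into the three stated terms. The only point requiring a moment's care is making sure that the coreset handed to the single-shot algorithm is the small one of size $O(k\log n/\eps^d)$ (not an intermediate coreset) and that its total weight is $|P\cap Q|$, hence at most $n$, so that the replacement $\log W = O(\log n)$ is justified.
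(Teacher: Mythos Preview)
Your proposal is correct and follows exactly the paper's own derivation: it instantiates the preceding theorem with the Har-Peled--Mazumdar single-shot algorithm, uses $W\le n$ to replace $\log W$ by $\log n$, and absorbs the $1/\eps^d$-terms into $\varrho k^2\log^5 n$ (the paper makes precisely this last observation in the sentence preceding the corollary). If anything, you are slightly more explicit than the paper in also tracking and absorbing the $N\log^2 W = k\log^3 n/\eps^d$ contribution.
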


\subparagraph{Remark.}  The construction of the coreset for the
$k$-means clustering is similar to the construction of the coreset for
the $k$-median clustering in~\cite{Har-Peled-2004}.  The only
difference is that for the $k$-means clustering
$\means$ is used instead of $\median$ and
$R=\sqrt{\means (P,A)/(c_1n)}$ is used instead of $R=\median(P,A)/(c_1n)$.
Therefore, we can compute a
$(k,\eps)$-coreset for the $k$-means clustering of size
$O(k\log n/\eps^d)$ in
$O(k^5\log^9 n+ k\log^{d} n/\eps + k\log n/\eps^d)$ time. 

\begin{theorem}
	Let $P$ be a set of $n$ points in $d$-dimensional space. There is a
	data structure of size $O(n\log^d n)$ such that given a query
	range $Q\subseteq\mathbb{R}^d$, an integer $k$ with $1\leq k\leq n$,
	and a value $\eps>0$ as a query, an
	$(1+\eps)$-approximation to the $k$-means range-clustering of
	$P\cap Q$ can be computed in $O(k^5\log^9 n+k\log^d /\eps +T_\textnormal{ss}(k\log n/\eps^d))$ time,
	where $T_{\textnormal{ss}}(N)$ denotes the running time of an $(1+\eps)$-approximation
	single-shot algorithm for the $k$-means clustering of $N$ weighted input points.
\end{theorem}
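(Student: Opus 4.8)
The plan is to mirror the proof of the $k$-median theorem, replacing the cost function $\median$ by $\means$ and the radius estimate $R=\median(P_Q,A)/(c_1|P_Q|)$ by $R=\sqrt{\means(P_Q,A)/(c_1|P_Q|)}$ throughout, as indicated in the Remark. Concretely, I would first re-derive the three ingredients used in Section~\ref{sec:median}: (i) a $(k,2)$-coreset $S$ of $P_Q$ of size $O(k\log^d n)$ computable in $O(k\log^d n)$ time from the data structure of Lemma~\ref{lem:constant-coreset} (this lemma is stated for both $k$-median and $k$-means, so nothing changes); (ii) a smaller $(k,4)$-coreset $S'$ of $P_Q$ of size $O(k\log n)$, obtained by running the smaller-coreset machinery of Har-Peled and Mazumdar~\cite{Har-Peled-2004} on $S$ with $\eps=2$, in $O(k\log^d n+k^5\log^9 n)$ time; and (iii) a constant-factor approximate center set $\pset_0$ of size $k$ to $\opt(S)$, obtained by Gonzalez' $k$-center algorithm~\cite{Gon1985} followed by the local search of Arya et al.~\cite{Arya-2004} on $\pset_0$ and $S'$ in $O(k^5\log^3 n)$ time. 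All of these steps already work for $k$-means in~\cite{Har-Peled-2004}.

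Next I would establish the $k$-means analogue of Lemma~\ref{lem:approx-center-to-coreset}: given a constant-factor approximation $A$ to the $k$-means clustering of $P_Q$ with $|A|$ possibly larger than $k$, we can compute a $(k,\eps)$-coreset of $P_Q$ of size $O(|A|\log n/\eps^d)$ in $O(|A|\log^d n/\eps+|A|\log n/\eps^d)$ time. The construction is identical at the algorithmic level: we compute a $2\sqrt d$-approximation $R$ to the relevant average radius via the binary-search procedure of Section~\ref{sec:LB} (using $\sqrt{\means(\cdot)/(c_1|P_Q|)}$ in place of $\median(\cdot)/(c_1|P_Q|)$), build the same two-level unified grid on the standard quadtree with the same cell side lengths $\tR_j$ and $\bar r_j$, compute the compressed cells and remove the nested ones exactly as in Section~\ref{sec:Computing the Compressed Cells in the Grid} using Lemma~\ref{lem:cell-query}, and finish with the range-counting queries of Section~\ref{sec:DS-counting} via Lemma~\ref{lem:counting-query}. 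None of these steps depends on which cost function we are optimizing, so the running-time bound carries over verbatim.

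What actually needs a new argument is the correctness of the coreset, i.e.\ that the weighted point set $S$ produced by snapping $P_Q$ to the unified grid is a $(k,\eps)$-coreset for $\means$. For $k$-median the crucial estimate was $|\dist{p}{Y}-\dist{\bar p}{Y}|\le\dist{p}{\bar p}$, and the total error was controlled by $\sum_{p\in P_Q}\dist{p}{\bar p}\le \eps\,\opt(P_Q)$. For $k$-means I would instead use $|\,\dist{p}{Y}^2-\dist{\bar p}{Y}^2\,|\le \dist{p}{\bar p}\,(2\dist{p}{Y}+\dist{p}{\bar p})$, bound $\sum_p\dist{p}{\bar p}^2$ by $O(\eps^2)\means(P_Q,A)$ and $\sum_p\dist{p}{\bar p}\,\dist{p}{Y}$ by $O(\eps)\sqrt{\means(P_Q,A)}\cdot\sqrt{\sum_p\dist{p}{Y}^2}$ via Cauchy--Schwarz, and then use $\means(P_Q,A)\le c_1\opt(P_Q)\le c_1\means(P_Q,Y)$ to conclude that the total error is at most $\eps\,\means(P_Q,Y)$ after a suitable rescaling of the grid-cell side lengths $\bar r_j$ (this is exactly the $k$-means error analysis in~\cite{Har-Peled-2004}, adapted to our grid). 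The main obstacle is precisely this error analysis: one has to redo the per-cell distance bounds $\dist{p}{\bar p}\le\bar r_j$ for cells coming from $V_{i0}$ and from $V_{ij}$ with $j>0$, together with the approximate-nearest-neighbor property $\dist{a_i}{p}\le 2d\cdot\dist{p}{A}$, and feed them into the squared-distance estimate; the algorithmic and data-structural parts are unchanged.

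Finally, combining the three ingredients with the $k$-means analogue of Lemma~\ref{lem:approx-center-to-coreset} (applied to $A=\pset_0$ of size $k$) yields a $(k,\eps)$-coreset of $P_Q$ for $k$-means of size $O(k\log n/\eps^d)$ in $O(k^5\log^9 n+k\log^d n/\eps+k\log n/\eps^d)$ time, and running any $(1+\eps)$-approximation single-shot $k$-means algorithm on it in $T_\textnormal{ss}(k\log n/\eps^d)$ time gives, by the coreset property, a $(1+\eps)$-approximation to the $k$-means clustering of $P\cap Q$. The data structure is the one of Section~\ref{sec:DS} together with the $O(n\log^d n)$-size structure of Lemma~\ref{lem:constant-coreset}, so its total size is $O(n\log^d n)$, which gives the claimed bounds.
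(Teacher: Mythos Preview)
Your proposal is correct and follows exactly the approach the paper takes: the paper's entire ``proof'' of this theorem is the preceding Remark, which says that the $k$-means construction is identical to the $k$-median one up to replacing $\median$ by $\means$ and $R=\median(P,A)/(c_1n)$ by $R=\sqrt{\means(P,A)/(c_1n)}$, and then states the resulting coreset size and running time. You have supplied considerably more detail than the paper itself---in particular the squared-distance error analysis via Cauchy--Schwarz, which the paper simply delegates to~\cite{Har-Peled-2004}---but the strategy is the same.
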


 Since an
$(1+\eps)$-approximate $k$-means clustering of $N$ weighted points
of total weight $W$ can be computed in
$O(N\log^2 W+k^5 n\log^5 W + k^{k+2} \eps^{(-2d+1)k} \log^{k+1}
W\log^k (1/\eps))$ time~\cite{Har-Peled-2004}, we have the following corollary.

\begin{corollary}
	Let $P$ be a set of $n$ points in $d$-dimensional space. There is a
	data structure of size $O(n\log^d n)$ such that given a query
	range $Q\subseteq\mathbb{R}^d$, an integer $k$ with $1\leq k\leq n$,
	and a value $\eps>0$ as a query, an
	$(1+\eps)$-approximation to the $k$-median range-clustering of
	$P\cap Q$ can be computed in
	$O(k^6\log^6 n/\eps^d + k^{k+2} \eps^{(-2d+1)k} \log^{k+1} n\log^k (1/\eps) +
          k^5\log^9 n+ k\log^{d} n/\eps)$ time.
\end{corollary}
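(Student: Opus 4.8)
The plan is to obtain the corollary as a direct specialization of the preceding theorem together with a concrete single-shot routine, after which only the arithmetic of collecting terms remains. First I would invoke that theorem: it supplies a data structure of size $O(n\log^d n)$ and, for a query $(Q,k,\eps)$, computes a $(k,\eps)$-coreset of $P_Q$ of size $O(k\log n/\eps^d)$ in $O(k^5\log^9 n + k\log^d n/\eps)$ time, after which it suffices to run an $(1+\eps)$-approximation single-shot algorithm on the coreset. Since an $(1+\eps)$-approximate clustering of a $(k,\eps)$-coreset of $P_Q$ is an $(1+O(\eps))$-approximate clustering of $P_Q$, I would, exactly as in the preceding theorems, replace $\eps$ by $\Theta(\eps)$ in the query to turn this into a genuine $(1+\eps)$-approximation; this changes only the dependence on $\eps$ (and not the data-structure size), in the same way it does there, so it does not affect the stated bound.

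Next I would substitute the single-shot running time quoted just above the corollary, namely $O(N\log^2 W + k^5 N\log^5 W + k^{k+2}\eps^{(-2d+1)k}\log^{k+1} W\log^k(1/\eps))$ for $N$ weighted points of total weight $W$, with $N = O(k\log n/\eps^d)$. The one small point that needs to be spelled out is the bound on $W$: in the coreset of Section~\ref{sec:sketch} the weight of each representative is the number of points of $P_Q$ charged to its cell, and since every point of $P_Q$ is charged to exactly one cell of $\mathcal{V}$, these weights sum to $|P_Q| \le n$; hence $W \le n$ and $\log W = O(\log n)$. Substituting, the first term becomes $O(k\log^3 n/\eps^d)$, which is dominated by the next; the second becomes $O(k^6\log^6 n/\eps^d)$; and the third becomes $O(k^{k+2}\eps^{(-2d+1)k}\log^{k+1} n\log^k(1/\eps))$. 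Adding the $O(k^5\log^9 n + k\log^d n/\eps)$ cost of building the coreset per query reproduces exactly the running time in the statement, and the size $O(n\log^d n)$ is inherited from the theorem.

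I do not expect a substantive obstacle. The corollary is essentially a plug-in computation; the only things to be careful about are (i) justifying $W \le n$ from the charging scheme, (ii) checking that the $N\log^2 W$ term is absorbed once $N$ and $W$ are substituted, and (iii) disposing of the $(1+\eps)$ versus $(1+O(\eps))$ gap by the standard constant rescaling of $\eps$. The "hard part", such as it is, is purely bookkeeping: verifying that after substitution none of the lower-order terms survives alongside $k^6\log^6 n/\eps^d$, $k^{k+2}\eps^{(-2d+1)k}\log^{k+1} n\log^k(1/\eps)$, $k^5\log^9 n$, and $k\log^d n/\eps$.
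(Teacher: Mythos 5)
Your proposal is correct and follows exactly the (implicit) argument the paper uses: substitute $N = O(k\log n/\eps^d)$ and $W \le n$ into the quoted single-shot running time and add the coreset-construction cost from the preceding theorem (this corollary is stated with no proof in the paper). Your explicit observation that $W\le n$ because each point of $P_Q$ is charged to exactly one cell of $\mathcal{V}$, and your note that the constant rescaling of $\eps$ absorbs the $(1+\eps)$-vs-$(1+O(\eps))$ gap, are both accurate and left implicit in the paper.
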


\section{\texorpdfstring{$k$}{k}-Center Range-Clustering Queries}
In this section, we are given a set $P$ of $n$ points in
$d$-dimensional Euclidean space for a constant $d\geq 2$.  Our goal is to process $P$
so that $k$-center range-clustering queries can be computed efficiently. A
range-clustering query consists of a rectangle
$Q\subseteq\mathbb{R}^d$, an integer $k$ with $1\leq k\leq n$, and a value
$\eps>0$. We
want to find a set $C\in\pset_k$ with
$\ccenter(P_Q,C) \leq (1+\eps) \opt(P_Q)$ efficiently, where $P_Q=P\cap Q$.  In this
section, we use $\Phi$ to denote $\ccenter$.

\subparagraph{Sketch of the Algorithm by Abrahamsen et al.}
Abrahamsen et al.~\cite{Abrahamsen-2017} present a data structure and
its query algorithm for this problem. They 
construct a compressed quadtree on $P$ as a data structure.
Their query algorithm consists of two phases. In the first phase, they
compute a lower bound $\lb$ of $\opt(P_Q)$,
and then obtain a set of $O(k)$ pairwise interior disjoint cells of $\oct$ with side length
at most $\lb$ whose union contains all points of $P_Q$.  In the second
phase, they subdivide the cells they obtained 
so that the side length of a cell becomes at most $\eps\lb$. Then for
each cell that contains a point of $P\cap Q$, they choose an arbitrary
point in the cell. They use the set of all chosen points as a
$(k,\eps)$-coreset. By applying a single-shot algorithm for
the $k$-center clustering to the coreset,
they can obtain an $(1+\eps)$-approximate $k$-center range-clustering. 
The first phase takes $O(k\log^{d-1}n )$ time, and
the second phase takes $O(k(\log n/\eps)^{d-1})$ time.
In this section, we show that the second phase can be done in $O(k\log^{d-1} n + k/\eps^d)$ time using
the data structure described in Section~\ref{sec:DS}.

\subparagraph{Data Structure.}
We construct a compressed quadtree $\oct$ on $P$.  For each cell $\cell$ of $\oct$,
we store the point of $P\cap \cell$ closest to each facet of $\cell$.
Also, we mark whether or not $\cell$ contains a point of $P$.  
Due to this information, given the node of $\oct$ corresponding to a cell $\cell$,
we can check whether $P_Q\cap\cell$ is empty or not 
in constant time if $\cell$ crosses only one facet of $Q$ or it is contained in $Q$.
Also, we construct $I$-projection range trees on $P$
described in Section~\ref{sec:DS}.
The total space complexity is $O(n\log^{d-1} n)$.

\subparagraph{Query Algorithm.}
We are given a query rectangle $Q$, an integer $k$ and a value $\eps$. Also, assume that
we have the cells obtained from the first phase of the algorithm by Abrahamsen et al.~\cite{Abrahamsen-2017}.
For each cell $\cell$ obtained
from the first phase, we traverse the subtree of $\cell$ of $\oct$ 
towards its leaf nodes until we reach the cells with side length at most $\eps\lb$. 
More precisely, let $\mathcal{G}(\cell)$ be a set of descendants of $\cell$ in $\oct$,
which is initially set to the singleton set containing $\cell$.
We remove the largest cell of $\mathcal{G}(\cell)$ from $\mathcal{G}(\cell)$, and
insert its children to $\mathcal{G}(\cell)$. 
We do this until the largest cell of $\mathcal{G}(\cell)$ has side length at most $\eps\lb$.
Then we remove the cells of $\mathcal{G}(\cell)$ not intersecting $Q$ from $\mathcal{G}(\cell)$. 
The union $\mathcal{G}$ of 
all $\mathcal{G}(\cell)$'s is the set of all cells they obtain in the second phase. 
This takes time linear to the number of cells in $\mathcal{G}$, which is $O(k/\eps^d)$.

Each cell $\cell$ of $\mathcal{G}$ belongs to one of the three types:
$\cell$ is contained in $Q$, $\cell$ contains a corner of $Q$, 
and otherwise.  We want to check whether or not each cell $\cell$ of $\mathcal{G}$ 
contains a point of $P\cap Q$.  
For a cell of the first type, we can check this in $O(1)$ time using the information
stored in $\cell$. There are
$O(k/\eps^d)$ cells of the first type.
For a cell $\cell$ of the
second type, we use the range tree on $P$ and check the emptiness in $O(\log^{d-1} n)$ time.
There are $O(1)$ cells of the third type.
For a cell of the there type, there is an integer $t$ with $0<t<d$ such that $\cell$
intersects no $\tl$-face of $Q$ but intersects a $t$-dimensional face of $Q$.
There are $O(k/\eps^t)$ cells of $\mathcal{G}$ intersecting no $\tl$-face of $Q$ but
intersecting a $t$-dimensional face of $Q$. 
Therefore, the cells of the fourth type can be handled in
$O(k\sum_{t=1}^{d-1} (\log^{d-t-1}n+\log n)/\eps^{t})=O(k\log^{d-2} n/ \eps +
k\log n/\eps^{d-1})$ time in total.

The overall running time is 
$O(k\log^{d-1} n+k/\eps^d + k\log^{d-2} n/ \eps + k\log n/\eps^{d-1})$, which is
$O(k\log^{d-1} n+k/\eps^d+k\log n/\eps^{d-1})$.
Therefore, we have the following lemma.
The paper~\cite{Abrahamsen-2017} deals with a more general cost function which they call a \emph{$(c,f(k))$-regular function}. For definition, see Definition~1 of~\cite{Abrahamsen-2017}.
The method in this section can be directly applied for the $(c,f(k))$-regular function.

\begin{lemma}
  Given any query range $Q$, an integer $k$ with $1\leq k\leq n$,
  and a value $\eps>0$ as a query, we can
  compute a $(k,\eps)$-coreset of $P_Q$ for the $k$-center range-clustering
  of size $O(k/\eps^d)$ in
  $O(k\log^{d-1} n+ k/\eps^d+k\log n/\eps^{d-1})$ time 
  using a data
  structure of size $O(n\log^{d-1} n)$.
\end{lemma}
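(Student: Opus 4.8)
The plan is to follow the two-phase structure already described in the sketch and simply verify that each phase meets the claimed bound, with the bulk of the argument being a careful bookkeeping of how many cells of each geometric type arise. First I would invoke the first-phase result of Abrahamsen et al.~\cite{Abrahamsen-2017}: in $O(k\log^{d-1}n)$ time their algorithm produces a lower bound $\lb$ on $\opt(P_Q)$ together with a set of $O(k)$ pairwise interior-disjoint cells of $\oct$, each of side length at most $\lb$, whose union covers $P_Q$. Since this phase is taken verbatim, I would state it as a black box and spend no effort reproving it. The new content is the second phase, so I would next describe the refinement: for each first-phase cell $\cell$, descend in $\oct$ via the set $\mathcal{G}(\cell)$, repeatedly replacing the largest cell by its children until every cell has side length at most $\eps\lb$, then discard those not meeting $Q$. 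Because each first-phase cell has side length at most $\lb$, the number of cells produced from it is $O(1/\eps^d)$, so $|\mathcal{G}| = O(k/\eps^d)$, and the descent itself runs in time linear in this quantity.

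The heart of the proof is the emptiness test for each cell of $\mathcal{G}$, so I would partition $\mathcal{G}$ by the geometric relationship between a cell $\cell$ and $Q$. If $\cell \subseteq Q$, the precomputed ``contains a point of $P$'' mark answers the query in $O(1)$ time, and there are $O(k/\eps^d)$ such cells, contributing $O(k/\eps^d)$ total. If $\cell$ contains a corner of $Q$, I would fall back to the standard range tree on $P$ for an $O(\log^{d-1}n)$ emptiness query; since the cells of $\mathcal{G}$ are pairwise interior-disjoint and $Q$ has $2^d = O(1)$ corners, this contributes only $O(\log^{d-1}n)$. The remaining cells intersect the boundary of $Q$ without containing a corner, so for each such $\cell$ there is a unique integer $t$ with $0<t<d$ such that $\cell$ meets a $t$-dimensional face of $Q$ but no $\tl$-face; for these I would apply Lemma~\ref{lem:counting-query} (equivalently the emptiness data structure of Section~\ref{sec:DS}), which answers in $O(\log^{d-t-1}n + \log n)$ time. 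The key combinatorial estimate is that the number of cells of $\mathcal{G}$ in the ``type-$t$'' class is $O(k/\eps^t)$: a $t$-dimensional face of $Q$ has bounded $t$-dimensional measure relative to $\lb$ after scaling, and a packing argument on interior-disjoint cells of side length at least (a constant times) $\eps\lb$ meeting that face gives the $O(1/\eps^t)$ bound per first-phase cell, hence $O(k/\eps^t)$ overall. Summing, the type-$t$ cells cost $O\!\bigl(\sum_{t=1}^{d-1} (k/\eps^t)(\log^{d-t-1}n + \log n)\bigr)$, which telescopes to $O(k\log^{d-2}n/\eps + k\log n/\eps^{d-1})$.

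Adding the three contributions — $O(k/\eps^d)$ for interior cells, $O(\log^{d-1}n)$ for corner cells, and $O(k\log^{d-2}n/\eps + k\log n/\eps^{d-1})$ for boundary cells — together with the $O(k\log^{d-1}n)$ from the first phase and the $O(k/\eps^d)$ descent time, yields a total of $O(k\log^{d-1}n + k/\eps^d + k\log^{d-2}n/\eps + k\log n/\eps^{d-1})$. I would then absorb the $k\log^{d-2}n/\eps$ term, since it is dominated by the maximum of $k\log^{d-1}n$ and $k\log n/\eps^{d-1}$ (by the AM–GM-style inequality $\log^{d-2}n/\eps \le \log^{d-1}n + 1/\eps^{d-1}$ up to constants, or simply comparing the two regimes $\eps \gtrless 1/\log n$), arriving at the stated $O(k\log^{d-1}n + k/\eps^d + k\log n/\eps^{d-1})$. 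Finally I would note that the refined cells, one representative point chosen per nonempty cell, form a $(k,\eps)$-coreset of size $O(k/\eps^d)$ for the $k$-center cost by the standard snapping argument (every point of $P_Q$ lies within distance $\sqrt{d}\,\eps\lb \le \sqrt{d}\,\eps\cdot\opt(P_Q)$ of its representative, so any $k$ centers incur an additive error of at most $O(\eps)\cdot\opt(P_Q)$), and that the data structure has size $O(n\log^{d-1}n)$ by the bounds established in Section~\ref{sec:DS} plus the $O(n\log n)$ compressed quadtree. The remark about $(c,f(k))$-regular functions requires no change since the coreset property and the packing bounds are insensitive to the particular regular cost function used.

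The step I expect to be the main obstacle is the packing estimate $m_t = O(k/\eps^t)$ for the number of boundary-incident cells of each type: one must argue that although the refined cells can have \emph{varying} side lengths (they are quadtree cells descended to side length between $\eps\lb/2$ and $\eps\lb$, roughly), the interior-disjointness together with the fact that each such cell has its smallest extent bounded below forces at most $O((\lb/\eps\lb)^t) = O(1/\eps^t)$ of them to stab any fixed $t$-face of $Q$ within a single first-phase cell. Making this rigorous requires being slightly careful that ``intersects a $t$-face but no lower face'' pins the cell near a specific $t$-dimensional flat and that the quadtree structure keeps the cells from being too small; none of this is deep, but it is where the constants and the dimension-dependence actually live, and it is the one place where a hand-wave in the sketch needs to become an honest packing argument.
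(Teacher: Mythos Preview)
Your proposal is correct and follows essentially the same approach as the paper: invoke the first phase of Abrahamsen et al.\ as a black box, refine each of the $O(k)$ cells down to side length $\eps\lb$ via the compressed quadtree, classify the resulting $O(k/\eps^d)$ cells into interior, corner-containing, and ``type-$t$'' boundary cells, apply the $I$-projection range-tree emptiness structure to the last class with the packing bound $m_t = O(k/\eps^t)$, and absorb the $k\log^{d-2}n/\eps$ term into the stated bound. The paper does not spell out the packing argument or the absorption step any more carefully than you do, so your identification of $m_t = O(k/\eps^t)$ as the point needing a moment's care is apt but not a gap.
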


\begin{theorem}
	Let $P$ be a set of $n$ points in $d$-dimensional Euclidean space. There is a
	data structure of size $O(n\log^{d-1} n)$ such that given a query
	range $Q\subseteq\mathbb{R}^d$, an integer $k$ with $1\leq k\leq n$,
	and a value $\eps>0$ as a query, an
	$(1+\eps)$-approximation to the $k$-center range-clustering of
	$P\cap Q$ can be computed in $O(k\log^{d-1}n +k\log n/\eps^{d-1}+T_\textnormal{ss} (k/\eps^d))$ time, where $T_{\textnormal{ss}}(N)$ denotes the running time of an $(1+\eps)$-approximation
	single-shot algorithm for the $k$-center clustering of $N$ input points.
\end{theorem}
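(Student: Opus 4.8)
The plan is to assemble the theorem from the data structure of Section~\ref{sec:DS} and the coreset-construction lemma proved just above, followed by a single-shot $k$-center algorithm. For the data structure I would take the compressed quadtree $\oct$ on $P$, augmented so that every cell $\cell$ records the point of $P\cap\cell$ nearest to each of its facets together with a flag for whether $\cell$ meets $P$, and alongside it the $I$-projection range trees for every nonempty $I\subseteq\{1,\dots,d\}$; by the analysis in Section~\ref{sec:emptiness-DS} this has total size $O(n\log^{d-1}n)$ and is built within the same time bound.

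Given a query $(Q,k,\eps)$ I would first run the first phase of the algorithm of Abrahamsen et al.~\cite{Abrahamsen-2017} on $\oct$ to obtain a lower bound $\lb\le\opt(P_Q)$ and a collection of $O(k)$ pairwise interior-disjoint cells of $\oct$ of side length at most $\lb$ whose union contains $P_Q$, which costs $O(k\log^{d-1}n)$ time. The preceding lemma then produces, in the stated $O(k\log^{d-1}n+k/\eps^d+k\log n/\eps^{d-1})$ time, a $(k,\eps)$-coreset $S$ of $P_Q$ with $|S|=O(k/\eps^d)$ (obtained by subdividing each of these cells down to side length at most $\eps\lb$, discarding cells disjoint from $Q$, and keeping one representative point of each remaining nonempty cell, where emptiness is tested via the $O(1)$ corner cells with the plain range tree and via Lemma~\ref{lem:counting-query} for all other cells). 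Finally I would hand $S$ to an $(1+\eps')$-approximation single-shot $k$-center algorithm, $\eps'$ a fixed constant multiple of $\eps$, obtaining in $T_\textnormal{ss}(|S|)=T_\textnormal{ss}(k/\eps^d)$ time a set $C\in\pset_k$ with $\Phi(S,C)\le(1+\eps')\opt(S)$; since $S$ is a $(k,\eps)$-coreset we have $\opt(S)\le(1+\eps)\opt(P_Q)$ and $\Phi(P_Q,C)\le(1-\eps)^{-1}\Phi(S,C)$, hence $\Phi(P_Q,C)\le(1-\eps)^{-1}(1+\eps')(1+\eps)\opt(P_Q)$, and rescaling the input $\eps$ by a constant (which leaves all asymptotic bounds unchanged) makes the right-hand side at most $(1+\eps)\opt(P_Q)$. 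Adding the three time contributions and absorbing the $k/\eps^d$ term of the coreset step into $T_\textnormal{ss}(k/\eps^d)$ yields the claimed query time, while the size bound is inherited directly from the data structure.

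The routine parts here are the arithmetic of the three running-time terms and the $\eps$-rescaling. The one place that needs care is the black-box use of the first-phase procedure of Abrahamsen et al.: I must make sure its output cells are cells of the \emph{compressed} quadtree and that each of them either lies inside $Q$, contains a corner of $Q$, or meets exactly a $t$-dimensional face of $Q$ for a unique $t$ with $0<t<d$, since this last property is precisely what licenses the subdivision step to apply Lemma~\ref{lem:counting-query}. This is implicit in their construction and in the fact that the cells are pairwise interior-disjoint, so once it is checked the proof is complete.
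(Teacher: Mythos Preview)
Your proposal is correct and follows essentially the same approach as the paper: build the compressed quadtree with facet-nearest points and the $I$-projection range trees of Section~\ref{sec:DS}, invoke the first phase of Abrahamsen et al.\ for $\lb$ and the $O(k)$ initial cells, apply the preceding lemma to obtain a $(k,\eps)$-coreset of size $O(k/\eps^d)$, and finish with a single-shot algorithm. Your observation that the $k/\eps^d$ term is absorbed by $T_\textnormal{ss}(k/\eps^d)$ (since any algorithm must at least read its input) and your $\eps$-rescaling argument to pass from the coreset guarantee to the final approximation are the routine bookkeeping the paper leaves implicit, and your flagged concern about the subdivided cells being genuine cells of $\oct$ is exactly the right thing to check and is handled by traversing the compressed-quadtree subtree as the paper does.
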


The algorithm by Agarwal and Cecillia computes the exact $k$-center clustering of $N$ points
in $d$-dimensional space under any $L_p$-metric in $N^{O(k^{1-1/d})}$ time~\cite{agarwal2002}.

\begin{corollary}
	Let $P$ be a set of $n$ points in $d$-dimensional Euclidean space. There is a
	data structure of size $O(n\log^{d-1} n)$ such that given a query
	range $Q\subseteq\mathbb{R}^d$, an integer $k$ with $1\leq k\leq n$,
	and a value $\eps>0$ as a query, an
	$(1+\eps)$-approximation to the $k$-center range-clustering of
	$P\cap Q$ can be computed in $O(k\log^{d-1}n+k\log n/\eps^{d-1} + (k/\eps^d)^{O(k^{1-1/d})})$ time. 
\end{corollary}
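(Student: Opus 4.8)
The plan is to simply instantiate the preceding theorem with a concrete single-shot algorithm. Recall that the theorem provides a data structure of size $O(n\log^{d-1} n)$ whose query algorithm runs in $O(k\log^{d-1}n +k\log n/\eps^{d-1}+T_\textnormal{ss} (k/\eps^d))$ time, where $T_\textnormal{ss}(N)$ is the running time of any $(1+\eps)$-approximation single-shot algorithm for $k$-center on $N$ points. So it suffices to exhibit such an algorithm with a good bound on $T_\textnormal{ss}$.

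First I would observe that an \emph{exact} $k$-center algorithm is in particular an $(1+\eps)$-approximation algorithm for every $\eps>0$, so we may use the exact algorithm of Agarwal and Procopiuc~\cite{agarwal2002}, which computes the exact $k$-center clustering of a set of $N$ points in $d$-dimensional space under any $L_p$-metric in $N^{O(k^{1-1/d})}$ time. This yields $T_\textnormal{ss}(N)=N^{O(k^{1-1/d})}$. Since the coreset produced by the query algorithm of the theorem has size $N=O(k/\eps^d)$, substituting into $T_\textnormal{ss}$ gives $(k/\eps^d)^{O(k^{1-1/d})}$, and plugging this into the running time from the theorem gives the claimed bound $O(k\log^{d-1}n+k\log n/\eps^{d-1} + (k/\eps^d)^{O(k^{1-1/d})})$. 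The size bound $O(n\log^{d-1} n)$ is inherited directly from the theorem, since we do not modify the data structure at all.

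There is essentially no obstacle here; the only point requiring a word of care is that the single-shot algorithm is applied to the $(k,\eps)$-coreset $S$ rather than to $P_Q$ itself, but this is exactly what the theorem's statement already accounts for (an $(1+\eps)$-approximate $k$-center of $S$ is an $(1+\eps)$-approximate $k$-center of $P_Q$ up to a constant-factor rescaling of $\eps$, absorbed in the coreset construction). One should also note that the $k$-center coreset here is unweighted, so no weighted variant of the single-shot algorithm is needed. Hence the corollary follows immediately. \qed
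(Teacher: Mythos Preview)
Your proposal is correct and matches the paper's own derivation exactly: the paper simply cites the exact $k$-center algorithm of Agarwal and Procopiuc~\cite{agarwal2002} with running time $N^{O(k^{1-1/d})}$ and plugs $N=O(k/\eps^d)$ into the preceding theorem. Your additional remarks about the coreset being unweighted and about exact algorithms being $(1+\eps)$-approximations are valid clarifications that the paper leaves implicit.
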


\section{Approximate Diameter and Radius of a Point Set}
In this section, we are given a set $P$ of $n$ points in $d$-dimensional Euclidean space.
Our goal in this section is to preprocess $P$ so that given any orthogonal range $Q$
and a value $\eps>0$, an approximate diameter (or radius) of $P \cap Q$
can be computed efficiently.
This problem can be considered as a special case of the clustering problem where the
number of clusters is only one.

This problem was studied by Gupta et al.~\cite{Gupta} and Nekrich and
Smid~\cite{Nekrich-2010}.  Gupta et al.~\cite{Gupta} considered this
problem in the plane and presented two data structures.
One requires $O(n\log^2 n)$ size that supports queries
with arbitrary approximation factors $1+\eps$ in $O(\log n/\sqrt{\eps}+\log^3 n)$
query time and the other requires a smaller size $O(n\log n/\sqrt{\delta})$ that
supports only queries with the \textit{fixed} approximation factor $1+\delta$
with $0<\delta<1$ that is used for constructing the data structure.
Later, Nekrich and Smid presented a data structure for this problem in a
higher dimensional space that has size $O(n \log^d n)$ and supports
diameter (or radius) queries with the fixed approximation factor  
$1+\delta$ in $O(\log^{d-1} n/\delta^{d-1})$ query time.
Here, $\delta$  is the approximation
factor given for the construction of their data structure, and
therefore it is fixed for queries to the data structure. That is,
the data structure does not support any queries with approximation
factors other than $(1+\delta)$.

We present data structures and a query algorithm
  for this problem. In the plane, our data structure requires $O(n\log n)$ size and
  supports diameter (or radius) queries with arbitrary approximation
  factors $1+\eps$ in $O(\log n/\eps)$ query time.
  In higher dimension $d$, our data structures not only allow
  queries to have arbitrary approximation factor values $1+\eps$, but
  also improve the size and the query time of the data structure.
  The size is improved by a factor of $\log n$.
  Even when $\eps$ is fixed to $\delta$, the query time is improved
  by a factor of $\min\{1/\delta^{d-1}, \log^{d-2} n\}$.


\subparagraph{$\eps$-Coresets.} Our query algorithm starts by sampling
a set $S$ of points from $P\cap Q$, which we call an $\eps$-coreset of
$P\cap Q$, such that the diameter of $S$ is an
$(1+\eps)$-approximation of the diameter of $P\cap Q$. Let $\apx$ be a
value such that $D \leq \apx \leq c\cdot D$ for a constant $c>1$, where
$D$ is the diameter of $P\cap Q$.
Consider a standard grid of side
length $\eps\apx$ covering $Q$. Assume that we pick an arbitrary point
in each grid cell containing a point of $P\cap Q$. Then the set of all picked points
is an $\eps$-coreset of $P\cap Q$ of
size $O(1/\eps^d)$.  Let $\mathcal{D}$ be the set of all grid cells containing
a point of $P\cap Q$.

We can obtain a smaller $\eps$-coreset as follows.
We first obtain a subset $\mathcal{D}'\subseteq \mathcal{D}$
and choose an arbitrary point
in each grid cell of $\mathcal{D}'$ for a $\eps$-coreset. If a grid cell of $\mathcal{D}$
intersects the boundary of $Q$, we move it from $\mathcal{D}$ to $\mathcal{D}'$.
The remaining cells are contained in $Q$. 
For the remaining cells of $\mathcal{D}$, 
consider the grid cells of $\mathcal{D}$  whose centers have the same
coordinates, except for only one coordinate, say the $i$th coordinate.  We add the grid
cells with the largest $i$th coordinate and smallest $i$th coordinate to
$\mathcal{D}'$. See Figure~\ref{fig:coreset-diam}. 
Then $\mathcal{D}'$ consists of $O(1/\eps^{d-1})$ grid cells.
We choose an arbitrary point of $P$ contained in each grid cell of $\mathcal{D}'$.
The set $S$ of all chosen points is an $\eps$-coreset of $P\cap Q$ of size
$O(1/\eps^{d-1})$. 

\begin{figure}
	\begin{center}
		\includegraphics[width=0.3\textwidth]{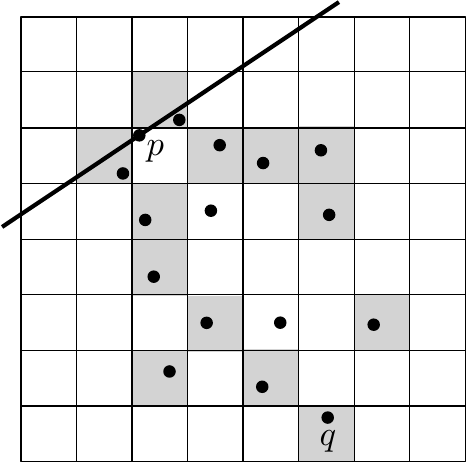}
		\caption{\small The gray cells are chosen as a $\eps$-corset of size $O(1/\eps^{d-1})$. \label{fig:coreset-diam}}
	\end{center}
\end{figure}

\begin{lemma}
	The set $S$ is an $\eps$-coreset of $P\cap Q$.
\end{lemma}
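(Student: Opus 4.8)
Write $D=\mathrm{diam}(P\cap Q)$. Since $S\subseteq P\cap Q$ we have $\mathrm{diam}(S)\le D$ for free, so the content of the lemma is the reverse bound. The plan is to prove $\mathrm{diam}(S)\ge(1-4dc\,\eps)D$; assuming (as we may) that $\eps$ is below a sufficiently small constant depending only on $d$ and $c$ — if the query value of $\eps$ is larger, running the construction with that constant instead only sharpens the approximation — this yields $D\le(1+O(\eps))\,\mathrm{diam}(S)$, and rescaling $\eps$ at the outset gives $D\le(1+\eps)\,\mathrm{diam}(S)$. We may also assume $D>0$. Recall that every grid cell has side length $\eps\apx$, hence diameter at most $\sqrt d\,\eps\apx\le\sqrt d\,c\,\eps D$; fix $p,q\in P\cap Q$ with $d(p,q)=D$ and let $\cell_p,\cell_q$ be the cells containing them, so $\cell_p\ne\cell_q$ (otherwise $D\le\sqrt d\,c\,\eps D$, forcing $D=0$). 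Without loss of generality $p_i\le q_i$, where $i$ is the distinguished axis used in the construction.

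The key step is to replace $p$ and $q$ by points of $S$ pushed ``outward'' along axis $i$. If $\cell_p\in\mathcal D'$, let $p'\in S$ be the point chosen in $\cell_p$. Otherwise $\cell_p$ is contained in $Q$ and, within its \emph{row} (the maximal set of cells of $\mathcal D$ not meeting $\partial Q$ that share all centre-coordinates of $\cell_p$ except the $i$th), it is neither the cell of smallest nor of largest $i$th centre-coordinate; the cell of that row with smallest $i$th centre-coordinate then lies in $\mathcal D'$, and we let $p'\in S$ be the point chosen there. Define $q'\in S$ symmetrically, taking the cell of largest $i$th centre-coordinate in $\cell_q$'s row when $\cell_q\notin\mathcal D'$. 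Checking the cases ($\cell_q$ meets $\partial Q$; $\cell_q$ lies in $Q$ and is $i$-extreme in its row; $\cell_q$ lies in $Q$ and is not $i$-extreme, in which case its row does contain a cell strictly on the large-$i$ side that was placed in $\mathcal D'$) one verifies that $p'$ and $q'$ are well defined, that $|p'_j-p_j|\le\eps\apx$ and $|q'_j-q_j|\le\eps\apx$ for every $j\ne i$, and that $p'_i\le p_i+\eps\apx$ and $q'_i\ge q_i-\eps\apx$; combined with $p_i\le q_i$ this gives $q'_i-p'_i\ge(q_i-p_i)-2\eps\apx$.

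From the identity $(a-b)^2=a^2-2ab+b^2\ge a^2-2ab$ for $a,b\ge 0$, the coordinatewise bounds give $(q'_j-p'_j)^2\ge(q_j-p_j)^2-4\eps\apx\,|q_j-p_j|\ge(q_j-p_j)^2-4\eps\apx D$ for every coordinate $j$: for $j\ne i$ this is immediate from the triangle inequality, and for $j=i$ it follows from $q'_i-p'_i\ge(q_i-p_i)-2\eps\apx\ge-2\eps\apx$ together with $q_i-p_i\ge 0$, distinguishing whether $q'_i-p'_i$ is nonnegative or not. Summing over the $d$ coordinates and using $\apx\le cD$,
\[ d(p',q')^2\ \ge\ D^2-4d\,\eps\apx\,D\ \ge\ (1-4dc\,\eps)\,D^2, \]
so $d(p',q')\ge(1-4dc\,\eps)D$ (using $\sqrt{1-x}\ge 1-x$ on $[0,1]$), and therefore $\mathrm{diam}(S)\ge d(p',q')\ge(1-4dc\,\eps)D$, as wanted. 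The main obstacle is precisely the case analysis in the second paragraph: one has to argue carefully that the replacement cell always ends up in $\mathcal D'$ (handling cells on $\partial Q$, cells interior but $i$-extreme in their row, and cells interior but not $i$-extreme), and that pushing $p$ toward the small-$i$ side and $q$ toward the large-$i$ side is the choice that never shrinks the $i$th-coordinate gap. Everything after that is a routine squared-distance estimate.
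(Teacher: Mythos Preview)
Your proof is correct and follows a genuinely different line from the paper's. The paper argues via the supporting hyperplane at the diameter endpoint $p$: since the open half-space $\mathcal H=\{x:(x-p)\cdot(q-p)<0\}$ contains no point of $P\cap Q$, one can show that for a suitably chosen axis the $i$-extremal cell of $\cell_p$'s row must lie within $O(1)$ grid-widths of $\cell_p$; hence there is a point $p'\in S$ with $d(p',p)=O(\eps\,\apx)$, and similarly a $q'$ near $q$, so $d(p',q')\ge D-O(\eps\,\apx)$ by the triangle inequality. (The paper's write-up is terse here, asserting the replacement lies in the $3^d$ grid cluster of $\cell_p$ and writing the final inequality in the wrong direction, but the intended argument is as above.)

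You bypass the supporting-hyperplane step entirely: you allow $p'$ and $q'$ to slide arbitrarily far along axis $i$, but deliberately push $p$ toward the small-$i$ end of its row and $q$ toward the large-$i$ end of its row, so that (given $p_i\le q_i$) the $i$-gap never shrinks beyond the $2\eps\,\apx$ boundary slack, while the remaining coordinates move by at most $\eps\,\apx$. Your coordinatewise squared-distance estimate then does the rest. This is more elementary and, as a bonus, establishes the lemma even if $\mathcal D'$ is built using extremes along a \emph{single} axis, whereas the paper's argument needs the freedom to pick $i$ depending on $p$ and $q$. One small remark on that point: your phrase ``the distinguished axis used in the construction'' indicates you read $\mathcal D'$ as built from one fixed $i$; the paper actually repeats the construction for every axis (see the query algorithm, which loops ``for each index $i$''), which is what its own proof requires. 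Your argument goes through under either reading.
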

\begin{proof}
	Let $p$ and $q$ be two points of $P\cap Q$ such that $d(p,q)$ is the diameter of $P\cap Q$.
	Consider the hyperplane $h$ orthogonal to $\overline{pq}$ and passing through $p$.
	A (closed) half-space bounded by $h$ contains all points of $P\cap Q$, and the other (open)
	half-space $\mathcal{H}$ bounded by $h$ contains no point of $P\cap Q$. See Figure~\ref{fig:coreset-diam}.
	There is a ray starting from $p$ and parallel to an axis that does not intersect $\mathcal{H}$.
	This means that a grid cell in the grid cluster
	of the grid cell containing $p$ is chosen by the construction,
	and the same holds for $q$. Let $p'$ and $q'$ be the chosen points from the cells containing 
	$p$ and $q$, respectively. 
	The diameter of $S$ is at most $d(p',q')$, and we have
	$d(p',q') \leq d(p,q)+2\eps \apx \leq D+\eps c\cdot D \leq (1+c') D$ for a constant $c'$.
	Therefore, $S$ is an $\eps$-coreset of $P\cap Q$.
\end{proof}

\subparagraph{Data Structure.}  We construct the standard range tree
on $P$, the compressed quadtree on $P$, and the data structure on $P$
described in Section~\ref{sec:DS}.
We also maintain another data structure similar to the one described
in Section~\ref{sec:DS}. For each index $i$, we project all points of
$P$ onto a hyperplane orthogonal to the $i$th axis. Let $P_i$ be the
set of such projected points.  If there is more than one point of $P$
which is projected to the same point, we consider them as distinct points
in $P_i$.  We construct the compressed quadtree $\oct(P_i)$ on $P_i$
that is aligned to $\oct$. Given a cell $\cell$ of $\oct(P_i)$ and a
value $x$, we want to find the point with the largest $i$th coordinate
smaller than $x$ among the points of $P_i\cap \cell$.  To do this, we
do the followings.  For each point $p_i$ in a cell $\cell_i$ of
$\oct(P_i)$, there is a unique point $p$ in $P$ whose projection
onto the hyperplane orthogonal to the $i$th axis is $p_i$.
We assign the $i$th coordinate of $p$ to $p_i$ as its
weight.  Then we compute the 1-dimensional range tree (balanced binary
search tree) on $P_i\cap \cell_i$ with respect to their weights for
each cell $\cell_i$ of $\oct(P_i)$.  As we did in
Section~\ref{sec:DS}, we use a persistent data structure instead of
computing the balanced binary search tree explicitly.  Since we
consider the balanced binary search tree here, we do not need to use
fractional cascading.  Each insertion takes $O(\log n)$ time.
Therefore, this data structure has size $O(n\log^{d-1} n)$ and can
be constructed in $O(n\log^{d-1} n)$ time in total.  For every cell
$\cell$ of $\oct$ and an index $i$, there is a cell $\cell_i$ of
$\oct(P_i)$ such that the projection of $\cell$ onto the hyperplane
orthogonal to the $i$th axis is $\cell_i$. We
make $\cell$ to point to $\cell_i$.

\subparagraph{Query Algorithm.}  We are given an orthogonal range $Q$
and a value $\eps>0$ as a query. We first compute a constant-factor
approximation $\apx$ to the diameter of $P\cap Q$ in $O(\log^{d-1}n)$
time using the standard range tree. To do this, for each facet of $Q$,
we find the points of $P\cap Q$ closest to the facet in
$O(\log^{d-1} n)$ time. That is, we compute the smallest enclosing box $\meb$ of 
$P\cap Q$. The diameter $\apx$ of $\meb$ is a
constant-factor approximation to the diameter of $P\cap Q$.
Assume that $\eps\apx$ is a standard length. Otherwise, we consider
the largest standard length smaller than $\eps\apx$ instead of $\eps\apx$.

Then we compute an $\eps$-coreset of $P\cap Q$ of size $O(1/\eps^{d-1})$ as follows.
Consider the standard grid with side length $\eps\apx$ covering $\meb$.
Here, we do not compute this grid
explicitly because there are $O(1/\eps^d)$ cells in this grid. Instead, we
compute the grid cells intersecting the boundary of $\meb$.  There are
$O(1/\eps^{d-1})$ such cells.  For each such cell $\cell$, we check
whether or not $\cell$ contains a point of $P\cap Q$ using the data
structure in Section~\ref{sec:DS}.  There are $O(1/\eps^t)$
cells intersecting no $\tl$-face of $\meb$ but intersecting
a $t$-dimensional face of $\meb$ for an integer $t$ with
$0< t< d$.
For the cells containing a corner of $\meb$, we use the standard range tree on $P$
in $O(\log^{d-1} n)$ time.
In this way, we can check the emptiness for all cells intersecting
the boundary of $\meb$ in $O(\log^{d-1}n + \log n/\eps^{d-1})$ time in
total.

Now we consider the grid cells fully contained in $\meb$.  Let $Q'$ be
the union (a $d$-dimensional box) of all such grid cells, which can be
computed in constant time by a simple calculation with respect to
the coordinates of $\meb$. 
For each index $i$, consider the standard grid of side length $\eps\apx$ 
such that the union of the cells 
coincides with the projection of $Q'$ onto a hyperplane orthogonal to the $i$th axis.
Let $\mathcal{G}_i$ be the set of all such grid cells.
For each cell $\cell_i$ of $\mathcal{G}_i$, we want to
find the point $p\in P\cap Q$ with largest (and smallest) $i$th coordinate among the points
whose projections are in $\cell_i$. We choose the grid cell in the standard grid with side length $\eps\apx$
containing $p$ as a $\eps$-coreset. To do this, observe that $p$ is in $P\cap Q$
if and only if the projection of $p$ onto the $i$th axis is in $[q_i,q_i']$, where
$[q_i,q_i']$ is the projection of $Q'$ onto the $i$th axis.
Due to the data structure introduced in this section, this can be computed in $O(\log n)$ time.
Since there are
$O(1/\eps^{d-1})$ cells of $\mathcal{G}_i$ and $d$ is a constant, this
can be done in $O(\log n/\eps^{d-1})$ time in total.

Therefore, we can compute an $\eps$-coreset of $P\cap Q$ in
$O(\log^{d-1} n + \log n/\eps^{d-1})$ time in total.  The 
diameter of $N$ points can be computed
in $O(N+1/\eps^{d-1.5})$ time~\cite{Chan-diameter-2002}.
Since the size of the coreset is $O(1/\eps^{d-1})$ in our case, the overall running time is
$O(\log^{d-1} n + \log n/\eps^{d-1})$.

\subparagraph{Remark.}  An approximate radius can be computed in a
similar way. The radius of a point set $P$ is defined as
$\min_{c\in \mathbb{R}^d} \max_{p\in P} d(p,c)$. A constant-factor
approximation to the diameter of $P$ is also a constant-factor
approximation to the radius of $P$. The coreset we considered
for the diameter is also a coreset for the radius. Therefore, we can
compute an $\eps$-coreset of $P\cap Q$ for the radius in
$O(\log^{d-1} n + \log n/\eps^{d-1})$ time. Since the radius of a point set
can be computed in linear time for any fixed
dimension~\cite{Megiddo-83}, we can compute the radius of $P\cap Q$ in
$O(\log^{d-1} n + \log n/\eps^{d-1})$ time in total.

\begin{theorem}
  Given a set $P$ of $n$ points in $\mathbb{R}^d$, we can compute an
  $(1+\eps)$-approximate diameter (or radius) of $P\cap Q$ in
  $O(\log^{d-1} n + \log n/\eps^{d-1})$ time for a query consisting of
  an orthogonal range $Q$ and a value $\eps>0$ using a data structure
  of size $O(n\log^{d-1} n)$.
\end{theorem}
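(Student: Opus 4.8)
The plan is to assemble the components already developed in this section. First I would build, in the preprocessing phase, four structures on $P$: the standard $d$-dimensional range tree, the compressed quadtree $\oct$, the range-emptiness data structure of Section~\ref{sec:emptiness-DS}, and the auxiliary persistent structure described just above, which for each coordinate direction $i$ stores, implicitly over all cells $\cell_i$ of the quadtree $\oct(P_i)$ aligned to $\oct$, a $1$-dimensional balanced search tree on the points of $P_i\cap\cell_i$ keyed by the $i$th coordinate of their preimages in $P$. Each of these four structures has size $O(n\log^{d-1} n)$: the range tree and the emptiness structure by the analyses of Section~\ref{sec:DS}, and the auxiliary structure because each of the $d$ directions contributes $O(n)$ cells to $\oct(P_i)$ and, via the partial-persistence method of Driscoll et al.~\cite{Driscoll-1989}, only $O(1)$ amortized extra space per insertion on top of the $O(n\log n)$ total insertions. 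Hence the total space is $O(n\log^{d-1} n)$, as claimed.

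For a query $(Q,\eps)$ I would first compute a constant-factor approximation $\apx$ to the diameter $D$ of $P\cap Q$ by querying, for each of the $2d$ facets of $Q$, the point of $P\cap Q$ closest to that facet; this yields the smallest enclosing box $\meb$ of $P\cap Q$ in $O(\log^{d-1} n)$ time, and $\apx=\mathrm{diam}(\meb)$ satisfies $D\le\apx\le cD$ for a constant $c$. Rounding $\eps\apx$ down to the nearest standard length, I conceptually overlay the standard grid of that side length on $\meb$ and build the size-$O(1/\eps^{d-1})$ coreset $S$ in two parts, exactly as described above. For the $O(1/\eps^{d-1})$ cells meeting the boundary of $\meb$, I test emptiness of each against $P\cap Q$: the $O(1)$ corner cells via the range tree in $O(\log^{d-1} n)$ time, and a cell meeting no $\tl$-face but meeting a $t$-dimensional face of $\meb$ (there are $O(1/\eps^{t})$ of these) via the structure of Section~\ref{sec:emptiness-DS} in $O(\log^{d-t-1} n+\log n)$ time, for a total of $O(\log^{d-1} n+\log n/\eps^{d-1})$. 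For the interior, let $Q'$ be the union of the grid cells fully inside $\meb$; for each coordinate $i$ and each of the $O(1/\eps^{d-1})$ columns (cells $\cell_i$ of the aligned grid on the projection of $Q'$ along the $i$th axis) I use the auxiliary structure to retrieve, in $O(\log n)$ time, the point of $P$ whose projection lies in $\cell_i$ and whose $i$th coordinate is extreme within the projected range $[q_i,q_i']$ of $Q'$, and I add the grid cell containing it to the coreset; this costs $O(\log n/\eps^{d-1})$ overall. Thus $S$, of size $O(1/\eps^{d-1})$, is computed in $O(\log^{d-1} n+\log n/\eps^{d-1})$ time.

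Correctness then follows from the $\eps$-coreset lemma above: $\mathrm{diam}(S)$ lies in $[(1-O(\eps))D,(1+O(\eps))D]$, so computing $\mathrm{diam}(S)$ exactly --- in $O(|S|+1/\eps^{d-1.5})=O(1/\eps^{d-1})$ time by~\cite{Chan-diameter-2002} (or the radius in $O(|S|)$ time by~\cite{Megiddo-83}) --- and rescaling $\eps$ by a constant gives a $(1+\eps)$-approximation within the stated budget $O(\log^{d-1} n+\log n/\eps^{d-1})$. I expect the main obstacle to be the interior part of the coreset construction: both proving that keeping only the two extreme grid cells in each axis-parallel column still preserves the diameter (the argument via a supporting hyperplane orthogonal to a diametral segment, together with an axis-parallel ray from an endpoint that avoids the empty open halfspace), and, on the algorithmic side, designing the persistent per-cell structure so that the ``extremal-within-a-range'' query runs in $O(\log n)$ rather than $O(\log^2 n)$ time; handling the boundary cells with the $\tl$-face emptiness data structure and treating the $O(1)$ corner cells separately is comparatively routine.
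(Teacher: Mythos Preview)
Your proposal is correct and follows essentially the same approach as the paper: the same four data structures (range tree, compressed quadtree, the $I$-projection emptiness structure of Section~\ref{sec:emptiness-DS}, and the per-direction persistent BSTs over $\oct(P_i)$), the same query pipeline (compute $\meb$ and $\apx$, handle boundary cells via the emptiness structure with the corner/$\tl$-face case split, handle interior columns via extremal-in-range queries on the auxiliary structure), and the same correctness argument via the supporting-hyperplane lemma for the column-extremes coreset. One small slip: Chan's algorithm~\cite{Chan-diameter-2002} computes an \emph{approximate} diameter, not the exact diameter of $S$, but since $S$ is already an $\eps$-coreset this only costs another constant-factor rescaling of $\eps$ and does not affect the bound.
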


\newpage
\bibliography{paper}{}
\end{document}